\definecolor{shadecolor}{rgb}{.8,.8,.9}
\newcommand{\ca}{\mathcal A}
\newcommand{\cg}{\mathcal G}
\newcommand{\ch}{\mathcal H}
\newcommand{\calr}{\mathcal R}
\newcommand{\cu}{\mathcal U}
\newcommand{\ct}{\mathcal T}
\newcommand{\tr}{\mathrm{tr}}
\newcommand{\R}{\mathbb{R}}
\def\nn{\nonumber}
\def\q{{\quad}}
\def\be{\begin{equation}}
\def\ee{\end{equation}}
\def\ba{\begin{eqnarray}}
\def\ea{\end{eqnarray}}
\theoremstyle{plain}
\newtheorem{theorem}{Theorem}
\newtheorem{corollary}[theorem]{Corollary}
\newtheorem{lemma}[theorem]{Lemma}
\newtheorem{example}[theorem]{Example}
\newtheorem{definition}[theorem]{Definition}
\theoremstyle{definition}
\begin{document}
\title{Internal quantum reference frames for finite Abelian groups}
\author{Philipp A.\ H\"ohn}
\email{philipp.hoehn@oist.jp}
\affiliation{Okinawa Institute of Science and Technology Graduate University, Onna, Okinawa 904 0495, Japan}
\affiliation{Department of Physics and Astronomy, University College London, London, United Kingdom}
\author{Marius Krumm}
\email{marius.krumm@univie.ac.at}
\affiliation{Institute for Quantum Optics and Quantum Information, Austrian Academy of Sciences, Boltzmanngasse 3, A-1090 Vienna, Austria}
\affiliation{Vienna Center for Quantum Science and Technology (VCQ), Faculty of Physics, University of Vienna, Vienna, Austria}
\author{Markus P.\ M\"uller}
\email{markus.mueller@oeaw.ac.at}
\affiliation{Institute for Quantum Optics and Quantum Information, Austrian Academy of Sciences, Boltzmanngasse 3, A-1090 Vienna, Austria}
\affiliation{Vienna Center for Quantum Science and Technology (VCQ), Faculty of Physics, University of Vienna, Vienna, Austria}
\affiliation{Perimeter Institute for Theoretical Physics, 31 Caroline Street North, Waterloo ON N2L 2Y5, Canada}

\begin{abstract}
Employing internal quantum systems as reference frames is a crucial concept in quantum gravity, gauge theories and quantum foundations whenever external relata are unavailable. In this work, we give a comprehensive and self-contained treatment of such quantum reference frames (QRFs) for the case when the underlying configuration space is a finite Abelian group, significantly extending our previous work (M.\ Krumm, P.\ A.\ H\"ohn, and M.\ P.\ M\"uller, \emph{Quantum reference frame transformations as symmetries and the paradox of the third particle}, Quantum \textbf{5}, 530 (2021)). The simplicity of this setup admits a fully rigorous quantum information-theoretic analysis, while maintaining sufficient structure for exploring many of the conceptual and structural questions also pertinent to more complicated setups. We exploit this to derive several important structures of constraint quantization with quantum information-theoretic methods and to reveal the relation between different approaches to QRF covariance. In particular, we characterize the ``physical Hilbert space''\,---\,the arena of the ``perspective-neutral'' approach\,---\,as the maximal subspace that admits frame-independent descriptions of purifications of states. We then demonstrate the kinematical equivalence and, surprising, dynamical inequivalence of the ``perspective-neutral'' and the ``alignability'' approach to QRFs. While the former admits unitaries generating transitions between arbitrary subsystem relations, the latter, remarkably, admits \emph{no} such dynamics when requiring symmetry-preservation. We illustrate these findings by example of interacting discrete particles, including how dynamics can be described ``relative to one of the subsystems''.
\end{abstract}

\date{November 28, 2022}

\maketitle

\section{Introduction}
While reference frames are ubiquitous in physics, our archetypical picture of such ``rods and clocks'' is still painted in terms of classical physical objects. However, all physics is ultimately quantum, and thus it is crucial to realize that frames of reference are ultimately quantum systems, too. The consequences of this fundamental insight permeate several areas of physics, including quantum gravity~\cite{Rovellibook,Rovelli1,Rovelli2,Rovelli3,Dittrich1,Dittrich2,Thiemann,Tambornino}, quantum thermodynamics~\cite{LJR,LKJR,Aberg,LostaglioMueller,MarvianSpekkens,Erker,Cwiklinski,Woods1,Woods2}, quantum information theory~\cite{Bartlett,Marvian,Frameness,Modes,Smith2019,ResourceTheoryQRF,Palmer,Smith2016}, and the foundations of quantum physics~\cite{Aharonov1,Aharonov2,Aharonov3,Wigner,Araki,Yanase,Loveridge2017,Loveridge2018,Miyadera,Loveridge2020,HoehnMueller}.

In this work, we are concerned with an \emph{internal} or \emph{structural} notion of quantum reference frames (QRFs) as considered, for example, in Refs.~\cite{Giacomini,Vanrietvelde,Hamette,Vanrietvelde2,Hoehn:2018aqt,Hoehn:2018whn,Hoehn:2019owq,Hoehn:2020epv,Castro,Chataignier,Chataignier2,Chataignier3,Giacomini-spin1,Giacomini-spin2,Angelo,Hoehn:2021wet,Giacomini:2021gei,Ballesteros:2020lgl,Mikusch:2021kro,Baumann:2021ifs,Savi:2020qdl,Guerin:2018fja}. Not only does this notion of QRFs acknowledge that reference frames are quantum, but it also admits the possibility to transform between different QRFs that are, for example, relative to each other in superposition states. This perspective is natural if, as for example in some scenarios in quantum gravity and gauge theories, or the Page-Wootters mechanism~\cite{Page,Giovanetti,Alex1,Alex3}, a distinguished external reference frame may be unavailable, which implies that one has to choose a QRF among the internal quantum subsystems of which there may be many.

\begin{figure}[hbt]
\begin{center}
\includegraphics[width=.5\textwidth]{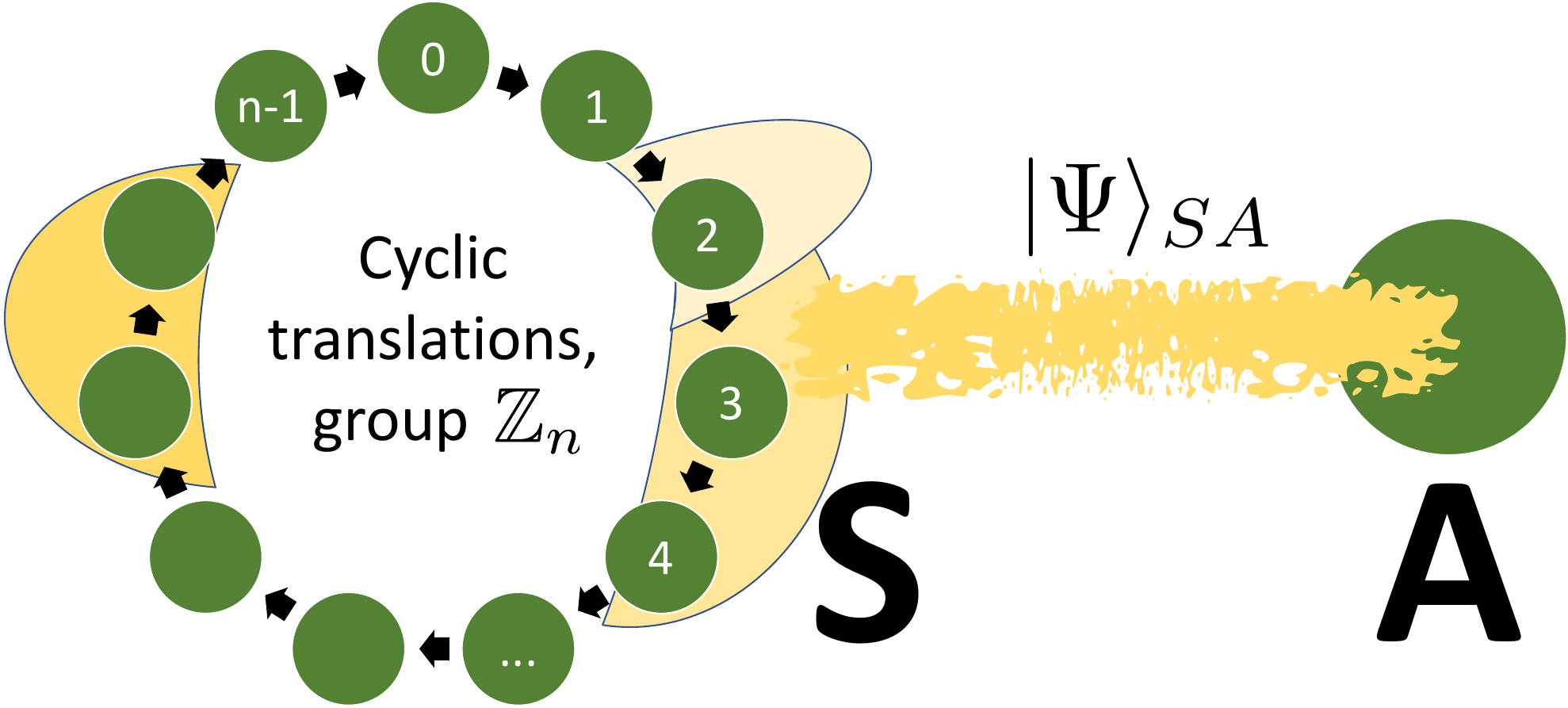}
\caption{As in Ref.~\cite{QRF1}, we consider a finite Abelian group $\cg$ (in this picture $\cg=\mathbb{Z}_n$) that plays a double role: first, as a classical configuration space that serves as a distinguished basis of a one-particle Hilbert space $\ch=\mathbb{C}(\cg)$: second, as a group of ``translations'' that acts on this space. We consider $N$ distinguishable particles in some quantum state $\rho_S\in\mathcal{L}(\ch^{\otimes N})$, potentially entangled with an unspecified purifying ancillary system $A$. Based on simple postulates, we study the resulting quantum symmetries and properties of such ``$\cg$-systems'' $S$.}
\label{fig_cyclic}
\end{center}
\end{figure}

The resulting genuine quantum symmetries~\cite{QRF1} lead to novel phenomena like the frame-dependence of superposition and entanglement~\cite{Giacomini,QRF1,Vanrietvelde,Hoehn:2019owq,Castro,Hamette,Hoehn:2021wet} or the very notion of subsystem~\cite{Hoehn:2021wet}. One specific goal, realized e.g.\ for relativistic spin~\cite{Giacomini-spin1}, is that this larger class of QRF transformations allows us to simplify the physical description: if we do not know how to handle the quantum case, let us perform a QRF transformation that renders some of the subsystems classical. This is arguably the main idea that underlies recent attempts to formulate a quantum version of Einstein's equivalence principle~\cite{Hardy1,Hardy2,GiacominiBrukner}.

Given the ambitious hopes associated to this notion of QRFs, and also the technical subtleties involved in the case of continuous symmetries (see e.g.\ Refs.~\cite{Bartlett,Loveridge2018}), it is crucial to study QRFs in special cases that admit a technically simple, mathematically rigorous treatment which allows us to expose structural and conceptual aspects in complete clarity. The case of \emph{finite Abelian groups} is arguably the best candidate for this: its finite-dimensionality admits a rigorous quantum information-theoretic treatment, and explicit diagonalization is possible via the discrete Fourier transform. However, this special case is already complex enough to encompass interesting physical scenarios like the ``paradox of the third particle''~\cite{Angelo,QRF1}. Moreover, as we will see below, it is rich enough to demonstrate important structures of constraint quantization which appears in canonical approaches to quantum gravity and gauge theories.

In this paper, we give a comprehensive treatment of internal QRFs for finite Abelian groups. This extends and generalizes results of our previous publication~\cite{QRF1}, but adds many important novel results and insights. In particular, and in contrast to Ref.~\cite{QRF1}, we explicitly demonstrate kinematical equivalence of the ``perspective-neutral'' approach to QRFs~\cite{Vanrietvelde,Vanrietvelde2,Hoehn:2018aqt,Hoehn:2018whn,Hoehn:2019owq,Hoehn:2020epv,Hoehn:2021wet,Giacomini:2021gei}, which invokes constraint quantization, with the one based on ``aligning states to the QRF''~\cite{Hamette,QRF1} in this context, which in those references, in turn, was shown to be equivalent to the QRF approach developed in Refs.~\cite{Giacomini,Giacomini-spin1,Giacomini-spin2,Ballesteros:2020lgl,Mikusch:2021kro}. Moreover, we classify the symmetry-preserving dynamics of systems subject to such quantum symmetries, and use this to demonstrate the dynamical inequivalence of these two approaches. Despite relying on some of the results of Ref.~\cite{QRF1}, the presentation is fairly self-contained. Readers may want to consult Ref.~\cite{QRF1} occasionally for the proofs of some statements or for more details about the paradox of the third particle.\\

\textbf{This paper is organized as follows.} In Section~\ref{SecKinematics}, we recapitulate, consolidate and generalize our results of Ref.~\cite{QRF1}: we introduce the notion of a ``$\cg$-system'' as a quantum system that holds a finite Abelian group as its configuration space. We show how QRF transformations appear as the symmetries of such systems, analyze the structure of those transformations, and determine two classes of symmetric observables: ``invariant'' and ``relational'' observables, related to incoherent resp.\ coherent group averaging. We show that these encode the subalgebras of observables that are measurable without access to an external reference frame, where the relational observables are further characterized as those measurable on the maximal subspace of states of the $\cg$-system that can be purified with an ancilla in an external-frame-independent manner. We introduce and contrast two different ways to describe relational quantum physics: via \emph{alignable states} and via \emph{relational states}, where the latter are the states underlying the ``perspective-neutral'' approach. We also summarize the notion of ``relational trace'' introduced in Ref.~\cite{QRF1}, which generalizes the partial trace to $\cg$-systems.

In Section~\ref{SecRelAl}, we prove that relational and alignable states are kinematically equivalent. In Section~\ref{Section:Time}, we classify the possible dynamics that respects a $\cg$-system's quantum symmetries. We show that relational states admit a much larger class of such symmetry-preserving dynamics than alignable states, and illustrate this by example of $N$ interacting particles on the discrete circle. Specifically, alignable states do not admit \emph{any} symmetry preserving unitary dynamics that can generate superpositions of subsystem relations. Finally, we conclude in Section~\ref{SecConclusions}.

\section{Finite Abelian groups: kinematics}
\label{SecKinematics}
We consider physical systems that can be interpreted as ``quantizations of symmetric classical systems'' in a certain sense --- in fact, we will see later that our scenario and its mathematical structures are closely related to the quantization of gauge theories. Our starting point is a finite Abelian group $\mathcal{G}$ that we interpret as a classical configuration space. For example, the cyclic group $\mathbb{Z}_n=\{0,1,2,\ldots,n-1\}$ with addition modulo $n$ can be interpreted as a set of $n$ equiangular points on a circle, see Figure~\ref{fig_cyclic}. Imagine $N$ distinguishable particles that can be placed somewhere on this discrete circle, where we allow that different particles may occupy the same place. A configuration of the $N$ particles is then given by an $N$-tuple $\mathbf{g}=(g_1,g_2,\ldots,g_N)$ with $g_i\in\mathcal{G}$.

We will now postulate that ``physics is translation-invariant'' in a natural, but very specific sense. Think of an observer (say, Alice) that is presented with a given $N$-particle configuration. Under translation-invariance, the observer will not be able to determine which of the positions represents the origin, $0$. Alice may arbitrarily declare one of the positions as $0$ and describe the configuration by $\mathbf{g}$; but then another observer (say, Bob) may choose a different origin and describe the configuration instead by $\mathbf{g'}=g\mathbf{g}:=(gg_1,\ldots,gg_N)$, where $g\in\mathcal{G}$ denotes the group element that translates Bob's choice of origin to Alice's.

While Alice and Bob may disagree on the location of the origin (and other facts), there are also some structures on which they agree. In fact, \emph{postulating the structures on which they agree will serve as our definition of the physical symmetry}. We will make the following two natural assumptions:
\begin{itemize}
	\item[(i)] For any given \textbf{single configuration}, Alice and Bob agree on the pairwise \textbf{relations} between the particles. That is, if Alice chooses description $\mathbf{g}$, then Bob must choose some description $\mathbf{g}'$, where $g_i^{-1}g_j={g'}_i^{-1}g'_j$ for all $i,j\in\{1,\ldots,N\}$.
\end{itemize}
Let us introduce some notation to formalize this assumption which will be useful in the following. We write
\[
   (g,\mathbf{h}g)=(g,h_1 g,h_2 g,\ldots, h_{N-1}g)\qquad(g\in\mathcal{G},\mathbf{h}\in\mathcal{G}^{N-1}),
\]
and note that $\mathbf{h}$ encodes all pairwise relations by encoding all relations to the first particle. Namely, $g_i=h_{i-1}g_1$ (with the convention $h_0=e$), and thus $g_j=h_{j-1}h_{i-1}^{-1}g_i$. Then the assumption says that if Alice chooses description $(g,\mathbf{h}g)$, then Bob must choose $(g',\mathbf{h}g')$ for some $g'\in\mathcal{G}$.

Furthermore, we assume:
\begin{itemize}
	\item[(ii)] For any given \textbf{pair of} two possible \textbf{configurations}, Alice and Bob agree on whether one is a global translation of the other (and which one) or not.
\end{itemize}
Think of two possible preparation procedures that result in configurations which Alice would describe by $\mathbf{g}$ and $\mathbf{j}$, respectively. Bob will in general choose two different descriptions, $\mathbf{g}'$ and $\mathbf{j}'$. However, if $\mathbf{j}$ is a global translation of $\mathbf{g}$, i.e.\ if there is some $g\in\mathcal{G}$ such that
\[
   \mathbf{j}=g\mathbf{g},\mbox{ i.e.\ }(j_1,j_2,\ldots,j_N)=(gg_1,gg_2,\ldots,gg_N),
\]
then we assume that Bob agrees on this fact: his two descriptions are then \emph{also} related by
\[
   \mathbf{j}'=g\mathbf{g}',\mbox{ i.e.\ }(j'_1,j'_2,\ldots,j'_N)=(gg'_1,gg'_2,\ldots,gg'_N).
\]
Since we are not assuming more than this, our assumptions encode a specific version of \emph{background-independence}, which we now illustrate by an example. Suppose that $N=3$ (and $\mathcal{G}=\mathbb{Z}_n$ for $n\geq 8$), and Alice describes two possible configurations as
\[
   \mathbf{g}=(0,1,2),\quad \mathbf{j}=(0,1,3).
\]
One possible choice of descriptions for Bob, consistent with our assumptions, is given by
\[
   \mathbf{g}'=(0,1,2),\quad \mathbf{j}'=(4,5,7).
\]
At first sight, this is counterintuitive. Intuitively, one would perhaps think of a ``choice of description'' as concretely happening in the following way: Alice is presented with a certain $N$-particle configuration, and she makes a choice to describe it by $\mathbf{g}$. This, in particular, implies a certain choice of origin, $e=0$. Alice now plants a little flag at the $(e=0)$-site of the discrete circle and fixes the physical origin once and for all. Then, in the second run of the experiment, when she is presented another configuration, her little flag will break translation-invariance and allow her to give a unique description $\mathbf{j}$ of the configuration.

But if this was the case, then the above choices of descriptions would be inconsistent: $\mathbf{g}=\mathbf{g}'$ would then imply $\mathbf{j}=\mathbf{j}'$. Hence, our two assumptions (together with our choice of \emph{not postulating any additional assumptions}) imply that ``planting a flag'' in this form is impossible. In other words, \emph{there is no background structure that allow us to identify configurations across different runs of the experiment}. We can also interpret $\mathbf{g}$ and $\mathbf{j}$ as different \emph{modalities}, i.e.\ as different possibilities of what could, in any single implementation, actually be the case. Then our scenario is constructed such that the different modalities (``possible worlds'') can be translated independently from one another. In the quantum case, this will then also apply to different branches of the wave function.

Let us turn to the quantum case. Now the $N$-particle configurations correspond to orthonormal vectors of a Hilbert space $\mathcal{H}^{\otimes N}$, where $\mathcal{H}={\rm span}\{|g\rangle\,\,|\,\,g\in\mathcal{G}\}=\mathbb{C}(\mathcal{G})$ is the Hilbert space of complex functions on the group. Alice and Bob will now describe \emph{quantum states} instead of classical configurations, and their different choices of description are related by a unitary $U$. Since we think of this as a quantization of the classical system, we make a third and final assumption:
\begin{itemize}
	\item[(iii)] Alice and Bob agree on the \textbf{set of basis vectors} $\{|\mathbf{g}\rangle\,\,|\,\,\mathbf{g}\in\mathcal{G}^N\}$, but not in general on the labelling of these basis vectors, except to the extent described by assumptions (i) and (ii).
\end{itemize}

\subsection{$\mathcal{G}$-systems and their symmetries}
The three assumptions above lead us to the following definition. We use the notation $U_g$ for the $g$-translation, i.e.\ $U_g|j\rangle=|gj\rangle$ for $j\in\mathcal{G}$.
\begin{definition}[$\mathcal{G}$-system]
Given some finite Abelian group $\mathcal{G}\neq\{\mathbf{1}\}$, a \emph{$\mathcal{G}$-system} is	 a quantum system described by a (kinematical) Hilbert space $\mathcal{H}^ {\otimes N}$, where $\mathcal{H}=\mathbb{C}(\mathcal{G})$. It carries a distinguished orthonormal basis 
\[
   \{|\mathbf{g}\rangle=|g_1,g_2,\ldots,g_N\rangle\,\,|\,\,g_i\in\mathcal{G}\}.
\]
Every unitary transformation $U$ with the following properties is a \emph{symmetry} of the $\mathcal{G}$-system:
\begin{itemize}
	\item[1.] $U$ maps classical configurations to classical configurations, i.e.\ $U|\mathbf{g}\rangle=|\mathbf{g}'\rangle$.
	\item[2.] On the classical configurations, $U$ preserves all pairwise relations, i.e.\ $U|g,\mathbf{h}g\rangle=|g',\mathbf{h}g'\rangle$.
	\item[3.] If some classical configuration is a global $g$-translation of another one, then $U$ preserves this fact, i.e.
	\[
	   |\mathbf{g}\rangle=U_g^{\otimes N}|\mathbf{j}\rangle \Rightarrow U|\mathbf{g}\rangle=U_g^{\otimes N}\left(U|\mathbf{j}\rangle\right).
	\]
\end{itemize}
The group of symmetries $U$ is denoted $\mathcal{U}_{\rm sym}$.
\end{definition}
When we say that unitaries $U\in\mathcal{U}_{\rm sym}$ are \emph{symmetries} of the $\mathcal{G}$-system, then this amounts to the claim that any quantum state $\rho$ and its transformed version $U\rho U^\dagger$ are physically indistinguishable if the $\mathcal{G}$-system is considered in isolation, i.e.\ without any external relatum. We can certainly imagine that we modify the physical scenario in a way that breaks the symmetry --- for example, we can add another quantum system close to the origin $e$ of the $\mathcal{G}$-system and make it interact with it, such that the strength of interaction is larger for $\mathcal{G}$-system particles that are closer to the origin. Such a system would then serve as an external reference frame. In fact, we may argue that the very possibility of doing something like this is crucial: writing down a definition that distinguishes $|\mathbf{g}\rangle$ from $|g\mathbf{g}\rangle$, for example, assumes that there is in principle a matter of fact in the world that motivates this distinction to begin with.

The symmetry group can be described as follows:
\begin{lemma}[Ref.~\cite{QRF1}, Lemma 5]
The symmetry group of a $\mathcal{G}$-system is
\begin{equation}
   \mathcal{U}_{\rm sym}=\left\{\left. \bigoplus_{\mathbf{h}\in\mathcal{G}^{N-1}} U_{g(\mathbf{h})}^{\otimes N}\,\,\right|\,\, g(\mathbf{h})\in\mathcal{G}\right\}.
   \label{eqUSym}
\end{equation}
That is, the symmetries are the \emph{relation-conditional translations}: depending on the pairwise relations $\mathbf{h}$, a global translation by some $g(\mathbf{h})$ is implemented.
\end{lemma}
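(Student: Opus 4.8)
The plan is to reindex the distinguished basis exactly as in the excerpt and then prove the two inclusions. I would write each basis vector $|\mathbf{g}\rangle$ as $|g,\mathbf{h}g\rangle$, with $g=g_1\in\mathcal{G}$ and $\mathbf{h}\in\mathcal{G}^{N-1}$ the pairwise relations (so $h_{i-1}=g_i g_1^{-1}$ with $h_0=e$). Since $\mathbf{g}\mapsto(g,\mathbf{h})$ is a bijection $\mathcal{G}^N\to\mathcal{G}\times\mathcal{G}^{N-1}$, this yields an orthogonal decomposition $\mathcal{H}^{\otimes N}=\bigoplus_{\mathbf{h}\in\mathcal{G}^{N-1}}\mathcal{H}_{\mathbf{h}}$, where $\mathcal{H}_{\mathbf{h}}=\mathrm{span}\{|g,\mathbf{h}g\rangle\mid g\in\mathcal{G}\}$ has dimension $|\mathcal{G}|$ and $U_g^{\otimes N}$ acts on it by $|a,\mathbf{h}a\rangle\mapsto|ga,\mathbf{h}ga\rangle$; in particular $U_g^{\otimes N}$ preserves each $\mathcal{H}_{\mathbf{h}}$, so a ``relation-conditional translation'' $\bigoplus_{\mathbf{h}}U_{g(\mathbf{h})}^{\otimes N}$ is well defined as the block-diagonal operator restricting to $U_{g(\mathbf{h})}^{\otimes N}$ on $\mathcal{H}_{\mathbf{h}}$.

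For the inclusion ``$\supseteq$'', I would fix an arbitrary assignment $\mathbf{h}\mapsto g(\mathbf{h})\in\mathcal{G}$, put $U=\bigoplus_{\mathbf{h}}U_{g(\mathbf{h})}^{\otimes N}$ (manifestly unitary), and check the three defining conditions directly. Condition 1 holds since $U|g,\mathbf{h}g\rangle=|g(\mathbf{h})g,\mathbf{h}g(\mathbf{h})g\rangle$ is again a distinguished basis vector; condition 2 holds since this vector lies in the same sector $\mathcal{H}_{\mathbf{h}}$; condition 3 holds because, writing $\mathbf{j}=(a,\mathbf{h}a)$ and (from $|\mathbf{g}\rangle=U_g^{\otimes N}|\mathbf{j}\rangle$) $\mathbf{g}=(ga,\mathbf{h}ga)$, both $U|\mathbf{g}\rangle$ and $U_g^{\otimes N}\big(U|\mathbf{j}\rangle\big)$ come out as $|g(\mathbf{h})ga,\mathbf{h}g(\mathbf{h})ga\rangle$, where commutativity of $\mathcal{G}$ is used to reorder $g$ and $g(\mathbf{h})$. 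Finally, $\mathcal{U}_{\rm sym}$ as displayed is a group because $\big(\bigoplus_{\mathbf{h}}U_{g(\mathbf{h})}^{\otimes N}\big)\big(\bigoplus_{\mathbf{h}}U_{g'(\mathbf{h})}^{\otimes N}\big)=\bigoplus_{\mathbf{h}}U_{g(\mathbf{h})g'(\mathbf{h})}^{\otimes N}$, with evident identity and inverses.

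For the inclusion ``$\subseteq$'', let $U$ be a symmetry. Condition 1 says $U$ permutes the distinguished basis, and condition 2 says this permutation fixes the relation label $\mathbf{h}$, so $U$ leaves each $\mathcal{H}_{\mathbf{h}}$ invariant and acts there as a permutation $\pi_{\mathbf{h}}$ of $\mathcal{G}$, i.e.\ $U|a,\mathbf{h}a\rangle=|\pi_{\mathbf{h}}(a),\mathbf{h}\pi_{\mathbf{h}}(a)\rangle$. I would then invoke condition 3 with $\mathbf{j}=(e,\mathbf{h}e)$ and arbitrary $g\in\mathcal{G}$; since $|\mathbf{g}\rangle=U_g^{\otimes N}|\mathbf{j}\rangle$ forces $\mathbf{g}=(g,\mathbf{h}g)$, condition 3 gives
\[
   |\pi_{\mathbf{h}}(g),\mathbf{h}\pi_{\mathbf{h}}(g)\rangle=U|\mathbf{g}\rangle=U_g^{\otimes N}\big(U|\mathbf{j}\rangle\big)=|g\,\pi_{\mathbf{h}}(e),\ \mathbf{h}\,g\,\pi_{\mathbf{h}}(e)\rangle ,
\]
hence $\pi_{\mathbf{h}}(g)=g\,g(\mathbf{h})=g(\mathbf{h})\,g$ with $g(\mathbf{h}):=\pi_{\mathbf{h}}(e)$, again using that $\mathcal{G}$ is Abelian. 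Therefore $U|a,\mathbf{h}a\rangle=|g(\mathbf{h})a,\mathbf{h}g(\mathbf{h})a\rangle=U_{g(\mathbf{h})}^{\otimes N}|a,\mathbf{h}a\rangle$ for all $a$ and $\mathbf{h}$, i.e.\ $U=\bigoplus_{\mathbf{h}}U_{g(\mathbf{h})}^{\otimes N}$, as claimed.

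Every step is an elementary verification; the only things that need care are the bookkeeping for the reindexing $\mathbf{g}\leftrightarrow(g,\mathbf{h})$ and the observation that conditions 1--2 already confine $U$ to a permutation within each $\mathbf{h}$-sector, after which condition 3 collapses that permutation to a single translation $g(\mathbf{h})$. I do not anticipate a genuine obstacle; the one structural fact that is really used is that $\mathcal{G}$ is Abelian, so that the translations involved all commute.
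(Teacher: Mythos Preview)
Your proof is correct. The paper does not actually prove this lemma; it is quoted as Lemma~5 of Ref.~\cite{QRF1} and stated without proof here, so there is no in-paper argument to compare against. Your direct verification --- reindexing by $(g,\mathbf{h})$, checking that conditions 1--2 force $U$ to be block-diagonal with each block a permutation of $\mathcal{G}$, and then using condition 3 with $\mathbf{j}=(e,\mathbf{h}e)$ to pin that permutation down to a single translation $g(\mathbf{h})$ --- is exactly the natural elementary argument one would expect, and is presumably close to what appears in Ref.~\cite{QRF1}.
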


This symmetry group $\mathcal{U}_{\rm sym}$ is much larger than the group $\cg$ which has originally defined our Hilbert space. $\cu_{\rm sym}$ is the discrete analog of gauge transformations that depend on gauge-invariant field configurations in a field theory, while $\cg$ is the discrete analog of field-independent gauge transformations. Note that every choice of pairwise relations $\mathbf{h}$ spans a subspace $\mathcal{H}_{\mathbf{h}}:={\rm span}\{|g,\mathbf{h}g\rangle\,\,|\,\,g\in\mathcal{G}\}$, and the total Hilbert space is $\mathcal{H}^{\otimes N}=\bigoplus_{\mathbf{h}\in\mathcal{G}^{N-1}}\mathcal{H}_{\mathbf{h}}$. In Eq.~(\ref{eqUSym}), each of the $U_{g(\mathbf{h})}^{\otimes N}$ acts, by definition, only on the subspace $\ch_{\mathbf{h}}$.

We can already see at this point that these symmetry transformations implement versions of the QRF transformations that have been described in earlier publications, e.g.\ Refs.~\cite{Giacomini,Hamette}:
\begin{example}
Consider $N=3$ particles (labelled $A$, $B$ and $C$) on a discrete circle of $n\geq 4$ points, i.e.\ on a $\mathbb{Z}_n$-system. Set
\[
   g(1,2):=-2\equiv n-2,\qquad g(1,3):=-3 \equiv n-3,
\]
and define $g(\mathbf{h})$ arbitrarily for all other choices of $\mathbf{h}\in\mathbb{Z}_n^2$. Then $U:=\bigoplus_{\mathbf{h}}U_{g(\mathbf{h})}^{\otimes N}$ is a symmetry transformation. Consider the state
\[
   |\psi\rangle:=|0\rangle_A\otimes |1\rangle_B\otimes\frac{|2\rangle_C+|3\rangle_C}{\sqrt{2}}.
\]
Applying $U$ to it, we obtain
\[
   |\psi'\rangle:=U|\psi\rangle= \frac{|-2\rangle_A\otimes|-1\rangle_B+|-3\rangle_A\otimes|-2\rangle_B}{\sqrt{2}}\otimes |0\rangle_C.
\]
The quantum states $|\psi\rangle$ and $|\psi'\rangle$ are related by a symmetry transformation; hence they are physically indistinguishable without access to any external frame of reference. We can see $|\psi\rangle$ and $|\psi'\rangle$ as two possible descriptions of the same physical situation: $|\psi\rangle$ can be viewed as the description of the quantum state relative to $A$, and $|\psi'\rangle$ relative to $C$ (since these particles are located in the origin).

We also see that $|\psi\rangle$ is a product state while $|\psi'\rangle$ is entangled --- hence, at least formally, notions of correlation and entanglement become dependent on the choice of description.
\end{example}

\subsection{Invariant and relational observables}
Which observables can we measure internally on an isolated $\mathcal{G}$-system without access to an external symmetry-breaking degree of freedom? These are the observables that are invariant under all symmetry transformations:
\begin{definition}[Ref.~\cite{QRF1}, Definition 7]
We define the \emph{invariant subalgebra} as
\[
   \mathcal{A}_{\rm inv}:=\{A\in\mathcal{L}(\mathcal{H}^{\otimes N})\,\,|\,\, [U,A]=0\mbox{ for all }U\in\mathcal{U}_{\rm sym}\},
\]
where $\mathcal{L}(\ch^{\otimes N})$ denotes the linear maps (operators) on $\ch^{\otimes N}$. These are the operators that are invariant under all symmetry transformations $A\mapsto UAU^\dagger$. The self-adjoint elements $A=A^\dagger\in\mathcal{A}_{\rm inv}$ are called \emph{invariant observables}.
\end{definition}
To write down $\mathcal{A}_{\rm inv}$ explicitly, consider the \emph{characters}~\cite{Conrad} of the finite Abelian group $\mathcal{G}$. A character is a homomorphism $\chi:\mathcal{G}\to S^1$, where $S^1$ is the complex unit circle, i.e.\ $\chi(gh)=\chi(g)\chi(h)$ for all $g,h\in\mathcal{G}$. These are exactly the irreducible (and thus automatically one-dimensional) representations of $\cg$; according to different conventions of nomenclature, they are called \emph{irreducible characters}~\cite{Simon}. We denote the set of all characters of $\mathcal{G}$ by $\mathcal{\hat G}$, and we have $|\mathcal{G}|=|\mathcal{\hat G}|$. The well-known orthogonality relation $\sum_{g\in\mathcal{G}} \overline{\chi(g)}\chi'(g)=|\cg|\,\delta_{\chi,\chi'}$ implies that the states
\[
   |\mathbf{h};\chi\rangle:=\frac 1 {\sqrt{|\mathcal{G}|}} \sum_{g\in\mathcal{G}} \chi(g^{-1})|g,\mathbf{h}g\rangle\qquad(\mathbf{h}\in\mathcal{G}^{N-1},\chi\in\mathcal{\hat G})
\]
are an orthonormal basis of $\mathcal{H}^{\otimes N}$. They are eigenstates of the global translations:
\[
   U_g^{\otimes N}|\mathbf{h};\chi\rangle=\chi(g)|\mathbf{h};\chi\rangle \mbox{ for all }g\in\mathcal{G}.
\]
We can think of the $|\mathbf{h};\chi\rangle$ as discrete analogues of ``total momentum eigenstates''. This becomes most transparent in the special case of $\mathcal{G}=\mathbb{Z}_n$:
\begin{example}[Cyclic group; Ref.~\cite{QRF1}, Example 8]
\label{ExCyclic}
For $\mathcal{G}=\mathbb{Z}_n$, i.e.\ the discrete circle of Figure~\ref{fig_cyclic}, we have
\[
   \chi_k(g)=e^{\frac{2\pi i k g}{n}}\qquad(k\in\{0,1,\ldots,n-1\}),
\]
and the eigenstates are
\[
   |\mathbf{h};\chi_k\rangle=\frac 1 {\sqrt{n}}\sum_{g=0}^{n-1} e^{-\frac{2\pi i k g}n} |g,\mathbf{h}g\rangle,
\]
where $\mathbf{h}g=\mathbf{h}+g\mbox{ mod }n\equiv(h_1+g\mbox{ mod }n,\ldots,h_{N-1}+g\mbox{ mod }n)$. That is, the position eigenbasis $|g,\mathbf{h}g\rangle$ and the character (``momentum'') eigenbasis $|\mathbf{h};\chi_k\rangle$ are related by a discrete Fourier transform.
\end{example}
Before characterizing the invariant subalgebra, let us look at an even smaller subalgebra. Consider the \emph{relational} or \emph{physical} Hilbert space
\[
   \mathcal{H}_{\rm phys}:=\ch_{\mathbf{1}}={\rm span}\{|\mathbf{h};\mathbf{1}\rangle\,\,|\,\,\mathbf{h}\in\mathcal{G}^{N-1}\},
\]
where $\mathbf{1}\in\mathcal{\hat G}$ is the character with $\mathbf{1}(g)=1$ for all $g\in\mathcal{G}$. The choice of name comes from the fact that $\mathcal{H}_{\rm phys}$ is the discrete analog of the so-called physical Hilbert space of constraint quantization. As we shall see later, it is the arena of the ``perspective-neutral'' approach to quantum frame covariance. It is easy to see that $\mathcal{H}_{\rm phys}$ consists of all vectors that are invariant under $\mathcal{U}_{\rm sym}$,
\[
   \mathcal{H}_{\rm phys}=\{|\psi\rangle\in\mathcal{H}^{\otimes N}\,\,|\,\, U|\psi\rangle=|\psi\rangle\mbox{ for all }U\in\mathcal{U}_{\rm sym}\}.
\]
In Eq.~(\ref{eqPiPhys}) below, we will see that $\mathcal{H}_{\rm phys}$ can also be characterized as the set of vectors that are invariant under all global translations $U_g^{\otimes N}$, recovering the definition used in other works like, e.g.,~\cite{Hoehn:2019owq,all}.
Let us denote the subalgebra of operators that are fully supported on $\mathcal{H}_{\rm phys}$ by $\mathcal{A}_{\rm phys}$, the \emph{relational subalgebra}. Clearly, $\mathcal{A}_{\rm phys}\subset\mathcal{A}_{\rm inv}$, but the invariant subalgebra is strictly larger than the relational subalgebra:
\begin{lemma}[Ref.~\cite{QRF1}, Lemma 10]
The invariant subalgebra consists of the block matrices of the form
\[
   \mathcal{A}_{\rm inv}=\left\{ A_{\rm phys}\oplus \bigoplus_{\mathbf{h}\in\mathcal{G}^{N-1},\chi\neq\mathbf{1}} a_{\mathbf{h};\chi}|\mathbf{h};\chi\rangle\langle\mathbf{h};\chi|\right\},
\]
where $A_{\rm phys}\in\mathcal{A}_{\rm phys}$, and the $a_{\mathbf{h};\chi}$ are complex numbers.
\end{lemma}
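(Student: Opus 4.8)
The plan is to simultaneously diagonalize every symmetry unitary in the basis $\{|\mathbf{h};\chi\rangle\}$ and then read the commutant off the matrix entries. By the preceding lemma giving $\mathcal{U}_{\rm sym}$ together with the eigenvalue relation $U_g^{\otimes N}|\mathbf{h};\chi\rangle=\chi(g)|\mathbf{h};\chi\rangle$, any $U=\bigoplus_{\mathbf{h}}U_{g(\mathbf{h})}^{\otimes N}\in\mathcal{U}_{\rm sym}$ acts as $U|\mathbf{h};\chi\rangle=\chi(g(\mathbf{h}))\,|\mathbf{h};\chi\rangle$; in particular $\{|\mathbf{h};\chi\rangle\}$ is a common eigenbasis for all of $\mathcal{U}_{\rm sym}$. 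Expanding $A$ in this basis, the condition $[U,A]=0$ reads $\big(\chi(g(\mathbf{h}))-\chi'(g(\mathbf{h}'))\big)\,\langle\mathbf{h};\chi|A|\mathbf{h}';\chi'\rangle=0$, and this must hold for \emph{every} function $g:\mathcal{G}^{N-1}\to\mathcal{G}$, since by the direct-sum description such functions are in bijection with the elements of $\mathcal{U}_{\rm sym}$ (so the values $g(\mathbf{h})$ may be chosen independently across distinct $\mathbf{h}$).

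Next I would extract the support condition on $A\in\mathcal{A}_{\rm inv}$ by a short case analysis. If $\mathbf{h}=\mathbf{h}'$ but $\chi\neq\chi'$, choose $g(\mathbf{h})=k$ with $\chi(k)\neq\chi'(k)$ (possible since distinct characters are distinct functions) to force $\langle\mathbf{h};\chi|A|\mathbf{h};\chi'\rangle=0$; hence within each fixed $\mathbf{h}$-block $A$ is diagonal in $\chi$. If $\mathbf{h}\neq\mathbf{h}'$, freeze $g(\mathbf{h}')$ and vary $g(\mathbf{h})$: this forces $\chi$ to be constant on $\mathcal{G}$, hence $\chi=\mathbf{1}$ (a nontrivial character is non-constant, e.g.\ by $\sum_{g\in\mathcal{G}}\chi(g)=0$); symmetrically $\chi'=\mathbf{1}$. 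Combining the cases, the only matrix entries of $A$ that may be nonzero are the diagonal entries $a_{\mathbf{h};\chi}:=\langle\mathbf{h};\chi|A|\mathbf{h};\chi\rangle$ (for $\chi\neq\mathbf{1}$, the leftover diagonal entries with $\chi=\mathbf{1}$ being absorbed below) together with the entries $\langle\mathbf{h};\mathbf{1}|A|\mathbf{h}';\mathbf{1}\rangle$ with $\mathbf{h},\mathbf{h}'$ arbitrary; the latter assemble into an arbitrary operator $A_{\rm phys}$ supported on $\mathcal{H}_{\rm phys}=\mathrm{span}\{|\mathbf{h};\mathbf{1}\rangle\}$, i.e.\ $A_{\rm phys}\in\mathcal{A}_{\rm phys}$. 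This is precisely the claimed block form.

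Finally I would check the converse inclusion, which is immediate: if $A$ has the stated form, then on each $|\mathbf{h};\chi\rangle$ every $U\in\mathcal{U}_{\rm sym}$ acts as the scalar $\chi(g(\mathbf{h}))$, and $A$ couples only basis vectors carrying a common such scalar (trivially for the diagonal part; and because every vector in $\mathcal{H}_{\rm phys}$ has eigenvalue $1$ for the $A_{\rm phys}$ part), so $UAU^\dagger=A$. I do not anticipate a genuine obstacle; the only points needing (minor) care are the justification that the $g(\mathbf{h})$ can be chosen independently over distinct $\mathbf{h}$\,---\,exactly the content of the direct-sum form of $\mathcal{U}_{\rm sym}$\,---\,and the elementary fact that a nontrivial character of a finite Abelian group is not constant.
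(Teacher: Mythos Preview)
Your proof is correct. Note that the paper does not actually supply its own proof of this lemma: it is quoted from Ref.~\cite{QRF1} (Lemma~10 there), so there is no in-paper argument to compare against. Your approach---simultaneously diagonalizing every $U\in\mathcal{U}_{\rm sym}$ in the $|\mathbf{h};\chi\rangle$ basis, then reading off the commutant via the matrix-element condition $\big(\chi(g(\mathbf{h}))-\chi'(g(\mathbf{h}'))\big)\langle\mathbf{h};\chi|A|\mathbf{h}';\chi'\rangle=0$ and exploiting the freedom to choose the $g(\mathbf{h})$ independently---is the natural one and goes through cleanly; the case analysis and the converse are both handled correctly.
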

Thus, our quantum symmetries impose an emergent superselection rule: we cannot measure observables with coherences between different characters $\chi$, or between different relations $\mathbf{h}$\,---\,unless $\chi=\mathbf{1}$, in which case we are dealing with relational states and observables.

Note that both $\ca_{\rm phys}$ and $\ca_{\rm inv}$ are (in general strict) subalgebras of $\ca'_{\rm inv}=\{A\,\,|\,\, [A,U_g^{\otimes N}]=0\}$, the algebra of observables invariant under all global translations~\cite{QRF1}, i.e.\ of observables block-diagonal in the characters $\chi$.

To understand the different roles of $\mathcal{A}_{\rm inv}$ and $\mathcal{A}_{\rm phys}$, let us look at the states. Invariant states $\rho\in\mathcal{A}_{\rm inv}$ are those that are independent of the choice of description, $U\rho U^\dagger=\rho$ for all $U\in\mathcal{U}_{\rm sym}$. Different observers who use different descriptions (and are thus related by symmetry transformations) will agree on the description of these states; in this sense, they become \emph{speakable information}. In fact, if $\sigma\in\mathcal{S}(\mathcal{H}^{\otimes N})$ is an arbitrary (not in general invariant) state, then
\[
   \tr(\sigma A)=\tr(\Pi_{\rm inv}(\sigma)A)\quad\mbox{for all }A\in\mathcal{A}_{\rm inv},
\]
where, according to~\cite[Lemma 11]{QRF1},
\begin{equation}
   \Pi_{\rm inv}(\sigma):=\frac 1 {|\mathcal{U}_{\rm sym}|}\sum_{U\in\mathcal{U}_{\rm sym}} U\sigma U^\dagger
   \label{eqPiInv}
\end{equation}
is the projection of $\rho$ into $\mathcal{A}_{\rm inv}$. Thus, if we take the perspective that only the invariant observables are physically meaningful, then the physically relevant content of any state is given by its projection into the invariant subalgebra.

Why should we be interested in the subspace $\ch_{\rm phys}$ of \emph{vectors} that are invariant under all $U$, as opposed to the set of all \emph{states} with that property, i.e.\ $\ca_{\rm inv}$? One possible motivation is that states on the physical subspace $\ch_{\rm phys}$ have a stronger symmetry property: not only are those states invariant under a change of description (since $\ca_{\rm phys}\subset\ca_{\rm inv}$), but also \emph{the quantum information that these states carry about other systems} is invariant under a change of description. To see this, consider some ancillary system $A$ with $\dim A\geq\dim S$ that purifies the state $\rho_S$ of the $\cg$-system $S$, i.e. $\rho_S={\rm Tr}_A |\Psi\rangle\langle\Psi|_{SA}$. As usual in quantum information theory, we regard $A$ as unavailable to the agent and leave it completely unspecified. After all, we want to preserve the quantum information held by $S$ about \emph{any} other quantum system $A$, which is why we will not make any further assumptions on $A$. In particular, $A$ may or may not carry any symmetries related to those of $S$.

Then we obtain the following characterization:
\begin{lemma}
\label{LemPurification}
 For all purifications $|\Psi\rangle_{SA}$ of every state $\rho_S\in\ca_{\rm phys}$, it holds
\[
   U_S\otimes\mathbf{1}_A|\Psi\rangle\langle\Psi|_{SA}U_S^\dagger\otimes\mathbf{1}_A=|\Psi\rangle\langle\Psi|_{SA}\enspace\forall \,U_S\in\mathcal{U}_{\rm sym}.
\]
Conversely, all mixed states $\rho_S$ with this property are elements of $\ca_{\rm phys}$.
\end{lemma}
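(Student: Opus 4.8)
The plan is to reduce both directions to the Schmidt decomposition of the purification, the explicit form of $\mathcal{U}_{\rm sym}$ in Eq.~(\ref{eqUSym}), and the fact recalled above that $\ch_{\rm phys}$ is precisely the subspace of vectors fixed by \emph{every} $U\in\mathcal{U}_{\rm sym}$. For the forward direction, assume $\rho_S\in\ca_{\rm phys}$, so that ${\rm supp}(\rho_S)\subseteq\ch_{\rm phys}$, and write an arbitrary purification in Schmidt form $|\Psi\rangle_{SA}=\sum_i\sqrt{p_i}\,|e_i\rangle_S\otimes|f_i\rangle_A$ with $p_i>0$. Each $|e_i\rangle$ is then an eigenvector of $\rho_S$ and hence lies in $\ch_{\rm phys}$, so $U_S|e_i\rangle=|e_i\rangle$ for all $U_S\in\mathcal{U}_{\rm sym}$; therefore $U_S\otimes\mathbf{1}_A|\Psi\rangle_{SA}=|\Psi\rangle_{SA}$, which already gives the asserted invariance of $|\Psi\rangle\langle\Psi|_{SA}$ --- in fact it shows the purifying vector itself is left fixed, not merely the density matrix.

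For the converse, suppose $\rho_S$ is a mixed state with the stated property and test it on the canonical purification $|\Psi\rangle_{SA}=\sum_i\sqrt{p_i}\,|e_i\rangle_S\otimes|i\rangle_A$, where the $|e_i\rangle$ are the eigenvectors of $\rho_S$ with $p_i>0$, the $|i\rangle_A$ are orthonormal, and $A$ is padded so that $\dim A\geq\dim S$. Conjugating the rank-one projector $|\Psi\rangle\langle\Psi|_{SA}$ by the unitary $U_S\otimes\mathbf{1}_A$ again yields a rank-one projector onto a unit vector, so invariance forces $U_S\otimes\mathbf{1}_A|\Psi\rangle_{SA}=e^{i\alpha_{U_S}}|\Psi\rangle_{SA}$ for some phase. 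Taking the inner product with $|i\rangle_A$ and dividing by $\sqrt{p_i}>0$ gives $U_S|e_i\rangle=e^{i\alpha_{U_S}}|e_i\rangle$ for every $i$, with one and the same phase. Hence ${\rm supp}(\rho_S)={\rm span}\{|e_i\rangle\}$ is a subspace on which every $U\in\mathcal{U}_{\rm sym}$ acts as a scalar, i.e.\ it lies inside a single simultaneous eigenspace of $\mathcal{U}_{\rm sym}$.

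The last and main step is to show that such a joint eigenspace $V$, if $\dim V\geq 2$, must be contained in $\ch_{\rm phys}$. I would expand a general $|v\rangle\in V$ in the basis $|\mathbf{h};\chi\rangle$, note that by Eq.~(\ref{eqUSym}) the symmetry $\bigoplus_{\mathbf{h}}U_{g(\mathbf{h})}^{\otimes N}$ multiplies $|\mathbf{h};\chi\rangle$ by $\chi(g(\mathbf{h}))$ with each $g(\mathbf{h})\in\cg$ independently free, and then compare the scalar-eigenvalue condition for symmetries that differ in only one slot $g(\mathbf{h})$. This forces (a) that $|v\rangle$ has weight on at most one character for each fixed $\mathbf{h}$, and (b) that if $|v\rangle$ has weight on two distinct relation labels $\mathbf{h}_1\neq\mathbf{h}_2$ then the characters occurring there must be trivial; together these say that every $|v\rangle\in V$ either lies in $\ch_{\rm phys}$ or is proportional to a single $|\mathbf{h};\chi\rangle$ with $\chi\neq\mathbf{1}$. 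Finally, if some $|v\rangle\in V$ were of the second type, then adding to it any linearly independent element of $V$ (which exists since $\dim V\geq2$) would produce a vector of $V$ of neither admissible type, a contradiction; hence $V\subseteq\ch_{\rm phys}$ and $\rho_S\in\ca_{\rm phys}$.

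I expect this final step to be the real obstacle: the passage from ``$V$ sits in a joint $\mathcal{U}_{\rm sym}$-eigenspace'' to ``$V\subseteq\ch_{\rm phys}$'' is where the full, relation-conditional structure of $\mathcal{U}_{\rm sym}$ (as opposed to the mere global translations $U_g^{\otimes N}$) is indispensable, and it is also precisely where the ``mixed'' hypothesis enters --- a pure momentum eigenstate $|\mathbf{h};\chi\rangle\langle\mathbf{h};\chi|$ with $\chi\neq\mathbf{1}$ still satisfies the purification-invariance property but does not belong to $\ca_{\rm phys}$, so the conclusion genuinely fails for $\dim V=1$.
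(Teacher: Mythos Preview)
Your argument is correct. The forward direction is identical to the paper's. For the converse, the paper takes a slightly different organizational route: it first traces over $A$ to conclude $\rho_S\in\ca_{\rm inv}$, then invokes the explicit block structure of $\ca_{\rm inv}$ (Lemma~10) to write $\rho_S$ as a mixture of an $\ch_{\rm phys}$-part and isolated projectors $|\mathbf{h};\chi\rangle\langle\mathbf{h};\chi|$ with $\chi\neq\mathbf{1}$, and finally runs a short case analysis on one specific purification to exhibit a symmetry that introduces relative phases. You instead bypass $\ca_{\rm inv}$ entirely: from invariance of the rank-one purification you extract that ${\rm supp}(\rho_S)$ is a joint eigenspace of $\mathcal{U}_{\rm sym}$, and then classify such eigenspaces from scratch using the independent freedom in the $g(\mathbf{h})$. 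The two arguments are really the same case analysis in different dress --- the paper's is shorter because it recycles the structure theorem for $\ca_{\rm inv}$, while yours is more self-contained and makes the role of the ``mixed'' hypothesis (i.e.\ $\dim V\geq 2$) especially transparent, correctly isolating the pure states $|\mathbf{h};\chi\rangle\langle\mathbf{h};\chi|$ with $\chi\neq\mathbf{1}$ as the only obstructions.
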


\begin{proof}
Suppose that $\rho_S\in\ca_{\rm phys}$, then every purification can be written in the form
\[
   |\Psi\rangle_{SA}=\sum_i \sqrt{\lambda_i}|i\rangle_S \otimes |i\rangle_A,
\]
where the $\lambda_i$ are the eigenvalues of $\rho_S$, and $|i\rangle_S\in\ch_{\rm phys}$. Hence $U_S|i\rangle_S=|i\rangle_S$ for all $i$, and so $U_S\otimes\mathbf{1}_A|\Psi\rangle_{SA}=|\Psi\rangle_{SA}$.

On the other hand, suppose that $\rho_S$ is any mixed state for which all purifications $|\Psi\rangle_{SA}$ satisfy the statement of the lemma. Tracing over $A$, we find $U_S\rho_S U_S^\dagger=\rho_S$, i.e.\ $\rho_S\in\ca_{\rm inv}$. Thus, $\rho_S$ is of the form
\[
   \rho_S=\sum_{i=1}^d \lambda_i |i\rangle\langle i|+\sum_{\mathbf{h},\chi\neq\mathbf{1}} \lambda_{\mathbf{h};\chi}|\mathbf{h};\chi\rangle\langle\mathbf{h};\chi|,
\]
where $d=\dim\ch_{\rm phys}$, $\{|i\rangle\}$ is an orthonormal basis of $\mathcal{H}_{\rm phys}$, and the $\lambda_{\bullet}$ are all non-negative and sum to one. A particular purification of $\rho_S$ is
\[
   |\Psi\rangle_{SA}=\sum_{i=1}^d\sqrt{\lambda_i}|i\rangle_S\otimes |i\rangle_A+\sum_{\mathbf{h},\chi\neq\mathbf{1}} \sqrt{\lambda_{\mathbf{h};\chi}}|\mathbf{h};\chi\rangle_S\otimes |\mathbf{h};\chi\rangle_A.
\]
Set $U_S=\bigoplus_{\mathbf{h}}U_{g(\mathbf{h})}^{\otimes N}$, then $U_S\in\mathcal{U}_{\rm sym}$. For every choice of $g(\mathbf{h})$, we have $U_S\otimes\mathbf{1}_A|\Psi\rangle_{SA}=e^{i\theta}|\Psi\rangle_{SA}$ for some global phase $\theta$ that may depend on the choice of $g(\mathbf{h})$. We have
\begin{eqnarray*}
U_S\otimes\mathbf{1}_A|\Psi\rangle_{SA}&=&\sum_{i=1}^d \sqrt{\lambda_i} |i\rangle_S\otimes|i\rangle_A\\
&&+\sum_{\mathbf{h},\chi\neq\mathbf{1}}\sqrt{\lambda_{\mathbf{h};\chi}}\chi(g(\mathbf{h}))|\mathbf{h};\chi\rangle_S\otimes|\mathbf{h};\chi\rangle_A.
\end{eqnarray*}
Suppose that $\rho_S\not\in\ca_{\rm phys}$. First, consider the case that there is some $i$ with $\lambda_i\neq 0$. Then we must have at least one $\lambda_{\mathbf{h},\chi}\neq 0$. Since $\chi\neq\mathbf{1}$, there exists some $g$ with $\chi(g)\neq 1$. Set $g(\mathbf{h}):=g$, then $U_S$ changes the relative phases in the components of $|\Psi\rangle_{SA}$; this is a contradiction to our earlier claims.

Second, suppose that $\lambda_i=0$ for all $i$. Since $\rho_S$ is mixed, there must exist $(\mathbf{h},\chi)\neq (\mathbf{h}',\chi')$ such that $\lambda_{\mathbf{h};\chi}\neq 0$ and $\lambda_{\mathbf{h}';\chi'}\neq 0$. If $\chi=\chi'$, then $\mathbf{h}\neq\mathbf{h}'$. Since $\chi\neq\mathbf{1}$, we can choose $g(\mathbf{h})$ and $g(\mathbf{h}')$ such that $\chi(g(\mathbf{h}))\neq\chi(g(\mathbf{h}'))$. Again, this introduces relative phases into $|\Psi\rangle_{SA}$ which is a contradiction. Finally, if $\chi\neq\chi'$, then choose some $g$ with $\chi(g)\neq\chi'(g)$, and set $g(\mathbf{h})=g(\mathbf{h}')=g$. This also introduces relative phases into $|\Psi\rangle_{SA}$.
\end{proof}
This lemma underlines the physical significance of the relational subalgebra $\ca_{\rm phys}$. We will later see that there is an additional reason for preferring it over $\ca_{\rm inv}$: in contrast to the invariant subalgebra, it admits an invariant notion of partial trace. Furthermore, it will, in fact, be tomographically complete for the invariant information in the alignable states that we introduce below.

The orthogonal projection onto $\mathcal{H}_{\rm phys}$ can be written
\begin{equation}
   \Pi_{\rm phys}=\frac 1 {|\mathcal{U}_{\rm sym}|} \sum_{U\in\mathcal{U}_{\rm sym}}U=\frac 1 {|\mathcal{G}|} \sum_{g\in\cg}U_g^{\otimes N},
   \label{eqPiPhys}
\end{equation}
where the second equality follows from \cite[Lemma 11]{QRF1}.
Moreover, our symmetry considerations motivate us to define two notions of \emph{equivalence} of states.
\begin{definition}[Ref.~\cite{QRF1} Definition 13]
We call two states $\rho,\sigma\in\mathcal{S}(\ch^{\otimes N})$ \emph{symmetry-equivalent}, and write $\rho\simeq \sigma$, if there exists some $U\in\mathcal{U}_{\rm sym}$ with $\sigma=U\rho U^\dagger$. We call them \emph{observationally equivalent} and write $\rho\sim\sigma$ if $\tr(A\rho)=\tr(A\sigma)$ for all $A\in\ca_{\rm inv}$. The equivalence class of states $\sigma$ with $\sigma\sim\rho$ is denoted $[\rho]$.
\end{definition}
Clearly $\rho\simeq\sigma$ implies $\rho\sim\sigma$, but not vice versa. Furthermore, according to~\cite[Lemma 14]{QRF1}, $\rho\sim\sigma$ is equivalent to $\Pi_{\rm inv}(\rho)=\Pi_{\rm inv}(\sigma)$.

\subsection{Alignable states}
\label{SubsecAlignable}
So far, we have taken an operational perspective: Above, we have asked which states are distinguishable (and which observables measurable) by observers constrained by the quantum symmetries of a $\mathcal{G}$-system.  We have seen that observationally equivalent states (say, $\rho$ and $\sigma$ with $\rho\sim\sigma$) agree on all  predictions that can be tested by such observers. Thus, these observers can choose any state from the equivalence class $[\rho]$ of $\rho$ as a description of the corresponding preparation procedure. 

In Ref.~\cite{QRF1}, we have formulated a corresponding communication task. Two observers (Alice and Bob) obtain a description of $[\rho]$. They are not allowed to communicate, but they each have to write a representative $\sigma\in[\rho]$ on a piece of paper. They win if $\sigma_A=\sigma_B$, i.e.\ if their choices agree. Intuitively, some of the internal physical structure of quantum systems described by $[\rho]$ must be used to overcome the symmetry and to pick an element.

One convenient choice is to take the projected state $\Pi_{\rm inv}(\rho)=\Pi_{\rm inv}(\sigma)$ from Eq.~\ref{eqPiInv} as the canonical representative. In fact, in the 	quantum information context~\cite{Bartlett}, it is sometimes argued that this is the ``correct'' choice, representing an agent's state of knowledge about the quantum system if constrained by the symmetries in $\mathcal{U}_{\rm sym}$. Concretely, suppose that Alice and Bob hold physical reference frames that both break the symmetry, but these reference frames are not aligned (i.e.\ uncorrelated). Then, if Alice prepares the quantum system in state $\rho$ and sends it to Bob, Bob will assign the state $\rho':=\Pi_{\rm inv}(\rho)$ to the system he obtains. This is because his description is supposed to convey a very specific meaning: \emph{the relation of the quantum system to the concrete external reference frame in his laboratory}. But since he has no idea about this relation (even if Alice sends him a classical description of $\rho$), he must assign the mixed state $\rho'$.

However, in our context, the state description is \emph{not} meant to convey the relation of the given quantum system to a concrete external reference frame --- it is simply meant to convey a natural and useful description of the quantum system on which different observers without shared external frame may agree. Thus, in our context, it is meaningful to choose another representative $\sigma\neq \Pi_{\rm inv}(\rho)$, and mathematical or physical convenience may be a reason to do so. For example, we may choose a representative that somehow contains a smaller amount of inconvenient superpositions to arrive at a description that resembles more closely classical physics; this strategy is arguably at the heart of recent quantum formulations of the equivalence principle~\cite{GiacominiBrukner,Hardy1,Hardy2}.

Let us define a class of states that admits a particularly natural kind of representation ``relative to the $i$th particle''. To phrase the definition, we will use the notation $\overline{i}=\{1,2,\ldots,N\}\setminus\{i\}$, and thus $\ch^{\otimes N}=\ch_i\otimes \ch_{\overline i}$, where $\ch_i\simeq \ch$ and $\ch_{\overline i}\simeq \ch^{\otimes(N-1)}$.
\begin{definition}[Alignable states and observables]
Let $i\in\{1,\ldots,N\}$. A state $\rho\in\mathcal{S}(\ch^{\otimes N})$ is called ``$i$-alignable'' if there exists some $\sigma_{\overline i}\in\mathcal{S}(\ch_{\overline i})$ such that $\rho\simeq |e\rangle\langle e|_i\otimes\sigma_{\overline i}$. An analogous definition applies to observables.
\end{definition}
That is, $i$-alignable states are symmetry-equivalent to states in which the particle $i$ factors out and becomes ``located at the origin'', the unit element of the group. Interpreting the configuration of frame $i$ more generally as its orientation, such states can clearly also be aligned to ``$i$ being in orientation $g\in\cg$'', i.e.\ are symmetry-equivalent to $\ket{g}\!\bra{g}_i\otimes\tilde\sigma_{\overline i}$, where $\tilde\sigma_{\overline i}=U_g^{\otimes(N-1)}\sigma_{\overline i}U_{g^{-1}}^{\otimes(N-1)}$.  We can then regard $\sigma_{\overline i}$ as the description of the state ``relative to particle $i$ sitting in the origin'' since it is uniquely determined by $i$:
\begin{lemma}
\label{LemAlignUniqueness}
If $\rho\simeq|e\rangle\langle e|_i\otimes\sigma_{\overline i}$ and $\rho\simeq|e\rangle\langle e|_i\otimes\tau_{\overline i}$ then $\sigma_{\overline i}=\tau_{\overline i}$.
\end{lemma}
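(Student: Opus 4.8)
\emph{Strategy.} The plan is to reduce the two‑unitary statement to a single‑unitary one and then exploit the rigid block structure of $\mathcal{U}_{\rm sym}$ from Eq.~(\ref{eqUSym}) together with positivity of states. By definition of $\simeq$, pick $U,V\in\mathcal{U}_{\rm sym}$ with $U\rho U^\dagger=|e\rangle\langle e|_i\otimes\sigma_{\overline i}$ and $V\rho V^\dagger=|e\rangle\langle e|_i\otimes\tau_{\overline i}$. Solving the first identity for $\rho$, substituting into the second, and using that $\mathcal{U}_{\rm sym}$ is a group, I get a single $W:=VU^\dagger\in\mathcal{U}_{\rm sym}$ with $W\,(|e\rangle\langle e|_i\otimes\sigma_{\overline i})\,W^\dagger=|e\rangle\langle e|_i\otimes\tau_{\overline i}$. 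It thus suffices to prove that this identity forces $\sigma_{\overline i}=\tau_{\overline i}$ --- intuitively, that re‑aligning an already $i$‑aligned state to the same frame does nothing.

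\emph{Block analysis.} Write $W=\bigoplus_{\mathbf{h}}U_{g(\mathbf{h})}^{\otimes N}$. Each $U_g^{\otimes N}$ is a permutation matrix in the configuration basis, so $W$ permutes the $|\mathbf{g}\rangle$ without phases and maps each relation block $\mathcal{H}_{\mathbf{h}}$ onto itself. The operator $|e\rangle\langle e|_i\otimes\sigma_{\overline i}$ is supported on the subspace of configurations with particle $i$ at the origin $e$, and this subspace meets each block $\mathcal{H}_{\mathbf{h}}$ in exactly one configuration $|\mathbf{c}(\mathbf{h})\rangle$ (for $i=1$ this is $|e,\mathbf{h}e\rangle$); identifying $|\mathbf{c}(\mathbf{h})\rangle=|e\rangle_i\otimes|\mathbf{h}\rangle_{\overline i}$ one has $(\sigma_{\overline i})_{\mathbf{h},\mathbf{h}'}=\langle\mathbf{c}(\mathbf{h})|(|e\rangle\langle e|_i\otimes\sigma_{\overline i})|\mathbf{c}(\mathbf{h}')\rangle$. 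Now compare diagonal entries of both sides of $W\,(|e\rangle\langle e|_i\otimes\sigma_{\overline i})\,W^\dagger=|e\rangle\langle e|_i\otimes\tau_{\overline i}$ in the configuration basis, block by block: since $W^\dagger$ sends $|\mathbf{c}(\mathbf{h})\rangle$ to its $g(\mathbf{h})^{-1}$‑translate, the left side has, within block $\mathbf{h}$, a single nonzero diagonal entry, located at the $g(\mathbf{h})$‑translate of $|\mathbf{c}(\mathbf{h})\rangle$ with value $(\sigma_{\overline i})_{\mathbf{h},\mathbf{h}}$, while the right side has a single nonzero diagonal entry, located at $|\mathbf{c}(\mathbf{h})\rangle$ with value $(\tau_{\overline i})_{\mathbf{h},\mathbf{h}}$. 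Equality of these two functions on block $\mathbf{h}$ forces: either $(\sigma_{\overline i})_{\mathbf{h},\mathbf{h}}=(\tau_{\overline i})_{\mathbf{h},\mathbf{h}}=0$, or else $g(\mathbf{h})=e$ and $(\sigma_{\overline i})_{\mathbf{h},\mathbf{h}}=(\tau_{\overline i})_{\mathbf{h},\mathbf{h}}$.

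\emph{Positivity and conclusion.} Because $\sigma_{\overline i}\geq0$, vanishing of a diagonal entry $(\sigma_{\overline i})_{\mathbf{h},\mathbf{h}}$ forces the whole $\mathbf{h}$‑th row and column of $\sigma_{\overline i}$ to vanish; hence the support of $|e\rangle\langle e|_i\otimes\sigma_{\overline i}$ lies in the span of those $|\mathbf{c}(\mathbf{h})\rangle$ with $(\sigma_{\overline i})_{\mathbf{h},\mathbf{h}}\neq0$, and for all such $\mathbf{h}$ the previous step gave $g(\mathbf{h})=e$, so $W$ fixes $|\mathbf{c}(\mathbf{h})\rangle$. Therefore $W$ acts as the identity on the support of $|e\rangle\langle e|_i\otimes\sigma_{\overline i}$, whence $W\,(|e\rangle\langle e|_i\otimes\sigma_{\overline i})\,W^\dagger=|e\rangle\langle e|_i\otimes\sigma_{\overline i}$; combined with the identity above this gives $|e\rangle\langle e|_i\otimes\sigma_{\overline i}=|e\rangle\langle e|_i\otimes\tau_{\overline i}$, and taking the matrix element $\langle e|_i(\,\cdot\,)|e\rangle_i$ in the $i$‑th factor yields $\sigma_{\overline i}=\tau_{\overline i}$.

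\emph{Main obstacle.} The analysis itself is short; the only delicate point is the bookkeeping in the block step --- carefully arguing that ``particle $i$ at $e$'' singles out exactly one configuration per relation block, that the block‑diagonal $W$ merely cyclically translates configurations within each block, and, crucially, invoking positivity to dispose of the blocks where $\sigma_{\overline i}$ has a zero on the diagonal. Without positivity the statement would actually fail, since $W$ could act nontrivially on such a ``phantom'' block; everything else is routine.
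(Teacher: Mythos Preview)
Your proof is correct and follows essentially the same route as the paper: reduce to a single $W\in\mathcal{U}_{\rm sym}$ with $W(|e\rangle\langle e|_i\otimes\sigma_{\overline i})W^\dagger=|e\rangle\langle e|_i\otimes\tau_{\overline i}$, then use the block structure of $W$ to see it acts trivially on the relevant configurations. The only difference is in the final step: the paper compares \emph{all} matrix elements at once rather than just diagonals, observing that if $s_{\mathbf{h},\mathbf{j}}\neq 0$ then the matrix unit $|g(\mathbf{h}),\mathbf{h}g(\mathbf{h})\rangle\langle g(\mathbf{j}),\mathbf{j}g(\mathbf{j})|$ on the left must match one on the right, forcing $g(\mathbf{h})=g(\mathbf{j})=e$ directly by linear independence --- so positivity is never invoked. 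Your remark that positivity is ``crucial'' is therefore true only of your particular diagonal-first route, not of the lemma itself.
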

\begin{proof}
We give the proof for the case of $i=1$; the general case follows analogously. If the condition of the lemma holds, then it follows that $|e\rangle\langle e|_i\otimes\sigma_{\overline i}\simeq |e\rangle\langle e|_i\otimes\tau_{\overline i}$. Write $\sigma_{\overline 1}=\sum_{\mathbf{h},\mathbf{j}}s_{\mathbf{h},\mathbf{j}}|\mathbf{h}\rangle\langle\mathbf{j}|$ and $\tau_{\overline 1}=\sum_{\mathbf{h},\mathbf{j}}t_{\mathbf{h},\mathbf{j}}|\mathbf{h}\rangle\langle\mathbf{j}|$, then there is some $U\in\mathcal{U}_{\rm sym}$ such that
\[
   \sum_{\mathbf{h},\mathbf{j}} s_{\mathbf{h},\mathbf{j}} U|e,\mathbf{h}\rangle\langle e,\mathbf{j}|U^\dagger=\sum_{\mathbf{h},\mathbf{j}} t_{\mathbf{h},\mathbf{j}}|e,\mathbf{h}\rangle\langle e,\mathbf{j}|.
\]
But $U|e,\mathbf{h}\rangle=|g,\mathbf{h}g\rangle$ for some $g\in\cg$ (and similarly for $\mathbf{h}$ replaced by $\mathbf{j}$). Comparing this with the right-hand side shows that $U|e,\mathbf{h}\rangle=|e,\mathbf{h}\rangle$ and $U|e,\mathbf{j}\rangle=|e,\mathbf{j}\rangle$ if $s_{\mathbf{h},\mathbf{j}}\neq 0$. Thus, we can simply drop $U$ from the left-hand side, and obtain $|e\rangle\langle e|_i\otimes\sigma_{\overline i}= |e\rangle\langle e|_i\otimes\tau_{\overline i}$.
\end{proof}
Alignability does not depend on the choice of $i$:
\begin{theorem}[Ref.~\cite{QRF1}, Theorems 18 and 24]
If $\rho\in\mathcal{S}(\ch^{\otimes N})$ is $i$-alignable for some $i$, then it is $j$-alignable for every $j\in\{1,\ldots,N\}$. We will then simply call $\rho$ \emph{alignable}. Moreover, for every $i,j\in\{1,\ldots,N\}$, there is a unique symmetry $U\in\mathcal{U}_{\rm sym}$ such that $U(|e\rangle\langle e|_i\otimes\sigma_{\overline i})U^\dagger=|e\rangle\langle e|_j\otimes\sigma_{\overline j}$ for all alignable states $\rho$. If $i\neq j$, then this $U$ is a proper relation-conditional translation, i.e.\ cannot be written as an unconditional global translation. Every such $U$ induces an (up to global phase) unique unitary (``QRF transformation'') $V_{i\to j}$ such that  $V_{i\to j}\sigma_{\overline i}V_{i\to j}^\dagger=\sigma_{\overline j}$. It is given by
\[
   V_{i\to j}=\mathbb{F}_{i,j} \sum_{g\in\cg} |g^{-1}\rangle \langle g|_j\otimes U_{g^{-1}}^{\otimes(N-2)},
\]
where $\mathbb{F}$ swaps particles $i$ and $j$.
\end{theorem}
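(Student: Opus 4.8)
The plan is to exhibit the symmetry $U$ explicitly, read off $V_{i\to j}$ from it, and then dispatch uniqueness and properness by elementary block-diagonal arguments. Throughout I would use the decomposition $\ch^{\otimes N}=\bigoplus_{\mathbf{h}}\ch_{\mathbf{h}}$ and the structure $\mathcal{U}_{\rm sym}=\{\bigoplus_{\mathbf{h}}U_{g(\mathbf{h})}^{\otimes N}\}$ from the Lemma of Ref.~\cite{QRF1}. For each $k$, let $\mathcal{D}_k$ be the span over all $\mathbf{h}$ of the unique vector of the block $\ch_{\mathbf{h}}$ having particle $k$ at the origin, i.e.\ $\mathcal{D}_k={\rm span}\{|g,\mathbf{h}g\rangle\,:\,h_{k-1}g=e\}$ with the convention $h_0=e$; forgetting that particle gives a canonical unitary $\iota_k\colon\ch_{\overline k}\to\mathcal{D}_k$. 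A state is $k$-aligned precisely when it is of the form $|e\rangle\langle e|_k\otimes\sigma_{\overline k}$, i.e.\ when it is supported on $\mathcal{D}_k$, and the aligned representative $\sigma_{\overline k}$ of an alignable $\rho$ is unique by Lemma~\ref{LemAlignUniqueness}. Note that as $\sigma_{\overline k}$ ranges over $\mathcal{S}(\ch_{\overline k})$ these states exhaust the $k$-aligned representatives of alignable states.

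Next I would construct $U$ and thereby prove $j$-alignability. Take $i=1$ (the general case only replaces $\mathcal{D}_1$ by $\mathcal{D}_i$ and the formula below by $g(\mathbf{h})=h_{j-1}^{-1}h_{i-1}$). Set $U:=\bigoplus_{\mathbf{h}}U_{g(\mathbf{h})}^{\otimes N}$ with $g(\mathbf{h}):=h_{j-1}^{-1}$, so $U\in\mathcal{U}_{\rm sym}$ (and $U=\mathbf{1}$ if $j=1$). On $\mathcal{D}_1$ it sends $|e,\mathbf{h}\rangle=|e,\mathbf{h}e\rangle\mapsto|h_{j-1}^{-1},\mathbf{h}h_{j-1}^{-1}\rangle$, whose $j$-th entry is $h_{j-1}h_{j-1}^{-1}=e$; hence $U\mathcal{D}_1=\mathcal{D}_j$, so $U(|e\rangle\langle e|_1\otimes\sigma_{\overline 1})U^\dagger$ is supported on $\mathcal{D}_j$ and therefore equals $|e\rangle\langle e|_j\otimes\sigma_{\overline j}$ for a (by Lemma~\ref{LemAlignUniqueness} unique) $\sigma_{\overline j}$. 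Thus $\rho$ is $j$-alignable, and the same $U$ works for every alignable $\rho$.

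For uniqueness and properness: if $U'\in\mathcal{U}_{\rm sym}$ has the stated property, then for every $\sigma_{\overline i}$ both $U$ and $U'$ conjugate $|e\rangle\langle e|_i\otimes\sigma_{\overline i}$ to its unique $j$-aligned representative, so $U'^{-1}U$ fixes every state supported on $\mathcal{D}_i$; taking $\sigma_{\overline i}$ pure shows $U'^{-1}U$ acts as a scalar on each ray of $\mathcal{D}_i$. Writing $U'^{-1}U=\bigoplus_{\mathbf{h}}U_{w(\mathbf{h})}^{\otimes N}$, its action on the distinguished vector of $\ch_{\mathbf{h}}$ is the vector with particle $i$ at position $w(\mathbf{h})$, which is a multiple of the original only if $w(\mathbf{h})=e$; hence $U'^{-1}U=\mathbf{1}$, so $U'=U$. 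For $i\neq j$, $g(\mathbf{h})=h_{j-1}^{-1}h_{i-1}$ depends non-trivially on $\mathbf{h}$ because $i-1\neq j-1$ and $|\cg|\geq 2$; since the block decomposition is unique and $U_{g_0}^{\otimes N}$ corresponds to the constant function $g(\mathbf{h})\equiv g_0$, $U$ cannot be an unconditional global translation.

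Finally, the QRF transformation. Define $V_{i\to j}:=\iota_j^{-1}\circ(U|_{\mathcal{D}_i})\circ\iota_i\colon\ch_{\overline i}\to\ch_{\overline j}$; restricting $U(|e\rangle\langle e|_i\otimes\sigma_{\overline i})U^\dagger=|e\rangle\langle e|_j\otimes\sigma_{\overline j}$ to $\mathcal{D}_i\to\mathcal{D}_j$ yields exactly $V_{i\to j}\sigma_{\overline i}V_{i\to j}^\dagger=\sigma_{\overline j}$, and uniqueness up to a global phase follows as before (any $V'$ with this property makes $V'^{-1}V_{i\to j}$ a scalar on every ray of $\ch_{\overline i}$). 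The closed form is obtained by tracking a basis vector: for $i=1$, $\iota_1$ sends $|\mathbf{h}\rangle_{\overline 1}$ to $|e,\mathbf{h}\rangle$, $U$ sends this to $|h_{j-1}^{-1},\mathbf{h}h_{j-1}^{-1}\rangle$, and $\iota_j^{-1}$ sends that to the vector of $\ch_{\overline j}$ with particle $1$ at $h_{j-1}^{-1}$ and particle $k\notin\{1,j\}$ at $h_{k-1}h_{j-1}^{-1}$; applying $\mathbb{F}_{1,j}\sum_{g}|g^{-1}\rangle\langle g|_j\otimes U_{g^{-1}}^{\otimes(N-2)}$ to $|\mathbf{h}\rangle_{\overline 1}$ (control on particle $j$, which holds $h_{j-1}$) produces precisely this vector, and relabeling gives general $i$. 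I expect the main obstacle to be purely this index bookkeeping — in particular keeping straight how $\mathbb{F}_{i,j}$ acts as a map $\ch_{\overline i}\to\ch_{\overline j}$ — while the conceptual content sits entirely in the two block-diagonal uniqueness arguments.
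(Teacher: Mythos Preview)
Your proof is correct. The paper itself does not supply a proof of this theorem; it is imported from Ref.~\cite{QRF1}, so there is no in-paper argument to compare against directly. Your approach --- construct the block-diagonal $U=\bigoplus_{\mathbf{h}}U_{h_{j-1}^{-1}h_{i-1}}^{\otimes N}$ explicitly, observe that it maps $\mathcal{D}_i$ to $\mathcal{D}_j$, and then extract uniqueness and properness from the block structure --- is the natural one within the alignability framework and is presumably close in spirit to the original arguments in \cite{QRF1}. The uniqueness step (forcing $w(\mathbf{h})=e$ on every block by feeding in the distinguished basis vectors) and the phase-uniqueness for $V_{i\to j}$ are clean and complete as stated.

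It is worth noting that the present paper does, later in Section~\ref{SecRelAl}, provide an \emph{independent} derivation of the formula for $V_{i\to j}$ via the perspective-neutral machinery: one defines $V_{i\to j}(g_i,g_j):=\calr_{\mathbf{S},j}(g_j)\,\calr_{\mathbf{S},i}^{-1}(g_i)$ through the Schr\"odinger reduction maps and computes the closed form in Lemma~\ref{lem_CFM}, recovering your $V_{i\to j}$ at $g_i=g_j=e$. That route trades the explicit block-by-block symmetry construction for a passage through $\ch_{\rm phys}$, and its payoff is conceptual (it exhibits $V_{i\to j}$ as a composition of ``quantum coordinate maps'' and thereby establishes the equivalence with the perspective-neutral approach), whereas your argument stays entirely on the kinematical Hilbert space and yields the uniqueness and properness statements more directly.
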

This is a special case of the QRF transformation given in Ref.~\cite{Hamette}, which generalizes the QRF transformations of Ref.~\cite{Giacomini}.

Not only can we express alignable states relative to one of the particles, but also, for example, relative to some ``center of mass'':
\begin{example}[Ref.~\cite{QRF1}, Example 20]
Recall the ``discrete circle'' of Figure~\ref{fig_cyclic} and Example~\ref{ExCyclic}, i.e.\ the cyclic group $\cg=\mathbb{Z}_n$. Let $m_1,\ldots,m_N$ be non-negative real numbers and $m:=m_1+\ldots+m_N>0$. For $\mathbf{h}\in\cg^{N-1}$, define
\[
   g(\mathbf{h}):=-\left\lfloor \frac 1 m (m_2 h_1+\ldots+m_N h_{N-1})\right\rfloor,
\]
and set $U:=\bigoplus_{\mathbf{h}}U_{g(\mathbf{h})}^{\otimes N}$. If we interpret the $m_i$ as the \emph{masses} of the particles, then this symmetry $U$ describes a change of quantum coordinates from particle $1$ to the ``center of mass''.
\end{example}
We also know the following:
\begin{lemma}
Given two alignable states $\rho\simeq|e\rangle\langle e|_i\otimes\sigma_{\overline i}$ and $\rho'\simeq |e\rangle\langle e|_i\otimes\sigma'_{\overline i}$, the following statements are equivalent:
\begin{itemize}
	\item[(i)] $\rho$ and $\rho'$ are symmetry-equivalent, i.e.\ $\rho\simeq\rho'$,
   \item[(ii)] $\rho$ and $\rho'$ are observationally equivalent, i.e.\ $\rho\sim\rho'$,
	\item[(iii)] for some (and thus every) $i\in\{1,\ldots,N\}$, the states ``relative to particle $i$'' agree, i.e.\ $\sigma_{\overline i}=\sigma'_{\overline i}$.
\end{itemize}
\end{lemma}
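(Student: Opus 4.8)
The plan is to establish the cycle of implications $(iii)\Rightarrow(i)\Rightarrow(ii)\Rightarrow(iii)$, of which two arrows are essentially free. For $(iii)\Rightarrow(i)$, if $\sigma_{\overline i}=\sigma'_{\overline i}$ for the given $i$, then $\rho\simeq|e\rangle\langle e|_i\otimes\sigma_{\overline i}=|e\rangle\langle e|_i\otimes\sigma'_{\overline i}\simeq\rho'$, and since $\simeq$ is orbit-equivalence under the group $\mathcal{U}_{\rm sym}$ it is transitive, so $\rho\simeq\rho'$. For $(i)\Rightarrow(ii)$, one simply invokes the general observation recorded right after the definition of the two equivalence relations, that $\rho\simeq\sigma$ always implies $\rho\sim\sigma$. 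Hence the whole content lies in $(ii)\Rightarrow(iii)$.

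For $(ii)\Rightarrow(iii)$, I would argue that $\mathcal{A}_{\rm phys}$ already determines the aligned representative. The key identity is $\Pi_{\rm phys}|e,\mathbf{h}\rangle=\frac{1}{\sqrt{|\mathcal{G}|}}|\mathbf{h};\mathbf{1}\rangle$, read off directly from $\Pi_{\rm phys}=\frac{1}{|\mathcal{G}|}\sum_{g\in\mathcal{G}}U_g^{\otimes N}$ and the definition of $|\mathbf{h};\mathbf{1}\rangle$. Writing $\sigma_{\overline 1}=\sum_{\mathbf{h},\mathbf{j}}s_{\mathbf{h},\mathbf{j}}|\mathbf{h}\rangle\langle\mathbf{j}|$ (take $i=1$ without loss of generality), so that $|e\rangle\langle e|_1\otimes\sigma_{\overline 1}=\sum_{\mathbf{h},\mathbf{j}}s_{\mathbf{h},\mathbf{j}}|e,\mathbf{h}\rangle\langle e,\mathbf{j}|$, this gives $\Pi_{\rm phys}\bigl(|e\rangle\langle e|_1\otimes\sigma_{\overline 1}\bigr)\Pi_{\rm phys}=\frac{1}{|\mathcal{G}|}\sum_{\mathbf{h},\mathbf{j}}s_{\mathbf{h},\mathbf{j}}|\mathbf{h};\mathbf{1}\rangle\langle\mathbf{j};\mathbf{1}|$, i.e.\ the physical-block compression is a faithful copy of $\sigma_{\overline 1}$ up to the scalar $1/|\mathcal{G}|$ and the unitary identification $\mathcal{H}_{\overline 1}\cong\mathcal{H}_{\rm phys}$, $|\mathbf{h}\rangle\mapsto|\mathbf{h};\mathbf{1}\rangle$. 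Now observe that $\Pi_{\rm phys}U=U\Pi_{\rm phys}=\Pi_{\rm phys}$ for every $U\in\mathcal{U}_{\rm sym}$ (since $\mathcal{U}_{\rm sym}$ is a group and $\mathcal{H}_{\rm phys}$ is its fixed subspace), hence $\Pi_{\rm phys}\tau\Pi_{\rm phys}$ depends only on the $\simeq$-class of $\tau$, and likewise $\Pi_{\rm phys}\Pi_{\rm inv}(\tau)\Pi_{\rm phys}=\Pi_{\rm phys}\tau\Pi_{\rm phys}$. Thus $\rho\sim\rho'$, which is equivalent to $\Pi_{\rm inv}(\rho)=\Pi_{\rm inv}(\rho')$, forces $\Pi_{\rm phys}\rho\,\Pi_{\rm phys}=\Pi_{\rm phys}\rho'\Pi_{\rm phys}$; combining with $\rho\simeq|e\rangle\langle e|_1\otimes\sigma_{\overline 1}$ and $\rho'\simeq|e\rangle\langle e|_1\otimes\sigma'_{\overline 1}$ yields $\sum_{\mathbf{h},\mathbf{j}}s_{\mathbf{h},\mathbf{j}}|\mathbf{h};\mathbf{1}\rangle\langle\mathbf{j};\mathbf{1}|=\sum_{\mathbf{h},\mathbf{j}}s'_{\mathbf{h},\mathbf{j}}|\mathbf{h};\mathbf{1}\rangle\langle\mathbf{j};\mathbf{1}|$, so $s_{\mathbf{h},\mathbf{j}}=s'_{\mathbf{h},\mathbf{j}}$ for all $\mathbf{h},\mathbf{j}$, i.e.\ $\sigma_{\overline 1}=\sigma'_{\overline 1}$. (Equivalently, one can test $\rho$ and $\rho'$ against the invariant observables $|\mathbf{h};\mathbf{1}\rangle\langle\mathbf{j};\mathbf{1}|\in\mathcal{A}_{\rm phys}\subset\mathcal{A}_{\rm inv}$ and use $\langle\mathbf{j};\mathbf{1}|e,\mathbf{h}\rangle=\delta_{\mathbf{h},\mathbf{j}}/\sqrt{|\mathcal{G}|}$ to recover $\tr\bigl(|\mathbf{h};\mathbf{1}\rangle\langle\mathbf{j};\mathbf{1}|(|e\rangle\langle e|_1\otimes\sigma_{\overline 1})\bigr)=s_{\mathbf{j},\mathbf{h}}/|\mathcal{G}|$.)

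Finally, the parenthetical ``(and thus every)'' in $(iii)$ is automatic: since $(i)$ and $(ii)$ do not mention $i$, once $(iii)$-for-some-$i$ has been shown equivalent to them it is equivalent to $(iii)$-for-every-$i$; alternatively, conjugating $\sigma_{\overline i}=\sigma'_{\overline i}$ by the universal QRF transformation $V_{i\to j}$ of the Theorem gives $\sigma_{\overline j}=\sigma'_{\overline j}$. I do not anticipate a genuine obstacle: the only non-routine point is recognizing that the $\mathcal{H}_{\rm phys}$-block of $|e\rangle\langle e|_i\otimes\sigma_{\overline i}$ is a faithful image of $\sigma_{\overline i}$ (so that the relational observables suffice), after which the argument is just bookkeeping of the change of basis between $|\mathbf{h}\rangle_{\overline i}$ and $|\mathbf{h};\chi\rangle$, the normalization factors, and the fact that $\sigma$ is supported on the block where particle $i$ sits at $e$.
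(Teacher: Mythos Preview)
Your proof is correct and follows essentially the same route as the paper: the cycle $(iii)\Rightarrow(i)\Rightarrow(ii)\Rightarrow(iii)$, with the substantive step $(ii)\Rightarrow(iii)$ handled by recovering $\sigma_{\overline 1}$ from the invariant content of the aligned state. The only minor difference is that the paper computes the full $\Pi_{\rm inv}$-projection (via~\cite[Theorem 12]{QRF1}) and reads off the coefficients $s_{\mathbf{h},\mathbf{j}}$ from its $\mathcal{A}_{\rm phys}$-block, whereas you go straight to the compression $\Pi_{\rm phys}\,\rho\,\Pi_{\rm phys}$ and note that it is constant on $\sim$-classes because $\Pi_{\rm phys}\Pi_{\rm inv}(\tau)\Pi_{\rm phys}=\Pi_{\rm phys}\tau\Pi_{\rm phys}$; this is marginally more economical since it never invokes the $\chi\neq\mathbf{1}$ sectors of $\mathcal{A}_{\rm inv}$.
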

\begin{proof}
Equivalence of (i) and (iii) follows from Lemma~\ref{LemAlignUniqueness}. Clearly (i) implies (ii). To see that (ii) implies (iii), compute (via~\cite[Theorem 12]{QRF1}) the projection of $\rho\simeq |e\rangle\langle e|_1\otimes\sigma_{\overline 1}$ into the invariant subalgebra
\[
\Pi_{\rm inv}\left(|e\rangle\langle e|_{1}\otimes\sigma_{\overline 1}\right)=\sum_{\mathbf{h},\mathbf{j}} \frac{s_{\mathbf{h},\mathbf{j}}}{|\cg|}|\mathbf{h};\mathbf{1}\rangle\langle\mathbf{j};\mathbf{1}|+\sum_{\mathbf{h}} \frac{s_{\mathbf{h},\mathbf{h}}}{|\cg|}\Pi_{\mathbf{h};\chi\neq\mathbf{1}}
\]
using the notation of the proof of Lemma~\ref{LemAlignUniqueness} and $\Pi_{\mathbf{h};\chi\neq\mathbf{1}}:=\sum_{\chi\neq\mathbf{1}} |\mathbf{h};\chi\rangle\langle\mathbf{h};\chi|$. If $\rho\sim\rho'$ then $\Pi_{\rm inv}(\rho')$ gives the same result. But $\sigma_{\overline 1}$ can clearly be read off from the result of this projection, hence it must be the same for $\rho$ and for $\rho'$.
\end{proof}
We can characterize the alignable states as follows. This generalizes Lemma 21 of Ref.~\cite{QRF1} to mixed states.
\begin{lemma}
\label{LemAlignable}
A state $\rho\in\mathcal{S}(\ch^{\otimes N})$ is alignable if and only if for every choice of pairwise relations $\mathbf{h}$, there is \emph{at most one} classical configuration with these relations that has non-zero overlap with $\rho$.

That is, for every $\mathbf{h}$, there exists at most one $g=g(\mathbf{h})\in\cg$ such that $\langle g,\mathbf{h}g|\rho|g,\mathbf{h}g\rangle\neq 0$.
\end{lemma}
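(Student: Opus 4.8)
The plan is to prove both directions by exploiting the block decomposition $\ch^{\otimes N}=\bigoplus_{\mathbf h\in\cg^{N-1}}\ch_{\mathbf h}$ together with the observation that every $U\in\mathcal{U}_{\rm sym}$ acts \emph{within} each block $\ch_{\mathbf h}$, merely relabelling its basis vectors by $|g,\mathbf h g\rangle\mapsto|g(\mathbf h)g,\mathbf h g(\mathbf h)g\rangle$. Since alignability does not depend on the choice of particle (by the Theorem above), it suffices to work with $i=1$, so ``$\rho$ alignable'' means $\rho\simeq|e\rangle\langle e|_1\otimes\sigma_{\overline 1}$ for some $\sigma_{\overline 1}\in\mathcal{S}(\ch_{\overline 1})$; I will use the identification $|e,\mathbf h\rangle=|e\rangle_1\otimes|\mathbf h\rangle_{\overline 1}$, i.e.\ the subspace $|e\rangle_1\otimes\ch_{\overline 1}$ equals ${\rm span}\{|e,\mathbf h\rangle\mid\mathbf h\in\cg^{N-1}\}$.

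For the ``only if'' direction, suppose $\rho=U\bigl(|e\rangle\langle e|_1\otimes\sigma_{\overline 1}\bigr)U^\dagger$ with $U=\bigoplus_{\mathbf h}U_{g(\mathbf h)}^{\otimes N}\in\mathcal{U}_{\rm sym}$. The diagonal entries of $|e\rangle\langle e|_1\otimes\sigma_{\overline 1}$ in the basis $\{|g,\mathbf h g\rangle\}$ are $\langle g,\mathbf h g|\,\bigl(|e\rangle\langle e|_1\otimes\sigma_{\overline 1}\bigr)\,|g,\mathbf h g\rangle=\delta_{g,e}\,\langle\mathbf h|\sigma_{\overline 1}|\mathbf h\rangle$, so within each block only $g=e$ carries nonzero diagonal weight. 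Applying $U$ permutes the basis of $\ch_{\mathbf h}$ by $g\mapsto g(\mathbf h)g$, hence the only classical configuration in block $\mathbf h$ with nonzero overlap with $\rho$ is $|g(\mathbf h),\mathbf h g(\mathbf h)\rangle$, which is at most one $g$ per $\mathbf h$, as claimed.

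For the ``if'' direction, assume that for each $\mathbf h$ there is at most one $g(\mathbf h)$ with $\langle g,\mathbf h g|\rho|g,\mathbf h g\rangle\neq 0$; when no such element exists, set $g(\mathbf h):=e$. The key step is to upgrade this statement about diagonal entries to a statement about the support of $\rho$: since $\rho\geq 0$, any vanishing diagonal entry $\langle g,\mathbf h g|\rho|g,\mathbf h g\rangle=0$ forces the whole corresponding row and column of $\rho$ to vanish, because $|\langle a|\rho|b\rangle|^2\leq\langle a|\rho|a\rangle\langle b|\rho|b\rangle$. Hence $\rho$ is supported on $V:={\rm span}\{|g(\mathbf h),\mathbf h g(\mathbf h)\rangle\mid\mathbf h\in\cg^{N-1}\}$ (in the empty-block case, the whole block of $\rho$ is killed, so the arbitrary choice $g(\mathbf h)=e$ does no harm). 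Now define $U:=\bigoplus_{\mathbf h}U_{g(\mathbf h)^{-1}}^{\otimes N}\in\mathcal{U}_{\rm sym}$, which maps each $|g(\mathbf h),\mathbf h g(\mathbf h)\rangle$ to $|e,\mathbf h\rangle$; therefore $U\rho U^\dagger$ is supported on ${\rm span}\{|e,\mathbf h\rangle\}=|e\rangle_1\otimes\ch_{\overline 1}$, which means $U\rho U^\dagger=|e\rangle\langle e|_1\otimes\sigma_{\overline 1}$ for some operator $\sigma_{\overline 1}$ on $\ch_{\overline 1}$. Since $U\rho U^\dagger$ is a state, so is $\sigma_{\overline 1}$, and thus $\rho$ is $1$-alignable, hence alignable.

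I do not expect a serious obstacle: the argument is largely bookkeeping on the block structure. The one ingredient that is more than formal manipulation is the positive-semidefiniteness fact used in the ``if'' direction, which is precisely what turns ``at most one nonzero diagonal entry per block'' into ``$\rho$ lives on a single ray of each block''; the overlap hypothesis alone would not otherwise pin down the support of $\rho$.
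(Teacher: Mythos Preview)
Your proof is correct and follows essentially the same route as the paper's: both directions use the block decomposition, the explicit form of $\mathcal U_{\rm sym}$ as relation-conditional translations, and positivity of $\rho$ (you via the Cauchy--Schwarz bound $|\langle a|\rho|b\rangle|^2\le\langle a|\rho|a\rangle\langle b|\rho|b\rangle$, the paper via the equivalent statement that vanishing diagonal entries kill the corresponding coherences) to pin down the support before applying the aligning symmetry. The only cosmetic difference is that you conjugate by $U=\bigoplus_{\mathbf h}U_{g(\mathbf h)^{-1}}^{\otimes N}$ rather than by the adjoint of $\bigoplus_{\mathbf h}U_{g(\mathbf h)}^{\otimes N}$, which amounts to the same map.
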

\begin{proof}
Suppose that $\rho$ is alignable, then it is in particular $1$-alignable. Thus, there is a state $\sigma_{\bar i}=\sum_{\mathbf{h},\mathbf{j}}s_{\mathbf{h},\mathbf{j}}|\mathbf{h}\rangle\langle\mathbf{j}|$ and a symmetry $U\in\mathcal{U}_{\rm sym}$ such that $\rho=U(|e\rangle\langle e|_1\otimes\sigma_{\overline 1})U^\dagger$. But $U=\bigoplus_{\mathbf{h}} U_{g(\mathbf{h})}^{\otimes N}$, hence
\[
   \rho=\sum_{\mathbf{h},\mathbf{j}} s_{\mathbf{h},\mathbf{j}} |g(\mathbf{h}),\mathbf{h}g(\mathbf{h})\rangle\langle g(\mathbf{j}),\mathbf{j} g(\mathbf{j})|
\]
which shows that $\rho$ has the claimed property.

Conversely, suppose that for every $\mathbf{h}$, there is at most one $g=g(\mathbf{h})$ such that $\langle g,\mathbf{h}g|\rho|g,\mathbf{h}g\rangle\neq 0$. For those $\mathbf{h}$ for which there is \emph{none}, choose $g(\mathbf{h})$ arbitrarily, and define the symmetry $U:=\bigoplus_{\mathbf{h}}U_{g(\mathbf{h})}^{\otimes N}$. A priori, every $\rho$ can be written
\[
   \rho=\sum_{g,\mathbf{h},l,\mathbf{j}}r_{(g,\mathbf{h}),(l,\mathbf{j})} |g,\mathbf{h}g\rangle\langle l,\mathbf{j}l|,
\]
but our special condition on $\rho$ tells us that, for fixed $\mathbf{h}$, all but one diagonal element $r_{(g,\mathbf{h}),(g,\mathbf{h})}$ must vanish. Hence, due to $\rho\geq 0$, the corresponding coherences must vanish as well: if $r_{(g',\mathbf{h}),(g',\mathbf{h})}=0$ then $r_{(g',\mathbf{h}),(l,\mathbf{j})}=0$ for all $l,\mathbf{j}$. This shows that $\rho$ has the form
\[
   \rho=\sum_{\mathbf{h},\mathbf{j}} r_{\mathbf{h},\mathbf{j}} |g(\mathbf{h}),\mathbf{h}g(\mathbf{h})\rangle\langle g(\mathbf{j}),\mathbf{j}g(\mathbf{j})|,
\]
where $r_{\mathbf{h},\mathbf{j}}:=r_{(g(\mathbf{h}),\mathbf{h}),(g(\mathbf{j}),\mathbf{j})}$. Hence $U^\dagger \rho U$ is of the form $|e\rangle\langle e|_1\otimes \sigma_{\overline 1}$.
\end{proof}
This has an interesting consequence: while $\cu_{\rm sym}$ clearly preserves the set of alignable states, the latter does not admit a linear structure.
\begin{corollary}\label{cor_nonlinalign}
The set of alignable pure states does \emph{not} constitute a Hilbert space and the set of all alignable states is not convex. Hence, the set of alignable states does not comprise the set of density matrices over some Hilbert space. 
\end{corollary}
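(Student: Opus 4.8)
The plan is to refute both linearity and convexity with a single pair of explicit examples, feeding everything through the characterization in Lemma~\ref{LemAlignable}. Since $\cg$ is nontrivial, fix $a\in\cg$ with $a\neq e$ and consider the two classical configurations in which all $N$ particles coincide, $\mathbf{g}=(e,e,\ldots,e)$ and $\mathbf{j}=(a,a,\ldots,a)$. Both carry the \emph{same} pairwise relations $\mathbf{h}=(e,\ldots,e)$: in the notation $(g,\mathbf{h}g)$ they read $(e,\mathbf{h}e)$ and $(a,\mathbf{h}a)$ with one and the same $\mathbf{h}$. Each of the basis vectors $|\mathbf{g}\rangle,|\mathbf{j}\rangle$ is a single classical configuration, so for every choice of relations there is at most one $g$ with non-zero overlap; hence both are alignable by Lemma~\ref{LemAlignable}.

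Next I would form the normalized superposition $|\phi\rangle:=(|\mathbf{g}\rangle+|\mathbf{j}\rangle)/\sqrt2$ and the equal mixture $\rho:=\tfrac12|\mathbf{g}\rangle\langle\mathbf{g}|+\tfrac12|\mathbf{j}\rangle\langle\mathbf{j}|$, and observe that for the relation $\mathbf{h}=(e,\ldots,e)$ the two distinct configurations $(e,\mathbf{h}e)$ and $(a,\mathbf{h}a)$ both have non-zero overlap: $|\langle e,\mathbf{h}e|\phi\rangle|^2=|\langle a,\mathbf{h}a|\phi\rangle|^2=\tfrac12$, and likewise $\langle e,\mathbf{h}e|\rho|e,\mathbf{h}e\rangle=\langle a,\mathbf{h}a|\rho|a,\mathbf{h}a\rangle=\tfrac12$. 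By Lemma~\ref{LemAlignable}, neither $|\phi\rangle\langle\phi|$ nor $\rho$ is alignable. Thus the alignable pure vectors are not closed under taking (normalized) linear combinations, so they do not form a linear subspace of $\ch^{\otimes N}$ and a fortiori do not ``constitute a Hilbert space''; and the alignable states are not closed under convex combinations, i.e.\ not convex.

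Finally I would close the argument: the set of density operators on any Hilbert space is convex (being the convex hull of its rank-one projectors), so a set that fails to be convex\,---\,as the alignable states have just been shown to\,---\,cannot coincide with the full density-matrix set of any Hilbert space. I do not anticipate a genuine obstacle here; the counterexample is essentially forced by Lemma~\ref{LemAlignable}, and the only points deserving a line of care are checking that the two chosen configurations really share \emph{all} pairwise relations (immediate, since in both all particles sit at the same site) and phrasing the last implication at the right level of generality (non-convexity alone already excludes a density-matrix description, with no need to track dimensions).
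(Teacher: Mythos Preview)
Your proof is correct and follows essentially the same route as the paper: both arguments use Lemma~\ref{LemAlignable} to exhibit two alignable states sharing a relation $\mathbf{h}$ but sitting at different group elements, so that their superposition (resp.\ mixture) violates the ``at most one $g$'' criterion. Your choice of the two basis vectors $|e,\ldots,e\rangle$ and $|a,\ldots,a\rangle$ is simply the most explicit instance of the paper's more generically phrased construction, and your final line deducing non-density-matrix structure from non-convexity spells out what the paper leaves implicit.
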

\begin{proof}
Suppose $\ket{\psi}=\sum_{\mathbf{h}\in\cg^{N-1}}\alpha_\mathbf{h}\ket{g_\mathbf{h},\mathbf{h}g_\mathbf{h}}$ and $\ket{\psi'}=\sum_{\mathbf{h}\in\cg^{N-1}}\beta_\mathbf{h}\ket{g'_\mathbf{h},\mathbf{h}g'_\mathbf{h}}$ are two alignable states with $g_\mathbf{h}\neq g'_\mathbf{h}$ for at least one $\mathbf{h}\in\cg^{N-1}$ for which both $\alpha_\mathbf{h},\beta_\mathbf{h}\neq0$. It is clear that $a\ket{\psi}+b\ket{\psi'}$ violates the condition of Lemma~\ref{LemAlignable} for $a,b\neq0$. Similarly, one shows that convex combinations of alignable states are not in general alignable.
\end{proof}

By contrast, the set of \emph{aligned} states $\ket{e}_i\otimes\ket{\psi}_{\overline i}$ certainly generates the Hilbert space $\ket{e}_i\otimes\ch_{\overline i}$, which, however, is not invariant under $\cu_{\rm sym}$. It is instead invariant under unitaries of the form $\mathbf{1}_{i}\otimes U_{\overline i}$, where $U_{\overline i}$ is any unitary on $\ch_{\overline i}$, but these are not symmetry transformations. This fact will be at the heart of why alignable states do not admit any symmetry-preserving dynamics which can lead to non-trivial transition amplitudes between distinct interparticle relations.

\subsection{The relational trace}
In Ref.~\cite{QRF1}, we have introduced a replacement of the partial trace for $\cg$-systems: the relational trace. Starting point is an enigma that has been termed the ``paradox of the third particle''~\cite{Angelo}. Consider a $\cg$-system for the cyclic group $\cg=\mathbb{Z}_n$ where $n$ is large, and $N=2$ particles in the state
\[
   |\psi\rangle=\frac 1 {\sqrt{2}}\left(\strut |-a\rangle_1|b\rangle_2 + e^{i\theta} |a\rangle_1|-b\rangle_2\right),
\]
where $0<a,b\ll n$, and $\theta\in\mathbb{R}$. According to Lemma~\ref{LemAlignable}, this state is alignable, and so is the $3$-particle state
\[
   |\Psi\rangle=|\psi\rangle\otimes |c\rangle_3,
\]
where $0\leq c \ll n$. Now consider the following question: \emph{Is the phase $\theta$ relevant for an observer constrained by the quantum symmetries, if that observer has only access to particles $1$ and $2$?} At first sight, it seems as if observers without access to particle $3$ should hold a state that is given by the partial trace over the third particle, which is in this case $|\psi\rangle\langle\psi|$. Observers constrained by the symmetries of the $\cg$-system can physically only measure the projection of this state into the invariant subalgebra, $\Pi_{\rm inv}(|\psi\rangle\langle\psi|)$. Computing this directly turns out to give us an expression that depends on $\theta$ (\cite[Eq.\ (13)]{QRF1}). Hence the answer seems to be \emph{yes}: the phase $\theta$ \emph{is} relevant for such observers.

On the other hand, such observers cannot distinguish states $|\Psi\rangle$ from states $|\Psi'\rangle=U|\Psi\rangle$ that are related by a quantum symmetry $U\in\mathcal{U}_{\rm sym}$. Since $|\Psi\rangle$ is alignable, we can apply the symmetry map that transforms into the reference frame of particle $1$, and obtain
\[
   |\Psi'\rangle=\frac{|0\rangle_1}{\sqrt{2}}\left(\strut
      |a+b\rangle_2|a+c\rangle_3+e^{i\theta} |-a-b\rangle_2|-a+c\rangle_3
   \right).
\]
We should equally well be able to take the partial trace over the third particle in this representation. However, since this form describes particles $2$ and $3$ as maximally entangled, this leads to a mixed state, and the phase $\theta$ disappears. Paradoxically, it now seems as if the answer was \emph{no}: the phase $\theta$ is not relevant for such observers that have only access to particles $1$ and $2$.

How can this apparent paradox be resolved? The answer is that the usual partial trace is inappropriate to describe reduced states for such observers. To see how to replace it, recall why we usually apply the partial trace in the first place. In the standard case of, say, three qubits, observables $X_{12}$ of two qubits are standardly embedded in the algebra of three-qubit observables via
\[
   \Phi(X_{12}):=X_{12}\otimes\mathbf{1}.
\]
This map preserves all the structure of the observables: it is a unital $*$-homomorphism, i.e. $\Phi(\mathbf{1})=\mathbf{1}$, $\Phi(X^\dagger)=\Phi(X)^\dagger$, and $\Phi(XY)=\Phi(X)\Phi(Y)$; henceforth, we shall call a map with these properties a \emph{unital embedding}. If we now have a state $\rho_{123}$ of the three particles, then its reduction $\rho_{12}$ to the first two particles should give us the corresponding two-particle expectation values,
\[
   {\rm tr}(\rho_{12}X_{12})\stackrel ! = {\rm tr}(\rho_{123}\Phi(X_{12})) \quad\mbox{for all }X_{12},
\]
from which it follows that $\rho_{12}={\rm Tr}_3\rho_{123}$. In other words, the partial trace is the Hilbert-Schmidt adjoint $\Phi^\dagger$ of the embedding $\Phi$.

In the case of $\cg$-systems, not all observables, but only the invariant or relational ones are physically relevant (depending on whether one is interested in describing purifications of states, see Lemma~\ref{LemPurification}). To find the analog of the partial trace, we have to find the correct map $\Phi$ that embeds two into three particles (or, more generally, $N$ into $N+M$ particles). Since joining two particle groups introduces additional invariant or relational observables that cannot be obtained from those of the individual groups alone (there are non-trivial intergroup relations), this is the finite-dimensional analog of gluing two subregions in gauge theories and inquiring about how to embed subregion gauge-invariant observables into the algebra of gauge-invariant observables of the glued region (which contains more information than the subregion observables) \cite{Casini:2013rba,Donnelly:2016auv,Geiller:2019bti,Gomes:2018shn,Riello:2021lfl,Wieland:2017zkf,Wieland:2017cmf,Freidel:2020xyx,CH1}.

Since the paradox of the third particle involves alignable states, we are in particular interested in those invariant operators which arise as the \emph{invariant parts of alignable observables}, i.e.\ $\Pi_{\rm inv}(|e\rangle\langle e|_1\otimes X_{\overline 1})$. In fact, as shown in~\cite[Lemma 27]{QRF1}, we can define a natural unital embedding $\Phi$ with the following prescription:

\emph{Map the invariant part of $|e\rangle\langle e|_1\otimes X_{\overline 1}$ to the invariant part of $|e\rangle\langle e|_1\otimes X_{\overline 1}\otimes\mathbf{1}^{(M)}$.}

It is not obvious, but can be shown that this defines a valid unital embedding -- not of the full invariant $N$-particle subalgebra $\ca_{\rm inv}^{(N)}$, but of the subalgebra $\ca_{\rm alg}^{(N)}\subset\ca_{\rm inv}^{(N)}$ that is generated by the operators $\Pi_{\rm inv}(|e\rangle\langle e|_1\otimes X_{\overline 1})$ (``alg'' stands for ``invariant parts of \textbf{al}i\textbf{g}nable states''~\footnote{Note that the operators $\Pi_{\rm inv}(|e\rangle\langle e|_i\otimes X_{\bar i})$ generate the same subalgebra $\ca_{\rm alg}^{(N)}$ for every $i$, i.e.\ the definition of $\ca_{\rm alg}^{(N)}$ is independent of the choice of particle which is used as a reference. This is because $\Pi_{\rm inv}(U\bullet U^\dagger)=\Pi_{\rm inv}(\bullet)$ for all $U\in\mathcal{U}_{\rm sym}$.}). The resulting embedding map $\Phi$ respects the quantum symmetries $U\in\mathcal{U}_{\rm sym}$: starting with $U|e\rangle\langle e|_1\otimes X_{\overline 1}U^\dagger$ instead of $|e\rangle\langle e|_1\otimes X_{\overline 1}$ yields the same result, since both observables have the same invariant part. However, in contrast to the \emph{result of applying} $\Phi$, the very \emph{definition} of $\Phi$ is implicitly constructed \emph{relative to the first particle}. This can be seen as follows. As shown in Ref.~\cite[Lemma 28]{QRF1}, for every $U\in\mathcal{U}_{\rm sym}$, the following prescription defines a valid unital embedding of $\ca_{\rm alg}^{(N)}$ into $\ca_{\rm alg}^{(N+M)}$:

\emph{Map the invariant part of $U|e\rangle\langle e|_1\otimes X_{\overline 1}U^\dagger$ to the invariant part of $U|e\rangle\langle e|_1\otimes X_{\overline 1}U^\dagger\otimes\mathbf{1}^{(M)}$.}

For different $U$, we obtain unital embeddings that are in general inequivalent. In other words, \emph{the choice of QRF matters when we take the tensor product}. In Ref.~\cite{QRF1}, we give a thorough physical analysis of this inequivalence: different choices of embeddings correspond to different operational prescriptions for how to access the first $N$ of the $N+M$ particles.

Can we find an embedding whose definition is manifestly independent of any choice of QRF? The answer turns out to be \emph{yes, as long as we restrict ourselves to relational observables, i.e.\ to $\ca_{\rm phys}$}:

\begin{lemma}[Ref.~\cite{QRF1}, Lemma 31]
The map $\Phi:\ca_{\rm phys}^{(N)}\to \ca_{\rm phys}^{(N+M)}$, defined as
\[
   \Phi(X):=\Pi_{\rm phys}^{(N+M)}\left(X\otimes\mathbf{1}^{(M)}\right)\Pi_{\rm phys}^{(N+M)}
\]
is an embedding, but it is not unital. 
\end{lemma}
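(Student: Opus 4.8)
The plan is to verify the two assertions separately: first that $\Phi$ is a $*$-homomorphism that preserves products, then that it fails to be unital. For the algebraic part, I would work in the character basis. Recall that $\Pi_{\rm phys}^{(N+M)}=\frac{1}{|\cg|}\sum_{g\in\cg}U_g^{\otimes(N+M)}$, and that $\ca_{\rm phys}^{(N)}$ consists exactly of the operators supported on $\ch_{\rm phys}^{(N)}={\rm span}\{|\mathbf{h};\mathbf{1}\rangle\}$. The key structural observation is that for $X\in\ca_{\rm phys}^{(N)}$, the operator $X\otimes\mathbf{1}^{(M)}$ already commutes with every global translation $U_g^{\otimes(N+M)}$ — because $X$ commutes with $U_g^{\otimes N}$ on $\ch^{\otimes N}$ (as $X\in\ca_{\rm phys}^{(N)}\subset\ca_{\rm inv}^{(N)}$) and $\mathbf{1}^{(M)}$ trivially commutes with $U_g^{\otimes M}$. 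Hence $\Pi_{\rm phys}^{(N+M)}$ acts on $X\otimes\mathbf{1}^{(M)}$ simply by compression to the $\chi=\mathbf{1}$ sector in the global-translation decomposition, and the two flanking projectors collapse: $\Phi(X)=\Pi_{\rm phys}^{(N+M)}(X\otimes\mathbf{1}^{(M)})$. From this, $\Phi(X^\dagger)=\Phi(X)^\dagger$ is immediate by self-adjointness of $\Pi_{\rm phys}^{(N+M)}$. For multiplicativity, $\Phi(X)\Phi(Y)=\Pi_{\rm phys}^{(N+M)}(X\otimes\mathbf 1)\Pi_{\rm phys}^{(N+M)}(Y\otimes\mathbf 1)$; since $X\otimes\mathbf 1$ commutes with $\Pi_{\rm phys}^{(N+M)}$, this equals $\Pi_{\rm phys}^{(N+M)}(XY\otimes\mathbf 1)\Pi_{\rm phys}^{(N+M)}=\Phi(XY)$, using $(\Pi_{\rm phys}^{(N+M)})^2=\Pi_{\rm phys}^{(N+M)}$ and $(X\otimes\mathbf 1)(Y\otimes\mathbf 1)=XY\otimes\mathbf 1$. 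This gives the embedding property. One should also check $\Phi(X)\in\ca_{\rm phys}^{(N+M)}$, i.e.\ that the image lands in the physical subalgebra and not merely in $\ca_{\rm inv}^{(N+M)}$: this follows because $\Pi_{\rm phys}^{(N+M)}(X\otimes\mathbf 1)\Pi_{\rm phys}^{(N+M)}$ is by construction supported on $\ch_{\rm phys}^{(N+M)}$.

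For the non-unitality, the cleanest route is an explicit computation of $\Phi(\Pi_{\rm phys}^{(N)})$, which is the identity element of the algebra $\ca_{\rm phys}^{(N)}$. I would show that $\Phi(\Pi_{\rm phys}^{(N)})=\Pi_{\rm phys}^{(N+M)}(\Pi_{\rm phys}^{(N)}\otimes\mathbf 1^{(M)})\Pi_{\rm phys}^{(N+M)}\neq\Pi_{\rm phys}^{(N+M)}$, the latter being the identity of $\ca_{\rm phys}^{(N+M)}$. The most transparent way is a rank (trace) argument: compute $\tr(\Phi(\Pi_{\rm phys}^{(N)}))$ versus $\tr(\Pi_{\rm phys}^{(N+M)})=\dim\ch_{\rm phys}^{(N+M)}=|\cg|^{N+M-1}$. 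Using $\tr(\Pi_{\rm phys}^{(N+M)}(\Pi_{\rm phys}^{(N)}\otimes\mathbf 1^{(M)}))=\frac{1}{|\cg|}\sum_{g}\tr((\Pi_{\rm phys}^{(N)}U_g^{\otimes N})\otimes U_g^{\otimes M})$, and noting $\tr(\Pi_{\rm phys}^{(N)}U_g^{\otimes N})=\dim\ch_{\rm phys}^{(N)}$ only for $g=e$ (and is smaller otherwise, e.g.\ one can evaluate it as $\frac{1}{|\cg|}\sum_{h}\tr(U_{hg}^{\otimes N})=\ldots$), while $\tr(U_g^{\otimes M})=|\cg|^M\delta_{g,e}$... — actually the factor $\tr(U_g^{\otimes M})=0$ for $g\neq e$ already kills all but the $g=e$ term, giving $\tr(\Phi(\Pi_{\rm phys}^{(N)}))=\frac{1}{|\cg|}\dim\ch_{\rm phys}^{(N)}\cdot|\cg|^M=|\cg|^{N+M-2}$, which is strictly less than $|\cg|^{N+M-1}$ since $|\cg|\geq 2$. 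Hence $\Phi(\Pi_{\rm phys}^{(N)})$ has strictly smaller trace than $\Pi_{\rm phys}^{(N+M)}$, so they cannot be equal and $\Phi$ is not unital.

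The main obstacle is getting the commutation observation and the resulting simplification $\Phi(X)=\Pi_{\rm phys}^{(N+M)}(X\otimes\mathbf 1^{(M)})$ cleanly justified — in particular being careful that $X\in\ca_{\rm phys}^{(N)}$ really does commute with $U_g^{\otimes N}$ for \emph{all} $g$ (this is exactly the content of $\ca_{\rm phys}\subset\ca_{\rm inv}$ together with the fact that global translations lie in $\cu_{\rm sym}$, via Eq.~(\ref{eqPiPhys})), and that tensoring with $\mathbf 1^{(M)}$ does not spoil this. Everything else is bookkeeping with the projector formula in Eq.~(\ref{eqPiPhys}) and character orthogonality. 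A secondary subtlety is confirming that the identity of $\ca_{\rm phys}^{(N)}$ is indeed $\Pi_{\rm phys}^{(N)}$ and not $\mathbf 1^{(N)}$ — but this is immediate since $\ca_{\rm phys}^{(N)}$ by definition consists of operators \emph{fully supported} on $\ch_{\rm phys}^{(N)}$, whose unit is the projector onto that subspace; the failure of $\Phi$ to map this unit to the unit of the target algebra is precisely the asserted non-unitality.
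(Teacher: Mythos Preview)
Your argument is correct and rests on the same key observation as the paper's: for $X\in\ca_{\rm phys}^{(N)}$ one has $U_g^{\otimes N}X=X$ for all $g\in\cg$, so $X\otimes\mathbf{1}^{(M)}$ commutes with $\Pi_{\rm phys}^{(N+M)}$, and multiplicativity and $*$-preservation follow. The paper (via Ref.~\cite{QRF1}) pushes this one step further to the explicit closed form $\Phi(X)=X\otimes\Pi_{\rm phys}^{(M)}$, obtained by writing $\Pi_{\rm phys}^{(N+M)}(X\otimes\mathbf{1}^{(M)})=\frac{1}{|\cg|}\sum_g(U_g^{\otimes N}X)\otimes U_g^{\otimes M}=X\otimes\frac{1}{|\cg|}\sum_g U_g^{\otimes M}$. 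From this formula, both the embedding property and the non-unitality are immediate (the latter because $\Pi_{\rm phys}^{(N)}\otimes\Pi_{\rm phys}^{(M)}$ has rank $|\cg|^{N+M-2}<|\cg|^{N+M-1}=\mathrm{rank}\,\Pi_{\rm phys}^{(N+M)}$), which is exactly the trace comparison you arrive at after your self-correction. So your route and the paper's are essentially the same; the only difference is that the paper records the compact identity $\Phi(X)=X\otimes\Pi_{\rm phys}^{(M)}$, which makes both conclusions a one-liner and also makes injectivity manifest. One minor cleanup: your aside that ``$\tr(\Pi_{\rm phys}^{(N)}U_g^{\otimes N})=\dim\ch_{\rm phys}^{(N)}$ only for $g=e$'' is false (it holds for \emph{all} $g$, since $U_g^{\otimes N}\Pi_{\rm phys}^{(N)}=\Pi_{\rm phys}^{(N)}$), but as you note this is irrelevant because the factor $\tr(U_g^{\otimes M})$ already vanishes for $g\neq e$.
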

As shown in Ref.~\cite{QRF1}, this map can also be written $\Phi(X)=X\otimes\Pi_{\rm phys}^{(M)}$, from which it becomes obvious that it is multiplicative and preserves the adjoint. There, it is also shown that the analogous constructions for $\ca_{\rm inv}^{(N)}$ and $\ca_{\rm alg}^{(N)}$ do \emph{not} yield valid embeddings.

From this embedding, we obtain the corresponding generalization of the partial trace:
\begin{lemma}[Relational trace (Ref.~\cite{QRF1}, Eq.~(29))]
Define the \emph{relational trace} ${\rm Trel}_{(M)}:\mathcal{L}(\mathcal{H}^{\otimes (M+N)})\to\ca^{(N)}_{\rm phys}$ as
\[
   {\rm Trel}_{(M)}:=\hat\Pi^{(N)}_{\rm phys}\circ {\rm Tr}_{(M)}\circ\hat\Pi^{(N+M)}_{\rm phys},
\]
where ${\rm Tr}_{(M)}$ is the usual partial trace, and $\hat\Pi_{\rm phys}^{(N)}(\sigma):=\Pi_{\rm phys}^{(N)}\sigma \Pi_{\rm phys}^{(N)}$. Then, for every state $\rho^{(N+M)}\in\mathcal{L}(\mathcal{H}^{\otimes (M+N)})$, setting $\rho^{(N)}:={\rm Trel}_{(M)}\rho^{(N+M)}$ satisfies
\[
   {\rm tr}(\rho^{(N+M)}\Phi(X))={\rm tr}(\rho^{(N)}X)\mbox{ for all }X\in\ca_{\rm phys}^{(N)}.
\]
\end{lemma}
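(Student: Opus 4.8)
The plan is to verify the defining adjointness relation $\tr(\rho^{(N+M)}\Phi(X))=\tr(\rho^{(N)}X)$ directly from the definitions of $\Phi$ and of the relational trace, using the fact that $\Pi_{\rm phys}$ is a Hermitian projector so that $\hat\Pi_{\rm phys}$ is self-adjoint with respect to the Hilbert-Schmidt inner product. The key observation is that $\Phi(X)$ is, by the preceding lemma, already fully supported on $\mathcal{H}_{\rm phys}^{(N+M)}$, i.e.\ $\hat\Pi_{\rm phys}^{(N+M)}(\Phi(X))=\Phi(X)$; indeed $\Phi(X)=\Pi_{\rm phys}^{(N+M)}(X\otimes\mathbf{1}^{(M)})\Pi_{\rm phys}^{(N+M)}$ by construction. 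This lets us freely insert $\hat\Pi_{\rm phys}^{(N+M)}$ into the trace on the left-hand side.

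Concretely, I would start from $\tr(\rho^{(N+M)}\Phi(X))$ and write $\Phi(X)=\hat\Pi_{\rm phys}^{(N+M)}(X\otimes\mathbf{1}^{(M)})$ (using that the outer projectors in $\Phi(X)$ can be absorbed since $X\otimes\mathbf{1}^{(M)}$ already commutes appropriately, or simply that $\Phi(X)=X\otimes\Pi_{\rm phys}^{(M)}$ as noted after the embedding lemma, which makes the support statement transparent). Then use self-adjointness of $\hat\Pi_{\rm phys}^{(N+M)}$ under the trace pairing, $\tr(A\,\hat\Pi_{\rm phys}^{(N+M)}(B))=\tr(\hat\Pi_{\rm phys}^{(N+M)}(A)\,B)$, to move the projector onto $\rho^{(N+M)}$, obtaining $\tr\!\big(\hat\Pi_{\rm phys}^{(N+M)}(\rho^{(N+M)})\,(X\otimes\mathbf{1}^{(M)})\big)$. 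Now apply the standard partial-trace adjunction $\tr_{(N+M)}(Y\,(X\otimes\mathbf{1}^{(M)}))=\tr_{(N)}(({\rm Tr}_{(M)}Y)\,X)$ with $Y=\hat\Pi_{\rm phys}^{(N+M)}(\rho^{(N+M)})$, which yields $\tr\!\big(({\rm Tr}_{(M)}\circ\hat\Pi_{\rm phys}^{(N+M)})(\rho^{(N+M)})\,X\big)$. Finally, since $X\in\ca_{\rm phys}^{(N)}$ is supported on $\mathcal{H}_{\rm phys}^{(N)}$, we may insert $\hat\Pi_{\rm phys}^{(N)}$ in front of the left factor for free, turning the left factor into exactly $\hat\Pi_{\rm phys}^{(N)}\circ{\rm Tr}_{(M)}\circ\hat\Pi_{\rm phys}^{(N+M)}(\rho^{(N+M)})={\rm Trel}_{(M)}\rho^{(N+M)}=\rho^{(N)}$, as claimed.

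The only genuinely substantive point — and the step I would be most careful about — is justifying that the outer projectors appearing in the definition of $\Phi$ and in $\hat\Pi_{\rm phys}$ can be shuffled around as above, i.e.\ that no coherences between $\mathcal{H}_{\rm phys}$ and its complement survive to spoil the identities. This is most cleanly handled by invoking the identity $\Phi(X)=X\otimes\Pi_{\rm phys}^{(M)}$ from Ref.~\cite{QRF1}: with this form, $\tr\big(\rho^{(N+M)}(X\otimes\Pi_{\rm phys}^{(M)})\big)$ is manipulated by the ordinary partial trace over the last $M$ particles to give $\tr\big(({\rm Tr}_{(M)}(\rho^{(N+M)}(\mathbf{1}^{(N)}\otimes\Pi_{\rm phys}^{(M)})))\,X\big)$, after which one inserts $\Pi_{\rm phys}^{(N+M)}$ using that $X\otimes\Pi_{\rm phys}^{(M)}$ is already $\Pi_{\rm phys}^{(N+M)}$-invariant (because $\Pi_{\rm phys}^{(N+M)}=\frac{1}{|\cg|}\sum_g U_g^{\otimes(N+M)}$ commutes with, and is absorbed by, this operator), and finally insert $\Pi_{\rm phys}^{(N)}$ since $X\in\ca_{\rm phys}^{(N)}$. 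Everything else is bookkeeping with the cyclicity of the trace and the self-adjointness of Hermitian projectors, so the proof is short once this support/commutation point is pinned down.
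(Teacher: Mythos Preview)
Your argument is correct: the chain
\[
\tr\!\big(\rho^{(N+M)}\Phi(X)\big)
=\tr\!\big(\hat\Pi_{\rm phys}^{(N+M)}(\rho^{(N+M)})\,(X\otimes\mathbf{1}^{(M)})\big)
=\tr\!\big(({\rm Tr}_{(M)}\circ\hat\Pi_{\rm phys}^{(N+M)})(\rho^{(N+M)})\,X\big)
=\tr\!\big(\rho^{(N)}X\big)
\]
goes through exactly as you describe, using only cyclicity of the trace (which gives the Hilbert--Schmidt self-adjointness of $\hat\Pi_{\rm phys}$), the standard partial-trace adjunction, and the fact that $X=\Pi_{\rm phys}^{(N)}X\Pi_{\rm phys}^{(N)}$ for $X\in\ca_{\rm phys}^{(N)}$. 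Your first route is already complete and clean; the second paragraph via $\Phi(X)=X\otimes\Pi_{\rm phys}^{(M)}$ is not needed and in fact just re-derives the same identity.

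Note that the paper does not supply a proof of this lemma here; it is quoted from Ref.~\cite{QRF1}. Your derivation is precisely the Hilbert--Schmidt-adjoint computation that the surrounding text motivates (``the partial trace is the Hilbert--Schmidt adjoint $\Phi^\dagger$ of the embedding $\Phi$''), so it matches the intended argument.
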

Note that the relational trace is \emph{trace non-increasing}, but not in general trace-preserving. This is due to the fact that states of $N+M$ particles that are fully relational (i.e.\ supported on $\ca_{\rm phys}^{(N+M)}$) do not in general have fully relational local reduced states; if there is non-trivial intergroup relational data in the state, the relational trace will project it out \emph{besides} tracing over the group of $M$ particles. Even though the relational trace will then yield \emph{subnormalized} states, these will still give us the correct reduced expectation values for all relational observables of the group of $N$ particles only.

When we interpret the ``paradox of the third particle'' scenario as having two particles embedded into three via the manifestly relational $\Phi$, then the relational trace yields a definite answer to the question of the beginning of this section: \emph{yes, the phase $\theta$ is relevant for such observers}, since it appears non-trivially in the expression ${\rm Trel}_3 |\Psi\rangle\langle\Psi|={\rm Trel}_3|\Psi'\rangle\langle\Psi'|$; see Ref.~\cite{QRF1} for the details.

The existence of the relational trace for $\ca_{\rm phys}$, and the non-existence of an analogous notion for $\ca_{\rm inv}$ or $\ca_{\rm alg}$, can be seen as a motivation to resort to the \emph{relational} states on $\ch_{\rm phys}$ for the description of relational quantum physics, rather than to alignable states. We will now further explore the relation between the two. In particular, in the next section, we will show that the two are kinematically equivalent.

\section{Kinematic equivalence of relational and alignable states}
\label{SecRelAl}

In Subsection~\ref{SubsecAlignable}, we have introduced a class of states for $\mathcal{G}$-systems which have a natural representation ``relative to the $i$th particle'': the alignable states. External observers who do not share a common reference frame can obtain a common description of such states by simply agreeing to describe the state relative to one of the particles. These states transform non-trivially under symmetry transformations. We have also discussed the class of quantum states that is supported on the subspace $\mathcal{H}_{\rm phys}$: the \emph{relational states}. These, by contrast, are invariant, i.e.\ $U|\psi\rangle=|\psi\rangle$ for all $U\in\mathcal{U}_{\rm sym}$ and $\ket{\psi}\in\mathcal{H}_{\rm phys}$. Thus, the representation of these states is the same relative to all external frames, and external observers do not need a shared reference frame in the first place to understand each other's descriptions of such states. The same is true for purifications of mixed states on $\ch_{\rm phys}$ as shown in Lemma~\ref{LemPurification}.

Though these two types of states thus seem \emph{a priori} very different, there is an equivalence between them. More precisely, as shown in~\cite[Lemma 22]{QRF1}, if we have an alignable state $\rho$ with
\[
   \rho\sim \sum_{\mathbf{h},\mathbf{j}\in\mathcal{G}^{N-1}} r_{\mathbf{h},\mathbf{j}} |e,\mathbf{h}\rangle\langle e,\mathbf{j}|
\]
then its projection into the physical subspace is
\begin{equation}
   \hat\Pi_{\rm phys}(\rho):=\Pi_{\rm phys}\rho\Pi_{\rm phys}=\sum_{\mathbf{h},\mathbf{j}\in\mathcal{G}^{N-1}} \frac{r_{\mathbf{h},\mathbf{j}}}{|\mathcal{G}|} |\mathbf{h};\mathbf{1}\rangle\langle\mathbf{j};\mathbf{1}|,
   \label{eqProj}
\end{equation}
and this gives us a one-to-one correspondence between the physical states and the alignable states as represented relative to any one of the particles. Furthermore, all external-frame-independent information of $\rho$ is contained in its projection $ \Pi_{\rm phys}\rho\Pi_{\rm phys}$ onto the physical subspace~\cite[Lemma 22]{QRF1}. Hence, the algebra $\ca_{\rm phys}$ of relational observables is tomographically complete for the invariant information in alignable states.

This suggests that the same set of (external-frame-independent) physical scenarios can be equivalently described via alignable and relational states, and that the choice is purely conventional. In this section, we shall demonstrate explicitly that this is indeed the case \emph{kinematically}. However, in the next section, we shall then explain why this kinematical equivalence of alignable and relational states is in general not dynamically stable, depending on what sort of symmetry-preserving dynamics the total $N$-particle system might be subjected to.

More concretely, in this section we shall:
\begin{itemize}
\item[(i)] establish the explicit reduction transformations mapping relational into alignable states and their inverses,
\item[(ii)] use these same maps to reversibly transform between relational and alignable observables, and
\item[(iii)] use these reduction maps to prove equivalence of the QRF transformations derived in Ref.~\cite{QRF1} with the `quantum coordinate transformations' of the perspective-neutral approach \cite{Vanrietvelde,Vanrietvelde2,Hoehn:2018aqt,Hoehn:2018whn,Hoehn:2019owq,Hoehn:2020epv,Hoehn:2021wet,periodic}.
\end{itemize}
This will manifest that relational states and observables are simply (a representation of) the symmetry equivalence classes of alignable states and observables. In particular, specifically aligned states can be viewed as gauge-fixed reductions of the gauge-invariant relational states, while the latter are the coherent group averages of the differently aligned states. This is ultimately the reason why we can regard the relational states as \emph{perspective-neutral states}: they encode all the internal QRF perspectives\,---\,hence are `frame perspective neutral'\,---\,and provide the link between the latter.
Moreover, since the QRF transformations constructed in Refs.~\cite{Giacomini,Hamette} are equivalent to the ones derived in Ref.~\cite{QRF1} (for finite Abelian groups), result (iii) will in turn also establish equivalence of the former with the quantum coordinate transformations of the perspective-neutral approach, thereby linking three \emph{a priori} different formulations of QRF transformations.\footnote{Equivalence between the quantum coordinate transformations of the perspective-neutral approach and the QRF transformations of Ref.~\cite{Giacomini} has previously been shown for the continuum translation group in Refs.~\cite{Vanrietvelde,Vanrietvelde2}, while equivalence between the QRF transformations of Refs.~\cite{Giacomini} and \cite{Hamette} for the same group was demonstrated in Ref.~\cite{Hamette}. Here, we reveal the explicit equivalence between the transformations in the perspective-neutral approach and the ones in Ref.~\cite{Hamette} for finite groups. Equivalence of the QRF transformations for general groups of Ref.~\cite{Hamette} with those of the perspective-neutral approach will be demonstrated in Ref.~\cite{all}.}

In order to establish these various equivalences, we shall adapt the method of Refs.~\cite{Hoehn:2019owq,Hoehn:2020epv,periodic} to finite Abelian groups. To this end, we slightly generalize our previous description (and that in Ref.~\cite{Hamette}): rather than aligning or describing states and observables relative to the reference frame particle only being ``in the origin $e$", we will include descriptions relative to the frame particle being ``in orientation $g\in\cg$". This will permit us to formulate two unitarily equivalent reductions from relational to alignable states: a Schr\"odinger picture (or Page-Wootters) reduction and a Heisenberg picture (or quantum symmetry) reduction.

\subsection{Relational observables}

Suppose we choose subsystem $i$ as the reference system. We would like to construct a relational observable that, loosely speaking, encodes the question `what is the value of some observable $f_{\overline{i}}\in\mathcal{L}(\ch_{\overline{i}})$ on the composite system $\overline{i}$ of interest when the reference system $i$ is in orientation $g$?'. In particular, this relational observable shall be contained in $\ca_{\rm phys}$.\footnote{In continuous systems, relational Dirac observables are typically defined also ``off-shell'' of the constraint surface, i.e.\ as elements of $\ca_{\rm kin}$ rather than $\ca_{\rm phys}$, which means as incoherently rather than coherently group-averaged observables in contrast to here, e.g.\ see Refs.~\cite{Hoehn:2019owq,Hoehn:2020epv,periodic,Hoehn:2021wet,all,Chataignier,Chataignier2,Chataignier3} within the context of constraint quantization and Refs.~\cite{Bartlett,Loveridge2017,Loveridge2018,Miyadera} within the context of quantum information theory and quantum foundations. In constraint quantization, one is usually interested in their actions on $\ch_{\rm phys}$, and since the two group-averaging procedures agree on this space \cite{QRF1}, the choice of definition then does not make a difference. Here, for notational simplicity, we restrict their definition \emph{ab initio} to $\ca_{\rm phys}$ as some of their algebraic properties only hold when acting on $\ch_{\rm phys}$. This permits us to write all their algebraic relations without additional restrictions.}  We define it as the projection of aligned observables into $\ca_{\rm phys}$ (cf.\ \cite{Hoehn:2019owq,Hoehn:2020epv,periodic}):
\begin{definition}
The \emph{relativization map} relative to reference system $i\in\{1,\ldots,N\}$ being in orientation $g$ is
\begin{eqnarray}
F_{\bullet}^{(i)}(g):\mathcal{L}(\ch_{\overline{i}})&\rightarrow&\ca_{\rm phys}\nn\\
f_{\overline{i}}&\mapsto& F_{f_{\overline{i}}}^{(i)}(g)
:=|\mathcal{G}|\cdot \hat\Pi_{\rm phys}(|g\rangle\langle g|_i \otimes f_{\overline i}).\quad\quad\strut\label{homo}
\end{eqnarray}
\end{definition}
\noindent The normalization factor $|\cg|$ turns out to be necessary for the following lemma to hold. It follows directly from Eq.~(\ref{eqProj}) and Eq.~(\ref{homo}) (see also~\cite[Theorem 25]{QRF1}):
\begin{lemma}
The relationalization maps $F_{\bullet}^{(i)}(g)$ are algebra homomorphisms:
\[
F^{(i)}_{a_{\overline{i}}+b_{\overline{i}}\cdot c_{\overline{i}}}(g)=F^{(i)}_{a_{\overline{i}}}(g)+F^{(i)}_{b_{\overline{i}}}(g)\cdot F^{(i)}_{c_{\overline{i}}}(g).
\]
\end{lemma}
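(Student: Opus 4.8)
The plan is to exhibit $F^{(i)}_\bullet(g)$ as conjugation by a fixed unitary (precomposed with a harmless relabelling), from which the stated identity is immediate; linearity in the argument is anyway clear from Eq.~(\ref{homo}) because $\hat\Pi_{\rm phys}$ is linear, so the only content beyond $F^{(i)}_{a_{\overline i}+b_{\overline i}c_{\overline i}}(g)$ being linear is the multiplicative part $F^{(i)}_{b_{\overline i}c_{\overline i}}(g)=F^{(i)}_{b_{\overline i}}(g)\,F^{(i)}_{c_{\overline i}}(g)$.

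First I would reduce to $i=1$, $g=e$. For general $i$ this is just a relabelling of the tensor factors: the permutation unitary swapping particles $1$ and $i$ commutes with $\Pi_{\rm phys}$ (since $\ch_{\rm phys}$, $\ca_{\rm phys}$ and the distinguished basis are defined through the permutation-symmetric operators $U_g^{\otimes N}$), so conjugating Eq.~(\ref{homo}) by it turns $F^{(i)}_\bullet(g)$ into $F^{(1)}_\bullet(g)$ up to a fixed $*$-isomorphism $\mathcal L(\ch_{\overline i})\cong\mathcal L(\ch_{\overline 1})$. For general orientation $g$ I would use $|g\rangle\langle g|_i=U_g|e\rangle\langle e|_iU_g^\dagger$ to write $|g\rangle\langle g|_i\otimes f_{\overline i}=U_g^{\otimes N}(|e\rangle\langle e|_i\otimes\alpha_g(f_{\overline i}))U_{g^{-1}}^{\otimes N}$, where $\alpha_g(X):=U_{g^{-1}}^{\otimes(N-1)}XU_g^{\otimes(N-1)}$ is a $*$-automorphism of $\mathcal L(\ch_{\overline i})$; then, because $\Pi_{\rm phys}U_g^{\otimes N}=U_g^{\otimes N}\Pi_{\rm phys}=\Pi_{\rm phys}$ (reindex $g'\mapsto g'g$ in $\Pi_{\rm phys}=|\cg|^{-1}\sum_{g'}U_{g'}^{\otimes N}$), the $U_g^{\otimes N}$ factors get absorbed and $F^{(i)}_{f_{\overline i}}(g)=F^{(i)}_{\alpha_g(f_{\overline i})}(e)$. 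Since composing a homomorphism with $\alpha_g$ (or with the permutation isomorphism) stays a homomorphism, it is enough to treat $i=1$, $g=e$.

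For $i=1$, $g=e$: expand $f_{\overline 1}=\sum_{\mathbf h,\mathbf j\in\cg^{N-1}}f_{\mathbf h,\mathbf j}\,|\mathbf h\rangle\langle\mathbf j|$ in the position basis of $\ch_{\overline 1}$, so that $|e\rangle\langle e|_1\otimes f_{\overline 1}=\sum_{\mathbf h,\mathbf j}f_{\mathbf h,\mathbf j}\,|e,\mathbf h\rangle\langle e,\mathbf j|$. The one computational input is the identity $\Pi_{\rm phys}|e,\mathbf h\rangle=|\cg|^{-1/2}|\mathbf h;\mathbf 1\rangle$, which is immediate from the definitions of $\Pi_{\rm phys}$ and of $|\mathbf h;\mathbf 1\rangle$ with $\chi=\mathbf 1$, and which is precisely the purely linear-algebraic content of Eq.~(\ref{eqProj}). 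It gives $\hat\Pi_{\rm phys}(|e\rangle\langle e|_1\otimes f_{\overline 1})=|\cg|^{-1}\sum_{\mathbf h,\mathbf j}f_{\mathbf h,\mathbf j}\,|\mathbf h;\mathbf 1\rangle\langle\mathbf j;\mathbf 1|$, so the normalization factor $|\cg|$ in Eq.~(\ref{homo}) produces exactly $F^{(1)}_{f_{\overline 1}}(e)=\sum_{\mathbf h,\mathbf j}f_{\mathbf h,\mathbf j}\,|\mathbf h;\mathbf 1\rangle\langle\mathbf j;\mathbf 1|=W f_{\overline 1}W^\dagger$, where $W:\ch_{\overline 1}\to\ch_{\rm phys}$, $|\mathbf h\rangle\mapsto|\mathbf h;\mathbf 1\rangle$, is unitary (it maps the orthonormal basis $\{|\mathbf h\rangle\}$ onto the orthonormal basis $\{|\mathbf h;\mathbf 1\rangle\}$ of $\ch_{\rm phys}$), i.e.\ as an isometry into $\ch^{\otimes N}$ it satisfies $W^\dagger W=\mathbf 1_{\overline 1}$ and $WW^\dagger=\Pi_{\rm phys}$. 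Conjugation $X\mapsto WXW^\dagger$ is additive and, using $W^\dagger W=\mathbf 1_{\overline 1}$, multiplicative, $W(bc)W^\dagger=(WbW^\dagger)(WcW^\dagger)$; it is in fact a $*$-isomorphism of $\mathcal L(\ch_{\overline 1})$ onto $\ca_{\rm phys}$, which yields the claimed identity.

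I do not expect a genuine obstacle; the only things needing care are (i) noting that Eq.~(\ref{eqProj}), stated there for states, is really the identity $\Pi_{\rm phys}|e,\mathbf h\rangle\langle e,\mathbf j|\Pi_{\rm phys}=|\cg|^{-1}|\mathbf h;\mathbf 1\rangle\langle\mathbf j;\mathbf 1|$ and thus applies to the (generally non-positive, non-normalized) operators $f_{\overline i}$, and (ii) tracking the $|\cg|$ prefactor so that $F^{(i)}_\bullet(g)$ comes out as conjugation by a genuine unitary rather than by a rescaled map\,---\,this is exactly where that prefactor in Eq.~(\ref{homo}) is needed.
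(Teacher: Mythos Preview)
Your proof is correct and follows essentially the same approach as the paper, which simply points to Eq.~(\ref{eqProj}) and Eq.~(\ref{homo}); you unpack this by writing $F^{(1)}_{f_{\overline 1}}(e)=Wf_{\overline 1}W^\dagger$ with $W|\mathbf h\rangle=|\mathbf h;\mathbf 1\rangle$, which is precisely what the paper later formalizes as the Schr\"odinger reduction map $\calr^{-1}$ (see Definition~\ref{def_Sred} and Lemma~\ref{lem_obs}). Your reductions to $i=1$ and $g=e$ via the permutation symmetry and the automorphism $\alpha_g$ are a clean way to organize the general case, though one could equally well run the basis computation directly for arbitrary $i,g$.
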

Relational observables thus preserve the algebraic structure of the observables of the subsystem $\overline{i}$ of interest.

\subsection{Reducing to a frame perspective the Page-Wootters way}
\label{SubsecPageWootters}

We would now like to construct a `quantum coordinate map' from the physical Hilbert space $\ch_{\rm phys}$ to the Hilbert space $\ch_{\overline i}$, i.e.\ a map that gives us the description of the remaining systems $\overline i$ relative to system $i$. We will describe two unitarily equivalent such maps, beginning here with a reduction into a \emph{relational Schr\"odinger picture} relative to $i$, adapting the method of Refs.~\cite{Hoehn:2019owq,Hoehn:2020epv,periodic} to the finite Abelian group case. This can be achieved by conditioning  the relational states, which we henceforth call perspective-neutral states as $i$ is arbitrary, onto the `gauge-fixing condition' $g_i=g$ in extension of the Page-Wootters formalism for quantum clocks \cite{Page}:
\begin{definition}\label{def_Sred}
We define the \emph{Schr\"odinger reduction map} $\mathcal{R}_{\mathbf{S},i}(g):\mathcal{H}_{\rm phys}\to\mathcal{H}_{\overline i}$ from perspective-neutral states to the description of $\overline{i}$ relative to frame $i\in\{1,\ldots,N\}$  in orientation $g$ via
\ba
\calr_{\mathbf{S},i}(g):=\sqrt{|\cg|}\bra{g}_i\otimes \mathbf{1}_{\overline{i}}.\label{redmap}
\ea
As a simple example and for later reference, we set $\calr:=\calr_{\mathbf{S},1}(e)$.
\end{definition}
To get a glimpse on what these maps do, note that $\calr|\mathbf{h};\mathbf{1}\rangle=|\mathbf{h}\rangle$: that is, $\calr$ maps the relational state where systems $\overline 1$ hold relation $\mathbf{h}$ with system $1$ to a state of $\overline 1$ that literally \emph{is} in configuration $\mathbf{h}$ (implicitly, thus, moving system $1$ into the origin $e$). For alignable states, the map $\calr_{\mathbf{S},i}(g)$ ``reverses'' the projection onto the physical subspace and translates the result by the group element $g$. To see this, consider some (normalized) alignable state $|\psi\rangle\simeq |e\rangle_i\otimes|\varphi\rangle_{\overline i}$, and define $|\psi\rangle_{\rm phys}:=\sqrt{|\mathcal{G}|}\Pi_{\rm phys}|\psi\rangle$, where the prefactor is chosen such that the resulting state is normalized. Using~\cite[Lemma 11]{QRF1}, we obtain
\begin{eqnarray}
\mathcal{R}_{\mathbf{S},i}(g)|\psi\rangle_{\rm phys}&=& \frac 1 {\sqrt{|\mathcal{G}|}}\mathcal{R}_{\mathbf{S},i} (g)\sum_{g'\in\mathcal{G}} U_{g'}^{\otimes N}|e\rangle_i\otimes|\varphi\rangle_{\overline i}\nn\\
&=& \langle g|_i\otimes\mathbf{1}_{\overline i}\sum_{g'\in\mathcal{G}} |g'\rangle_i\otimes U_{g'}^{\otimes(N-1)} |\varphi\rangle_{\overline i}\nn\\
&=&|\varphi(g)\rangle_{\overline i},\nn
\end{eqnarray}
where 
\[
|\varphi(g)\rangle_{\overline i}:=U_{g}^{\otimes(N-1)} |\varphi\rangle_{\overline i}.
\]
This conditional state satisfies the covariance property $\ket{\varphi(g)}_{\overline{i}}=U_{gg'^{-1}}^{\otimes(N-1)}\ket{\varphi(g')}_{\overline{i}}$ in analogy to the Schr\"odinger evolution of the Page-Wootters formalism \cite{Page}, which is why we call the reduced formulation the relational Schr\"odinger picture.

It will be convenient to consider the inverse of the Schr\"odinger reduction map:
\begin{lemma}\label{lem_invred}
Every Schr\"odinger reduction map $\mathcal{R}_{\mathbf{S},i}(g)$ is unitary, and the inverse $\mathcal{R}_{\mathbf{S},i}^{-1}(g):\mathcal{H}_{\overline i}\to\mathcal{H}_{\rm phys}$ can be written
\begin{equation}
\calr_{\mathbf{S},i}^{-1}(g)=\sqrt{|\cg|}\,\Pi_{\rm phys}\left(\ket{g}_i\otimes \mathbf{1}_{\overline{i}}\right).\label{invredmap}	
\end{equation}
\end{lemma}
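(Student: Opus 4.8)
The plan is to prove that $\calr_{\mathbf{S},i}(g)=\sqrt{|\cg|}\,\bra{g}_i\otimes\mathbf{1}_{\overline i}$ is a unitary from $\ch_{\rm phys}$ onto $\ch_{\overline i}$ and to identify its inverse with the operator in Eq.~(\ref{invredmap}). The natural strategy is to exhibit the inverse directly on the orthonormal basis $\{|\mathbf{h};\mathbf{1}\rangle\}$ of $\ch_{\rm phys}$ and the basis $\{|\mathbf{h}\rangle\}$ of $\ch_{\overline i}$, which for simplicity I would first do for $i=1$, $g=e$ (where $\calr=\calr_{\mathbf{S},1}(e)$) and then note the general case follows by relabelling the reference particle and composing with the global translation $U_g^{\otimes N}$. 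First I would compute $\calr\ket{\mathbf{h};\mathbf{1}} = \sqrt{|\cg|}\cdot\frac{1}{\sqrt{|\cg|}}\sum_{g'\in\cg}\bra{e}g'\rangle_1\otimes\ket{\mathbf{h}g'}_{\overline 1} = \ket{\mathbf{h}}_{\overline 1}$, since only $g'=e$ survives; hence $\calr$ maps an orthonormal basis of $\ch_{\rm phys}$ bijectively onto an orthonormal basis of $\ch_{\overline 1}$, so it is unitary, and its inverse sends $\ket{\mathbf{h}}_{\overline 1}\mapsto\ket{\mathbf{h};\mathbf{1}}$.

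Next I would check that the claimed formula $\sqrt{|\cg|}\,\Pi_{\rm phys}(\ket{g}_i\otimes\mathbf{1}_{\overline i})$ does exactly this. For $i=1$, $g=e$: apply it to $\ket{\mathbf{h}}_{\overline 1}$, getting $\sqrt{|\cg|}\,\Pi_{\rm phys}\ket{e,\mathbf{h}}$. Using the second expression for $\Pi_{\rm phys}$ in Eq.~(\ref{eqPiPhys}), namely $\Pi_{\rm phys}=\frac1{|\cg|}\sum_{g'\in\cg}U_{g'}^{\otimes N}$, this becomes $\frac1{\sqrt{|\cg|}}\sum_{g'}\ket{g',\mathbf{h}g'} = \ket{\mathbf{h};\mathbf{1}}$ by the definition of $\ket{\mathbf{h};\mathbf{1}}$ with $\chi=\mathbf{1}$. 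So the operator in Eq.~(\ref{invredmap}) agrees with $\calr^{-1}$ on a basis of $\ch_{\overline 1}$, and since both are linear maps $\ch_{\overline 1}\to\ch_{\rm phys}$, they coincide. To reduce the general $(i,g)$ case to this one I would use that $\calr_{\mathbf{S},i}(g) = \calr_{\mathbf{S},i}(e)\,(U_{g^{-1}}\otimes\mathbf{1}_{\overline i}) $ restricted appropriately (the translation on the reference slot), or equivalently relabel particle $i$ as particle $1$ via the obvious permutation unitary, which preserves $\ch_{\rm phys}$ and the basis structure; then $\calr_{\mathbf{S},i}(g)^{-1} = (U_{g}\otimes\mathbf{1})\,\calr_{\mathbf{S},i}(e)^{-1}$, and expanding $\calr_{\mathbf{S},i}(e)^{-1}=\sqrt{|\cg|}\,\Pi_{\rm phys}(\ket{e}_i\otimes\mathbf{1})$ together with $U_g^{\otimes N}\Pi_{\rm phys}=\Pi_{\rm phys}$ gives $\sqrt{|\cg|}\,\Pi_{\rm phys}(\ket{g}_i\otimes\mathbf{1}_{\overline i})$ after absorbing the translation on $\ch_{\overline i}$.

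Alternatively, and perhaps more cleanly, I would verify the two identities $\calr_{\mathbf{S},i}(g)\,\calr_{\mathbf{S},i}^{-1}(g) = \mathbf{1}_{\overline i}$ and $\calr_{\mathbf{S},i}^{-1}(g)\,\calr_{\mathbf{S},i}(g) = \Pi_{\rm phys}$ by direct operator manipulation: the first is $|\cg|\,(\bra{g}_i\otimes\mathbf{1}_{\overline i})\,\Pi_{\rm phys}\,(\ket{g}_i\otimes\mathbf{1}_{\overline i})$, which upon inserting $\Pi_{\rm phys}=\frac1{|\cg|}\sum_{g'}U_{g'}^{\otimes N}$ collapses to the $g'=e$ term $\bra{g}g\rangle_i\,\mathbf{1}_{\overline i}=\mathbf{1}_{\overline i}$; the second follows because $\calr_{\mathbf{S},i}^{-1}(g)\calr_{\mathbf{S},i}(g)$ is manifestly supported on $\ch_{\rm phys}$, is self-adjoint, and squares to itself (using the first identity), hence is a projection onto a subspace of $\ch_{\rm phys}$, and it acts as the identity on the basis $\ket{\mathbf{h};\mathbf{1}}$ by the basis computation above, so it equals $\Pi_{\rm phys}$. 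I do not anticipate a genuine obstacle here; the only point requiring a little care is the normalization constant $\sqrt{|\cg|}$ and keeping track of which of the two expressions in Eq.~(\ref{eqPiPhys}) for $\Pi_{\rm phys}$ is being used, and ensuring that the translation-covariance of $\Pi_{\rm phys}$ is invoked correctly when passing from $g=e$ to general $g$.
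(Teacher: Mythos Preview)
Your proposal is correct. Your ``alternative'' route---computing $\calr_{\mathbf{S},i}(g)\calr_{\mathbf{S},i}^{-1}(g)=\mathbf{1}_{\overline i}$ by inserting $\Pi_{\rm phys}=\frac1{|\cg|}\sum_{g'}U_{g'}^{\otimes N}$---is exactly what the paper does; the paper then concludes $\calr_{\mathbf{S},i}^{-1}(g)\calr_{\mathbf{S},i}(g)=\mathbf{1}_{\rm phys}$ from the dimension equality $\dim\ch_{\overline i}=\dim\ch_{\rm phys}$ rather than from your idempotent-plus-basis argument, and verifies unitarity via $\Pi_{\rm phys}\calr^\dagger\calr\,\Pi_{\rm phys}=\Pi_{\rm phys}$. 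Your primary route (showing $\calr$ carries the orthonormal basis $\{|\mathbf{h};\mathbf{1}\rangle\}$ to $\{|\mathbf{h}\rangle\}$) is a more explicit variant that yields unitarity and the inverse formula in one stroke; it is slightly more concrete than the paper's operator manipulation and has the advantage that unitarity is immediate.

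One small point: your reduction of the general $(i,g)$ case to $(1,e)$ is a bit muddled as written. The identity $\calr_{\mathbf{S},i}(g)=\calr_{\mathbf{S},i}(e)(U_{g^{-1}}\otimes\mathbf{1}_{\overline i})$ holds on all of $\ch^{\otimes N}$, but $U_{g^{-1}}\otimes\mathbf{1}_{\overline i}$ does not preserve $\ch_{\rm phys}$, so you cannot simply invert the composition there. The clean relation on $\ch_{\rm phys}$ is $\calr_{\mathbf{S},i}(g)=U_g^{\otimes(N-1)}\calr_{\mathbf{S},i}(e)$ (use $U_{g^{-1}}^{\otimes N}|_{\ch_{\rm phys}}=\mathbf{1}$), whence $\calr_{\mathbf{S},i}(g)^{-1}=\calr_{\mathbf{S},i}(e)^{-1}U_{g^{-1}}^{\otimes(N-1)}=\sqrt{|\cg|}\,\Pi_{\rm phys}(\ket{e}_i\otimes U_{g^{-1}}^{\otimes(N-1)})=\sqrt{|\cg|}\,\Pi_{\rm phys}(\ket{g}_i\otimes\mathbf{1}_{\overline i})$, the last step using $\Pi_{\rm phys}U_{g^{-1}}^{\otimes N}=\Pi_{\rm phys}$. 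Alternatively, just run your basis computation or your alternative operator computation directly for arbitrary $(i,g)$; both work without modification, which is what the paper does.
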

\begin{proof}
To prove $\mathcal{R}_{\mathbf{S},i}(g)\mathcal{R}_{\mathbf{S},i}^{-1}(g)=\mathbf{1}_{\overline i}$, replace $\Pi_{\rm phys}$ by the right-hand side of Eq.~(\ref{eqPiPhys}). Since $\dim \mathcal{H}_{\overline i}=\dim \mathcal{H}_{\rm phys}$, we must also have $\mathcal{R}_{\mathbf{S},i}^{-1}(g)\mathcal{R}_{\mathbf{S},i}(g)=\mathbf{1}_{\rm phys}$. Note that $\calr_{\mathbf{S},i}(g)^\dagger \calr_{\mathbf{S},i}(g) = |\cg|\,|g\rangle\langle g|_i\otimes\mathbf{1}_{\overline i}$ as operators on $\ch^{\otimes N}$. Using the twirling identity again, it follows that $\Pi_{\rm phys}\calr_{\mathbf{S},i}(g)^\dagger \calr_{\mathbf{S},i}(g)\Pi_{\rm phys}=\Pi_{\rm phys}$, which is equivalent to $\langle \calr_{\mathbf{S},i}(g)\varphi,\calr_{\mathbf{S},i}(g)\psi\rangle=\langle\varphi,\psi\rangle$ for all $|\varphi\rangle,|\psi\rangle\in\ch_{\rm phys}$, the definition of unitarity.
\end{proof}
The reduction maps allow us to rewrite the relational observables in a convenient manner. The following two lemmas are adaptations of results in Refs.~\cite{Hoehn:2019owq,Hoehn:2020epv,periodic}.
\begin{lemma}\label{lem_obs}
The conjugation of a relational observable with the Schr\"odinger reduction map relative to frame $i$ yields the corresponding $\overline{i}$-observable, i.e.
\ba
\calr_{\mathbf{S},i}(g)\,F_{f_{\overline{i}}}^{(i)}(g)\,\calr_{\mathbf{S},i}^{-1}(g)=f_{\overline{i}}.\nn
\ea
Since the Schr\"odinger reduction map is unitary, this defines a one-to-one correspondence between the relational observables and the $\overline i$-observables. 
\end{lemma}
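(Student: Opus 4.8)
The plan is to compute the left-hand side directly by substituting the explicit form of the relational observable from Eq.~(\ref{homo}) together with the Schr\"odinger reduction map from Eq.~(\ref{redmap}) and its inverse from Eq.~(\ref{invredmap}), then using the twirling identity Eq.~(\ref{eqPiPhys}) to collapse the group averages. Concretely, $F_{f_{\overline i}}^{(i)}(g)=|\mathcal{G}|\,\Pi_{\rm phys}\left(|g\rangle\langle g|_i\otimes f_{\overline i}\right)\Pi_{\rm phys}$, and $\calr_{\mathbf{S},i}(g)=\sqrt{|\mathcal{G}|}\,\langle g|_i\otimes\mathbf{1}_{\overline i}$, $\calr_{\mathbf{S},i}^{-1}(g)=\sqrt{|\mathcal{G}|}\,\Pi_{\rm phys}\left(|g\rangle_i\otimes\mathbf{1}_{\overline i}\right)$. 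So the conjugation becomes $|\mathcal{G}|^2\,\left(\langle g|_i\otimes\mathbf{1}_{\overline i}\right)\Pi_{\rm phys}\left(|g\rangle\langle g|_i\otimes f_{\overline i}\right)\Pi_{\rm phys}\Pi_{\rm phys}\left(|g\rangle_i\otimes\mathbf{1}_{\overline i}\right)$, and since $\Pi_{\rm phys}$ is a projection the two middle copies merge into one.

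The next step is to evaluate $\left(\langle g|_i\otimes\mathbf{1}_{\overline i}\right)\Pi_{\rm phys}$. Writing $\Pi_{\rm phys}=\frac{1}{|\mathcal{G}|}\sum_{g'\in\mathcal{G}}U_{g'}^{\otimes N}$ and noting $\langle g| U_{g'} |g''\rangle=\langle g|g'g''\rangle=\delta_{g'',g'^{-1}g}$, we get $\left(\langle g|_i\otimes\mathbf{1}_{\overline i}\right)\Pi_{\rm phys}=\frac{1}{|\mathcal{G}|}\sum_{g'}\langle g'^{-1}g|_i\otimes U_{g'}^{\otimes(N-1)}$. I would then push this through the operator $|g\rangle\langle g|_i\otimes f_{\overline i}$ from the left and, symmetrically, compute $\Pi_{\rm phys}\left(|g\rangle_i\otimes\mathbf{1}_{\overline i}\right)=\frac{1}{|\mathcal{G}|}\sum_{g''}|g''^{-1}g\rangle_i\otimes U_{g''}^{\otimes(N-1)}$ on the right. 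The $\langle g|g''^{-1}g\rangle$ and $\langle g'^{-1}g|g\rangle$ Kronecker deltas force $g'=g''=e$ on the index that survives contraction with $|g\rangle\langle g|_i$, killing the double group sum down to a single term. After accounting for the $|\mathcal{G}|^2$ prefactor against the two factors of $1/|\mathcal{G}|$ from the two reduction maps (and the fact that $F$ itself carried a compensating $|\mathcal{G}|$ that was absorbed when the three $\Pi_{\rm phys}$'s collapsed to one), the $\overline i$-part that remains is exactly $f_{\overline i}$, acting on $\mathbf{1}_{\overline i}$ on both sides, i.e.\ $f_{\overline i}$.

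Alternatively — and this is probably cleaner to write — I would invoke Lemma~\ref{lem_obs}'s sibling structure: since $\calr_{\mathbf{S},i}(g)$ is unitary (Lemma~\ref{lem_invred}) and $\calr_{\mathbf{S},i}^{-1}(g)=\sqrt{|\mathcal{G}|}\,\Pi_{\rm phys}(|g\rangle_i\otimes\mathbf{1}_{\overline i})$, one has $\calr_{\mathbf{S},i}^{-1}(g)\,f_{\overline i}\,\calr_{\mathbf{S},i}(g)=|\mathcal{G}|\,\Pi_{\rm phys}(|g\rangle_i\otimes\mathbf{1}_{\overline i})\,f_{\overline i}\,(\langle g|_i\otimes\mathbf{1}_{\overline i})=|\mathcal{G}|\,\Pi_{\rm phys}(|g\rangle\langle g|_i\otimes f_{\overline i})$, which is not quite $F_{f_{\overline i}}^{(i)}(g)$ but differs from it only by the trailing projector; applying $\calr_{\mathbf{S},i}(g)$ on the left and $\calr_{\mathbf{S},i}^{-1}(g)$ on the right, and using $\calr_{\mathbf{S},i}(g)\,\Pi_{\rm phys}(|g\rangle\langle g|_i\otimes f_{\overline i})\,\Pi_{\rm phys}$ versus $\calr_{\mathbf{S},i}(g)\,\Pi_{\rm phys}(|g\rangle\langle g|_i\otimes f_{\overline i})$ — these agree because $\calr_{\mathbf{S},i}(g)\Pi_{\rm phys}=\calr_{\mathbf{S},i}(g)$ on the range (it is just the statement that $\calr_{\mathbf{S},i}(g)$ is already defined on $\ch_{\rm phys}$), hence the extra $\Pi_{\rm phys}$ is harmless — one lands on $f_{\overline i}$. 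The one-to-one correspondence is then immediate since conjugation by a unitary is a bijection.

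The main obstacle, and the point deserving care in the write-up, is the bookkeeping of the various $\Pi_{\rm phys}$ insertions and the normalization factors $|\mathcal{G}|$ and $\sqrt{|\mathcal{G}|}$: one must verify that the three projectors appearing in $\calr\, F\, \calr^{-1}$ genuinely collapse — that $F_{f_{\overline i}}^{(i)}(g)$ already lives in $\ca_{\rm phys}$ so its flanking projectors are idempotent with the ones coming from $\calr^{-1}$ and from $\calr$ acting on $\ch_{\rm phys}$ — and that the surviving scalar is exactly $1$, not $|\mathcal{G}|^{\pm1}$. I expect no conceptual difficulty beyond this, since everything reduces to the orthogonality of characters encoded in the twirl $\Pi_{\rm phys}=\frac{1}{|\mathcal{G}|}\sum_g U_g^{\otimes N}$ and the elementary identity $\langle g|U_{g'}|g''\rangle=\delta_{g'',g'^{-1}g}$.
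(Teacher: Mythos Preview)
Your proposal is correct and follows essentially the same route as the paper: substitute the explicit forms of $\calr_{\mathbf{S},i}(g)$, $F_{f_{\overline i}}^{(i)}(g)$, and $\calr_{\mathbf{S},i}^{-1}(g)$, merge the adjacent copies of $\Pi_{\rm phys}$, expand via the twirl $\Pi_{\rm phys}=\frac{1}{|\cg|}\sum_{g'}U_{g'}^{\otimes N}$, and let the Kronecker deltas from $\langle g|U_{g'}|g\rangle$ collapse the group sums to the identity term. The paper's write-up is a bit more economical (it evaluates the left twirl against $|g\rangle\langle g|_i$ first, reducing to $|\cg|(\langle g|_i\otimes f_{\overline i})\Pi_{\rm phys}(|g\rangle_i\otimes\mathbf{1}_{\overline i})$, then the right twirl), but the content is the same; your normalization bookkeeping is fine, and the minor sign in your expansion $\Pi_{\rm phys}(|g\rangle_i\otimes\mathbf{1}_{\overline i})=\frac{1}{|\cg|}\sum_{g''}|g''g\rangle_i\otimes U_{g''}^{\otimes(N-1)}$ (rather than $|g''^{-1}g\rangle$) does not affect the conclusion.
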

\begin{proof}
Using Eq.~\eqref{eqPiPhys} for $\Pi_{\rm phys}$, the statement follows from
\ba
\calr_{\mathbf{S},i}(g)\,F_{f_{\overline{i}}}^{(i)}(g)\,\calr_{\mathbf{S},i}^{-1}(g)&=&|\cg|\,(\bra{g}_i\otimes {\mathbf{1}}_{\overline{i}})\sum_{g'\in\cg} U_{g'}^{\otimes N}\nn\\&&\times(\ket{g}\!\bra{g}_i\otimes f_{\overline{i}})
\, \Pi_{\rm phys}(\ket{g}_i\otimes \mathbf{1}_{\overline{i}})\nn \\
&=& |\cg|(\langle g|_i\otimes f_{\overline i})\Pi_{\rm phys}(|g\rangle_i\otimes\mathbf{1}_{\overline i})\nn\\
&=&f_{\overline i}.\nn
\ea
\vskip -1.6em
\end{proof}
This translates into an equivalence of expectation values:
\begin{lemma}\label{lem_expec}
Let $\ket{\psi(g)}_{\overline{i}}=\calr_{\mathbf{S},i}(g)\ket{\psi}_{\rm phys}$ be the state of $\overline{i}$ when reference system $i$ is in orientation $g$ and similarly for $\ket{\phi(g)}_{\overline{i}}$.
Then $\braket{\psi_{\rm phys}|\,F_{f_{\overline{i}}}^{(i)}(g)\,|\phi_{\rm phys}}=\braket{\psi(g)_{\overline{i}}|\,f_{\overline{i}}\,|\phi(g)_{\overline{i}}}$.
\end{lemma}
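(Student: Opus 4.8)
The plan is to reduce the statement to the operator identity already established in Lemma~\ref{lem_obs}. Recall that by definition $\ket{\psi(g)}_{\overline{i}}=\calr_{\mathbf{S},i}(g)\ket{\psi}_{\rm phys}$ and $\ket{\phi(g)}_{\overline{i}}=\calr_{\mathbf{S},i}(g)\ket{\phi}_{\rm phys}$, and that $\calr_{\mathbf{S},i}(g)$ is unitary with inverse $\calr_{\mathbf{S},i}^{-1}(g)$ by Lemma~\ref{lem_invred}. So the obvious move is to insert $\mathbf{1}_{\rm phys}=\calr_{\mathbf{S},i}^{-1}(g)\,\calr_{\mathbf{S},i}(g)$ on both sides of $F_{f_{\overline{i}}}^{(i)}(g)$ inside the physical inner product.

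Concretely, I would write
\ba
\braket{\psi_{\rm phys}|\,F_{f_{\overline{i}}}^{(i)}(g)\,|\phi_{\rm phys}}
&=&\braket{\psi_{\rm phys}|\,\calr_{\mathbf{S},i}^{-1}(g)\,\calr_{\mathbf{S},i}(g)\,F_{f_{\overline{i}}}^{(i)}(g)\,\calr_{\mathbf{S},i}^{-1}(g)\,\calr_{\mathbf{S},i}(g)\,|\phi_{\rm phys}}\nn\\
&=&\braket{\calr_{\mathbf{S},i}(g)\psi_{\rm phys}|\,\big(\calr_{\mathbf{S},i}(g)\,F_{f_{\overline{i}}}^{(i)}(g)\,\calr_{\mathbf{S},i}^{-1}(g)\big)\,|\calr_{\mathbf{S},i}(g)\phi_{\rm phys}}\nn\\
&=&\braket{\psi(g)_{\overline{i}}|\,f_{\overline{i}}\,|\phi(g)_{\overline{i}}},\nn
\ea
where the second line uses that $\calr_{\mathbf{S},i}(g)$ is unitary (so $\calr_{\mathbf{S},i}^{-1}(g)=\calr_{\mathbf{S},i}(g)^\dagger$, hence moving it across the inner product is legitimate), and the third line substitutes the conjugation identity of Lemma~\ref{lem_obs}, $\calr_{\mathbf{S},i}(g)\,F_{f_{\overline{i}}}^{(i)}(g)\,\calr_{\mathbf{S},i}^{-1}(g)=f_{\overline{i}}$, together with the definitions of $\ket{\psi(g)}_{\overline{i}}$ and $\ket{\phi(g)}_{\overline{i}}$.

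There is essentially no obstacle here; the lemma is a direct corollary of Lemma~\ref{lem_obs} and the unitarity in Lemma~\ref{lem_invred}. The only point requiring a word of care is that $F_{f_{\overline{i}}}^{(i)}(g)\in\ca_{\rm phys}$ is supported on $\ch_{\rm phys}$, so that $F_{f_{\overline{i}}}^{(i)}(g)\ket{\phi_{\rm phys}}\in\ch_{\rm phys}$ and the insertion of $\calr_{\mathbf{S},i}^{-1}(g)\calr_{\mathbf{S},i}(g)=\mathbf{1}_{\rm phys}$ to its left is valid on that subspace; this is already guaranteed by the codomain of the relativization map in Eq.~\eqref{homo}. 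Hence the proof is just the three-line computation above, and I would present it essentially verbatim.
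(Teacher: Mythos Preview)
Your proof is correct and follows essentially the same approach as the paper: both rely on the unitarity of $\calr_{\mathbf{S},i}(g)$ from Lemma~\ref{lem_invred} and the conjugation identity of Lemma~\ref{lem_obs}. The paper substitutes $\ket{\psi_{\rm phys}}=\calr_{\mathbf{S},i}^{-1}(g)\ket{\psi(g)}_{\overline{i}}$ directly rather than inserting $\calr_{\mathbf{S},i}^{-1}(g)\calr_{\mathbf{S},i}(g)=\mathbf{1}_{\rm phys}$, but this is only a cosmetic difference in presentation.
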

\begin{proof}
Using definition~(\ref{homo}), we get
\ba
&&\braket{\psi_{\rm phys}|\,F_{f_{\overline{i}}}^{(i)}(g)\,|\phi_{\rm phys}}\nn\\
&&\q\q\q=\langle\calr_{\mathbf{S},i}(g)^{-1}\psi(g)_{\overline i}|F_{f_{\overline i}}^{(i)}(g)|\calr_{\mathbf{S},i}(g)^{-1}\phi(g)_{\overline i}\rangle\nn,
\ea
and then the result follows from the unitarity of $\calr_{\mathbf{S},i}(g)$ and Lemma~\ref{lem_obs}.
\end{proof}
In other words, the expectation values of the relational observables on the perspective-neutral Hilbert space $\ch_{\rm phys}$ are equivalent to the expectation values of the $\overline{i}$-observables relative to system $i$. 

It is now clear how to use the Schr\"odinger reduction maps to establish an isomorphism between relational and alignable states and observables. Defining $\mathcal{\overline R}_{\mathbf{S},i}(g):\ch_{\rm phys}\rightarrow\ket{g}_i\otimes\ch_{\overline i}\subset\ch$ by
\ba
\mathcal{\overline R}_{\mathbf{S},i}(g):=\ket{g}_i\otimes\calr_{\mathbf{S},i}(g)\label{modSmap}
\ea
and its inverse by
\ba
\mathcal{\overline R}^{-1}_{\mathbf{S},i}(g):=\calr_{\mathbf{S},i}^{-1}(g)\left(\bra{g}_i\otimes\mathbf{1}_{\overline i}\right)\,,\nn
\ea
we find
\ba
\mathcal{\overline{R}}_{\mathbf{S},i}(g)\,\ket{\psi}_{\rm phys}=\ket{g}_i\otimes\ket{\phi(g)}_{\overline i}\nn
\ea
and
\ba
\mathcal{\overline R}_{\mathbf{S},i}(g)\,F^{(i)}_{f_{\overline i}}(g)\,\mathcal{\overline R}^{-1}_{\mathbf{S},i}(g) = \ket{g}\!\bra{g}_i\otimes f_{\overline i}\,.\nn
\ea
For later reference, we set $\mathcal{\overline R}_i:=\mathcal{\overline R}_{\mathbf{S},i}(e)$ and $\mathcal{\overline R}:=\mathcal{\overline R}_1$.

\subsection{QRF transformations as quantum coordinate changes}

It is clear that the QRF transformation from reference system $i$ in orientation $g_i$ to reference system $j$ in orientation $g_j$ is $V_{i\to j}(g_i,g_j):\ch_{\overline{i}}\rightarrow\ch_{\overline{j}}$, where
\ba
V_{i\to j}(g_i,g_j):=\calr_{\mathbf{S},j}(g_j)\cdot \calr_{\mathbf{S},i}^{-1}(g_i),\label{CFM}
\ea
and $\ket{\psi(g_j)}_{\overline{j}}=V_{i\to j}(g_i,g_j)\,\ket{\psi(g_i)}_{\overline{i}}$. 
This transformation thus takes the same compositional form as coordinate changes on a manifold.
In particular, the Schr\"odinger reduction maps assume the role of `quantum coordinate maps' and the transformation links the internal frame perspectives on $\ch_{\overline i}$ and $\ch_{\overline j}$ via the perspective-neutral Hilbert space $\ch_{\rm phys}$ which thus assumes the analogous role to the manifold.

An observable that takes the form $f_{\overline i}$ relative to system $i$ can be transformed into the description relative to $j$ via
\ba
   f_{\overline j}(g_i,g_j):=V_{i\to j}(g_i,g_j) f_{\overline i}V_{j\to i}(g_j,g_i).\label{obstrans}
\ea
In analogy to state transformations, this transformation maps $\mathcal{L}(\ch_{\overline i})$ \emph{via} the algebra of relational observables $\ca_{\rm phys}$ into $\mathcal{L}(\ch_{\overline j})$. This is evident from Eq.~\eqref{CFM} and Lemma~\ref{lem_obs}. The image observable $f_{\overline j}$ in $j$'s perspective will generally depend non-trivially on both frame orientations $g_i,g_j$ even if $f_{\bar{i}}$ in $i$'s perspective carries no further dependence on  $g_i$.  As argued in Refs.~\cite{Hoehn:2019owq,Hoehn:2020epv}, the $g_j$-dependence of the transformed observable in $j$-perspective can be viewed as an indirect self-reference of frame $j$.

Given that the frame change map will generally transform observables non-trivially, it is interesting to ask when observables take the \emph{same} form in $i$'s and $j$'s perspective. The following observation, which is an adaptation of \cite[Corollary 4]{Hoehn:2019owq}, shows this to be the case when the reduced observable is translation-invariant. 

\begin{lemma}
Suppose $f_{\overline i}=\mathbf{1}_j\otimes f_{\overline{ij}}$. Then $f_{\overline j}(g_i,g_j)=\mathbf{1}_i\otimes f_{\overline {ij}}$, i.e.\ the transformed observable of $\overline{ij}$\,---\,the complement of $ij$ with $i\neq j$\,---\,appears in the same form relative to both $i$ and $j$ (and is in particular $g_i,g_j$-independent) if and only if $[ f_{\overline{ij}},U_g^{\otimes(N-2)}]=0$, for all $g\in\cg$. 
\end{lemma}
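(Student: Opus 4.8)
The plan is to compute the transformed observable $f_{\overline j}(g_i,g_j) = V_{i\to j}(g_i,g_j)\,(\mathbf{1}_j\otimes f_{\overline{ij}})\,V_{j\to i}(g_j,g_i)$ explicitly using the definition~\eqref{CFM} of the frame-change map as a composition of Schr\"odinger reduction maps through $\ch_{\rm phys}$, and then to identify exactly when the result is $g_i,g_j$-independent and equal to $\mathbf{1}_i\otimes f_{\overline{ij}}$. First I would use $V_{i\to j}(g_i,g_j)=\calr_{\mathbf{S},j}(g_j)\,\calr_{\mathbf{S},i}^{-1}(g_i)$ together with Lemma~\ref{lem_obs}, which tells us that $\calr_{\mathbf{S},i}(g_i)^{-1} f_{\overline i}\,\calr_{\mathbf{S},i}(g_i) = F^{(i)}_{f_{\overline i}}(g_i)$ as an operator in $\ca_{\rm phys}$; hence $f_{\overline j}(g_i,g_j) = \calr_{\mathbf{S},j}(g_j)\,F^{(i)}_{f_{\overline i}}(g_i)\,\calr_{\mathbf{S},j}^{-1}(g_j)$. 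So everything reduces to computing the relational observable $F^{(i)}_{\mathbf{1}_j\otimes f_{\overline{ij}}}(g_i)$ and conditioning it the other way, onto $g_j=g_j$.

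The key computation is to insert $\Pi_{\rm phys}=\frac1{|\cg|}\sum_{g\in\cg}U_g^{\otimes N}$ (Eq.~\eqref{eqPiPhys}) into $F^{(i)}_{\mathbf{1}_j\otimes f_{\overline{ij}}}(g_i)=|\cg|\,\hat\Pi_{\rm phys}(|g_i\rangle\langle g_i|_i\otimes\mathbf{1}_j\otimes f_{\overline{ij}})$ and then sandwich with $\sqrt{|\cg|}\,\langle g_j|_j\otimes\mathbf{1}_{\overline j}$ on the left and $\sqrt{|\cg|}\,\Pi_{\rm phys}(|g_j\rangle_j\otimes\mathbf{1}_{\overline j})$ on the right. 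The left projection picks out, from the group sum $\sum_{g}U_g^{\otimes N}$, the unique term with $g\,g_i' = g_j$ for the $j$-slot — concretely, the $\langle g_j|_j$ overlap forces a specific group element, and similarly on the right. Carefully tracking these constraints, the double group-averaging collapses to a single residual factor $U_{g}^{\otimes(N-2)} f_{\overline{ij}}\,U_{g^{-1}}^{\otimes(N-2)}$ acting on $\overline{ij}$, possibly with a $g$-dependence tied to $g_i,g_j$; I expect to land on something of the shape $f_{\overline j}(g_i,g_j)=\mathbf{1}_i\otimes\bigl(U_{k}^{\otimes(N-2)} f_{\overline{ij}}\,U_{k^{-1}}^{\otimes(N-2)}\bigr)$ for some group element $k=k(g_i,g_j)$ (essentially $k=g_i g_j^{-1}$ or its inverse, up to convention). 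From here the claim is immediate: this equals $\mathbf{1}_i\otimes f_{\overline{ij}}$ for all $g_i,g_j$ if and only if $U_k^{\otimes(N-2)} f_{\overline{ij}} U_{k^{-1}}^{\otimes(N-2)} = f_{\overline{ij}}$ for every $k\in\cg$, i.e.\ $[f_{\overline{ij}},U_g^{\otimes(N-2)}]=0$ for all $g$ — and since $k$ ranges over all of $\cg$ as $g_i,g_j$ vary, the "for some $g_i,g_j$" and "for all $g_i,g_j$" versions coincide, giving the stated iff.

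The main obstacle I anticipate is bookkeeping the tensor-factor relabelling cleanly. The subtlety is that $\ch_{\overline i}$ and $\ch_{\overline j}$ are different tensor factorizations of the same space, and $f_{\overline i}=\mathbf{1}_j\otimes f_{\overline{ij}}$ must be re-expressed after the swap/conditioning as an operator on $\ch_{\overline j}=\ch_i\otimes\ch_{\overline{ij}}$; one must be careful that the reduction map $\calr_{\mathbf{S},j}(g_j)=\sqrt{|\cg|}\langle g_j|_j\otimes\mathbf{1}_{\overline j}$ acts on the $j$-slot of the \emph{original} $N$-particle space, while $V_{i\to j}$ already implicitly performs the relabelling. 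It may be cleanest to fix $i=1$, $j=2$ (as in the proof of Lemma~\ref{LemAlignUniqueness}) to make the $N$-tuples explicit, carry out the calculation there, and then invoke symmetry under relabelling of particles, exactly as the earlier proofs in the paper do. A secondary, minor point is to confirm the exact group element $k(g_i,g_j)$ appearing in the conjugation so that the $g_i=g_j$ (or $g_i=e=g_j$) special case reproduces the un-oriented frame change $V_{i\to j}$ of the Theorem after Lemma~\ref{LemAlignable}; but this is a consistency check rather than a logical gap, since the iff statement only needs that $k$ sweeps all of $\cg$.
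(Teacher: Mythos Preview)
Your overall strategy\,---\,pass through the relational observable and then recondition on frame $j$\,---\,is exactly what the paper does. But the outcome you predict for the computation is wrong, and this breaks the ``only if'' direction.

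You expect the double group sum to collapse to a \emph{single} conjugation,
\[
f_{\overline j}(g_i,g_j)\stackrel{?}{=}\mathbf{1}_i\otimes U_k^{\otimes(N-2)} f_{\overline{ij}}\,U_{k^{-1}}^{\otimes(N-2)}
\]
for some $k=k(g_i,g_j)$. It does not. Carrying out the contraction (e.g.\ using the explicit form of $V_{i\to j}$ in Lemma~\ref{lem_CFM}), the $j$-slot overlap $\langle g_j g^{-1}|\mathbf{1}_j|g'g_j\rangle$ kills one sum by forcing $g'=g^{-1}$, but a full sum over $g$ survives:
\[
f_{\overline j}(g_i,g_j)=\sum_{g\in\cg}|gg_i\rangle\langle gg_i|_i\otimes U_g^{\otimes(N-2)} f_{\overline{ij}}\,U_{g^{-1}}^{\otimes(N-2)}.
\]
The $i$-slot does \emph{not} automatically factor out as $\mathbf{1}_i$; instead you get a direct sum of $|\cg|$ orthogonal rank-one projectors, each tensored with a different conjugate of $f_{\overline{ij}}$. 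This equals $\mathbf{1}_i\otimes f_{\overline{ij}}=\sum_g|gg_i\rangle\langle gg_i|_i\otimes f_{\overline{ij}}$ iff every block agrees, i.e.\ iff $[f_{\overline{ij}},U_g^{\otimes(N-2)}]=0$ for \emph{all} $g\in\cg$\,---\,already at fixed $g_i$ (and the expression is manifestly $g_j$-independent). With your predicted single-$k$ form, the ``only if'' at fixed $(g_i,g_j)$ would only yield commutation with one $U_k$, not all of them, so the equivalence would fail. Your attempt to patch this by letting $(g_i,g_j)$ vary changes the statement being proved.
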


\begin{proof}
For $f_{\overline i}=\mathbf{1}_j\otimes f_{\overline{ij}}$, Eq.~\eqref{obstrans} yields
\ba
f_{\overline j}(g_i,g_j)&=&\sum_{g,g'\in\cg}\,\ket{gg_i}\!\bra{g_ig'^{-1}}_i\nn\\
&&\q\q\q\otimes\bra{g_j}_jU_g^{\otimes(N-1)}\, f_{\overline i}\,U_{g'}^{\otimes(N-1)}\ket{g_j}_j\nn\\
&=&\sum_{g\in\cg} U_g^{\otimes(N-1)}\left(\ket{g_i}\!\bra{g_i}_i\otimes f_{\overline{ij}}\right) U_{g^{-1}}^{\otimes(N-1)}\label{obstrans1}.
\ea
The statement that $f_{\overline j}(g_i,g_j)=\mathbf{1}_i\otimes f_{\overline{ij}}$ if $[ f_{\overline{ij}},U_g^{\otimes(N-2)}]=0$ now follows immediately, using the resolution of the identity. 
 Conversely, suppose 
 \ba 
 f_{\overline j}(g_i,g_j)=\mathbf{1}_i\otimes f_{\overline{ij}}=\sum_{g\in\cg} U_g\left(\ket{g_i}\!\bra{g_i}_i\right)U^\dag_g\otimes f_{\overline{ij}}\,.\nn 
 \ea
Comparing with Eq.~\eqref{obstrans1}, this is only possible if $[ f_{\overline{ij}},U_g^{\otimes(N-2)}]=0$, for all $g\in\cg$.
  \end{proof}

This construction of reduction maps and quantum coordinate changes for states and observables constitutes the perspective-neutral approach to quantum frame covariance \cite{Vanrietvelde,Vanrietvelde2,Hoehn:2018aqt,Hoehn:2018whn,Hoehn:2019owq,Hoehn:2020epv,Hoehn:2021wet,periodic}  for finite Abelian groups. We see how the perspective-neutral Hilbert space $\ch_{\rm phys}$ and observable algebra $\ca_{\rm phys}$ encode and link the different internal QRF perspectives. The perspective-neutral structures can be viewed as an external frame-independent description of the physics before choosing an internal QRF relative to which one wishes to describe the $N$-particle system.

The quantum coordinate changes can be written more explicitly:
\begin{lemma}\label{lem_CFM}
The change of frame map \eqref{CFM} takes the form
\[
V_{i\to j}(g_i,g_j)=\sum_{g\in\cg}\,\ket{gg_i}_i\otimes\bra{g_jg^{-1}}_j\otimes U_g^{\otimes(N-2)}\quad(i\neq j).
\]
\end{lemma}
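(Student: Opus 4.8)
The plan is to compute the composition $\calr_{\mathbf{S},j}(g_j)\cdot\calr_{\mathbf{S},i}^{-1}(g_i)$ directly from the explicit forms in Definition~\ref{def_Sred} and Lemma~\ref{lem_invred}, using the twirling identity \eqref{eqPiPhys}. First I would substitute $\calr_{\mathbf{S},i}^{-1}(g_i)=\sqrt{|\cg|}\,\Pi_{\rm phys}(\ket{g_i}_i\otimes\mathbf{1}_{\overline i})$ and $\calr_{\mathbf{S},j}(g_j)=\sqrt{|\cg|}\,\bra{g_j}_j\otimes\mathbf{1}_{\overline j}$, so that the prefactors $\sqrt{|\cg|}\cdot\sqrt{|\cg|}=|\cg|$ combine with $\Pi_{\rm phys}=\frac{1}{|\cg|}\sum_{g\in\cg}U_g^{\otimes N}$ to cancel the $|\cg|$ and leave a single sum $\sum_{g\in\cg}(\bra{g_j}_j\otimes\mathbf{1}_{\overline j})\,U_g^{\otimes N}\,(\ket{g_i}_i\otimes\mathbf{1}_{\overline i})$.

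Next I would evaluate this sandwiched translation operator. Since $U_g^{\otimes N}$ factorizes as $U_g$ on each tensor factor, the part acting on factor $i$ gives $\bra{g_j}_j$ is on factor $j$ not $i$, so more carefully: $U_g^{\otimes N}(\ket{g_i}_i\otimes\mathbf{1}_{\overline i})$ sends the ket $\ket{g_i}_i$ to $\ket{gg_i}_i$ and acts as $U_g$ on the remaining $N-1$ factors (which constitute $\overline i$, and $j\in\overline i$). Then applying $\bra{g_j}_j\otimes\mathbf{1}_{\overline j}$ contracts factor $j$: the $U_g$ acting on factor $j$ produces $\bra{g_j}U_g=\bra{g^{-1}g_j}=\bra{g_jg^{-1}}$ (Abelian), leaving a factor $U_g^{\otimes(N-2)}$ on $\overline{ij}$. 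Collecting the surviving pieces — $\ket{gg_i}_i$ on factor $i$, $\bra{g_jg^{-1}}_j$ on factor $j$, and $U_g^{\otimes(N-2)}$ on $\overline{ij}$ — yields exactly $\sum_{g\in\cg}\ket{gg_i}_i\otimes\bra{g_jg^{-1}}_j\otimes U_g^{\otimes(N-2)}$, as claimed.

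The only subtlety, and the one I would be most careful about, is bookkeeping the tensor-factor structure: the domain of $V_{i\to j}$ is $\ch_{\overline i}$ (factors $\{1,\ldots,N\}\setminus\{i\}$, which includes $j$) and the codomain is $\ch_{\overline j}$ (factors excluding $j$, which includes $i$), so one must track that the $\bra{g_i}_i\otimes\mathbf{1}_{\overline i}$ in $\calr_{\mathbf{S},i}^{-1}(g_i)$ really re-introduces factor $i$ into the physical Hilbert space, while $\calr_{\mathbf{S},j}(g_j)$ then removes factor $j$. One should also confirm the adjoint/inverse conventions: $\calr_{\mathbf{S},i}^{-1}(g_i)$ maps $\ch_{\overline i}\to\ch_{\rm phys}$, so $\calr_{\mathbf{S},j}(g_j)\cdot\calr_{\mathbf{S},i}^{-1}(g_i):\ch_{\overline i}\to\ch_{\overline j}$ has the right type, consistent with \eqref{CFM}. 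No step requires the full machinery beyond \eqref{eqPiPhys} and the basic rule $U_g\ket{h}=\ket{gh}$, $\bra{h}U_g=\bra{g^{-1}h}$; everything else is routine index manipulation, so the proof is short once the factor structure is set up correctly.
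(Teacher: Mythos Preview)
Your proposal is correct and follows essentially the same route as the paper: substitute the explicit forms of $\calr_{\mathbf{S},j}(g_j)$ and $\calr_{\mathbf{S},i}^{-1}(g_i)$, use the twirling expression \eqref{eqPiPhys} for $\Pi_{\rm phys}$ so that the $|\cg|$ cancels, and then evaluate $(\bra{g_j}_j\otimes\mathbf{1}_{\overline j})\,U_g^{\otimes N}\,(\ket{g_i}_i\otimes\mathbf{1}_{\overline i})$ factor by factor. Your extra care with the domain/codomain bookkeeping and the identity $\bra{g_j}U_g=\bra{g^{-1}g_j}=\bra{g_jg^{-1}}$ is warranted but does not depart from the paper's argument.
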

\begin{proof}
Invoking the definitions \eqref{redmap} and \eqref{invredmap} of the Schr\"odinger reduction maps and their inverses, we have
\begin{eqnarray*}
V_{i\to j}(g_i,g_j)&=&|\cg|\left(\bra{g_j}_j\otimes \mathbf{1}_{\overline{j}}\right)\,\Pi_{\rm phys}\,\left(\ket{g_i}_i\otimes \mathbf{1}_{\overline i}\right)\\
&=& \sum_{g\in\cg} \langle g_j|_j\otimes \mathbf{1}_{\overline j}U_g^{\otimes N}|g_i\rangle_i\otimes\mathbf{1}_{\overline i}\\
&=& \sum_{g\in\cg} U_g^{\otimes(N-2)}\otimes\langle g_j|_j\otimes\mathbf{1}_i U_g^{\otimes 2}|g_i\rangle_i\otimes\mathbf{1}_j
\end{eqnarray*}
which gives the claimed identity.
\end{proof}

Specifically, setting both frame orientations to the identity, $g_i=g_j=e$, $V_{i\to j}:=V_{i\to j}(e,e)$ coincides with the QRF transformation established in Ref.~\cite[Theorems 18 \& 24]{QRF1}. Similarly,
$
\mathcal{\overline R}_j\cdot\mathcal{\overline R}_i^{-1}=\mathbb{F}_{i,j}\ket{e}\!\bra{e}_i\otimes V_{i\to j},\nn
$
where $\mathbb{F}_{i,j}$ swaps particles $i$ and $j$, coincides with the (finite Abelian case of the) QRF transformation in Ref.~\cite{Hamette}.

\subsection{Reducing to a frame perspective via symmetry reduction}
\label{SubsecReducing}
Let us summarize an alternative, but unitarily equivalent reduction method from the perspective-neutral formulation into a QRF perspective which results in a relational Heisenberg picture. This is the quantum analog of phase space symmetry reduction by gauge fixing and used in the original formulation of the perspective-neutral approach \cite{Vanrietvelde,Vanrietvelde2,Hoehn:2018aqt,Hoehn:2018whn}. It involves an additional step which shifts all non-redundant information in perspective-neutral states into the (kinematical) $\overline i$ tensor factor and renders the chosen QRF's degrees of freedom redundant. This can be achieved by a kinematical disentangler\footnote{It is important to note that this entanglement refers to the tensor product structure of the kinematical Hilbert space $\ch^{\otimes N}$ which $\ch_{\rm phys}$ does \emph{not} inherit. The entanglement we refer to here is thus only discernible relative to an external observer with access to an external frame, however, is not internally detectable with observables in $\ch_{\rm phys}$, see Refs.~\cite{Hoehn:2019owq,Hoehn:2021wet} for further discussion.} (`trivialization') that can be written as a frame-orientation-conditional shift operator \cite{Hoehn:2021wet}
\ba
\mathcal{T}_i:=\sum_{g\in\cg}\ket{g}\!\bra{g}_i\otimes U_{g^{-1}}^{\otimes(N-1)}\,,\nn
\ea
and is unitary for the present case of finite Abelian groups.
\begin{lemma}
The trivialization map transforms the coherent group averaging projector such that it only acts on the chosen frame $i\in\{1,\ldots,N\}$ 
\[
\ct_i\,\Pi_{\rm phys}\,\ct_i^\dag=\Pi_{\rm phys}^{(i)}\otimes\mathbf{1}_{\overline i}\,,
\]
where $\Pi_{\rm phys}^{(i)}:=\frac{1}{|\cg|}\sum_{g\in\cg}\,U^{(i)}_g$. This kinematically disentangles the system of interest $\overline i$ from the frame $i$ in perspective-neutral states 
\[
\ct_i\,\ket{\psi}_{\rm phys}=\frac{1}{\sqrt{|\cg|}}\sum_{g\in\cg}\ket{g}_i\otimes\ket{\psi}_{\overline i}\,,
\]
where $\otimes$ denotes the tensor product  of $\ch^{\otimes N}$ between the $i$ and $\overline i$ factors, and $\ket{\psi}_{\overline i}:=\calr_{\mathbf{S},i}(e)|\psi\rangle_{\rm phys}$ is a `relational Heisenberg state'.
\end{lemma}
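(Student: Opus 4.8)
The plan is to establish the operator identity first; the two state identities will then follow almost mechanically. A useful preliminary observation is that $\Pi^{(i)}_{\rm phys}:=\frac1{|\cg|}\sum_{g\in\cg}U^{(i)}_g$ is the orthogonal rank-one projector onto the span of $|\cg|^{-1/2}\sum_{g\in\cg}\ket{g}_i$ in $\ch_i$ --- this is just the $N=1$ instance of $\Pi_{\rm phys}$, and it also follows directly from $(\Pi^{(i)}_{\rm phys})^2=\Pi^{(i)}_{\rm phys}=(\Pi^{(i)}_{\rm phys})^\dagger$ together with the fact that all matrix elements of $\Pi^{(i)}_{\rm phys}$ in the group basis equal $1/|\cg|$.

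To prove $\ct_i\,\Pi_{\rm phys}\,\ct_i^\dagger=\Pi^{(i)}_{\rm phys}\otimes\mathbf{1}_{\overline i}$, I would insert the explicit form of $\ct_i$, the expansion $\Pi_{\rm phys}=\frac1{|\cg|}\sum_{h\in\cg}U_h^{\otimes N}$ from Eq.~\eqref{eqPiPhys}, and $\ct_i^\dagger=\sum_{g'\in\cg}\ket{g'}\!\bra{g'}_i\otimes U_{g'}^{\otimes(N-1)}$, and split $U_h^{\otimes N}=U_h^{(i)}\otimes U_h^{\otimes(N-1)}$. The $i$-factor of a generic term is $\bra{g}_iU_h^{(i)}\ket{g'}_i\,\ket{g}\!\bra{g'}_i=\delta_{g,hg'}\,\ket{g}\!\bra{g'}_i$, so the sum over $h$ collapses to the single contribution $h=g(g')^{-1}$; for that contribution the $\overline i$-factor $U_{g^{-1}}^{\otimes(N-1)}U_h^{\otimes(N-1)}U_{g'}^{\otimes(N-1)}=U_{g^{-1}hg'}^{\otimes(N-1)}$ reduces to $U_e^{\otimes(N-1)}=\mathbf{1}_{\overline i}$. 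What survives is $\frac1{|\cg|}\sum_{g,g'\in\cg}\ket{g}\!\bra{g'}_i\otimes\mathbf{1}_{\overline i}=\Pi^{(i)}_{\rm phys}\otimes\mathbf{1}_{\overline i}$, as claimed.

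For the state equations, I would use $\Pi_{\rm phys}\ket{\psi}_{\rm phys}=\ket{\psi}_{\rm phys}$ together with unitarity of $\ct_i$ (noted just above the lemma) to compute
\[
\left(\Pi^{(i)}_{\rm phys}\otimes\mathbf{1}_{\overline i}\right)\ct_i\ket{\psi}_{\rm phys}=\ct_i\,\Pi_{\rm phys}\,\ct_i^\dagger\,\ct_i\ket{\psi}_{\rm phys}=\ct_i\,\Pi_{\rm phys}\ket{\psi}_{\rm phys}=\ct_i\ket{\psi}_{\rm phys}.
\]
Hence $\ct_i\ket{\psi}_{\rm phys}$ lies in the range of $\Pi^{(i)}_{\rm phys}\otimes\mathbf{1}_{\overline i}$ and, since that range is $(\text{range }\Pi^{(i)}_{\rm phys})\otimes\ch_{\overline i}$ with a one-dimensional first factor, it factorizes as $|\cg|^{-1/2}\sum_{g\in\cg}\ket{g}_i\otimes\ket{\chi}_{\overline i}$ for a unique $\ket{\chi}_{\overline i}\in\ch_{\overline i}$; this is the second displayed equation up to identifying $\ket{\chi}_{\overline i}$. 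To do the latter I would contract both sides with $\bra{e}_i\otimes\mathbf{1}_{\overline i}$: on the one hand $(\bra{e}_i\otimes\mathbf{1}_{\overline i})\ct_i=\bra{e}_i\otimes U_e^{\otimes(N-1)}=\bra{e}_i\otimes\mathbf{1}_{\overline i}$, so the contracted left-hand side equals $(\bra{e}_i\otimes\mathbf{1}_{\overline i})\ket{\psi}_{\rm phys}=|\cg|^{-1/2}\calr_{\mathbf{S},i}(e)\ket{\psi}_{\rm phys}$ by Eq.~\eqref{redmap}; on the other hand the contracted right-hand side equals $|\cg|^{-1/2}\ket{\chi}_{\overline i}$. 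Comparing yields $\ket{\chi}_{\overline i}=\calr_{\mathbf{S},i}(e)\ket{\psi}_{\rm phys}=:\ket{\psi}_{\overline i}$, which is the third displayed equation and completes the argument.

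I do not expect a genuine obstacle here: the whole argument is elementary group-algebra bookkeeping, and the only thing requiring care is keeping the two tensor-product conventions apart --- operators on $\ch^{\otimes N}$ versus the $\ch_i\otimes\ch_{\overline i}$ split --- together with the observation that the $h$-sum in the first step collapses to a single term. I would also stress that calling $\ket{\psi}_{\overline i}$ a ``relational Heisenberg state'' is terminology rather than a claim to be verified: the name is justified by contrast with Subsection~\ref{SubsecPageWootters}, where the Schr\"odinger conditional states carried the orientation dependence $\ket{\varphi(g)}_{\overline i}=U_{gg'^{-1}}^{\otimes(N-1)}\ket{\varphi(g')}_{\overline i}$, whereas here that dependence has been absorbed entirely into the now-redundant frame factor $|\cg|^{-1/2}\sum_g\ket{g}_i$, leaving the $\overline i$-component frozen at $g=e$.
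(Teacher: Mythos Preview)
Your argument is correct. The paper itself omits the proof entirely, stating only ``The proof is straightforward and omitted''; your direct computation --- collapsing the $h$-sum via the $\delta_{g,hg'}$ constraint to reduce the $\overline i$-factor to $\mathbf{1}_{\overline i}$, then reading off the state factorization from the rank-one range of $\Pi^{(i)}_{\rm phys}\otimes\mathbf{1}_{\overline i}$ and identifying the $\overline i$-component by contracting with $\bra{e}_i$ --- is exactly the kind of elementary bookkeeping the authors evidently had in mind, carried out cleanly.
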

The proof is straightforward and omitted. The frame orientations thus becomes coherently averaged out. 

The second step of the symmetry reduction then involves a conditioning on the frame orientation as before, which now just removes the redundant  frame information. In analogy to Definition~\ref{def_Sred} this yields:
\begin{definition}
We define the \emph{Heisenberg reduction map} $\mathcal{R}_{\mathbf{H},i}:\mathcal{H}_{\rm phys}\to\mathcal{H}_{\overline i}$ {from perspective-neutral states to the description of $\overline{i}$ relative to frame $i\in\{1,\ldots,N\}$  in orientation $g$} via
\ba
\calr_{\mathbf{H},i}:=\sqrt{|\cg|}\left(\bra{g}_i\otimes \mathbf{1}_{\overline{i}}\right)\ct_i.\label{Hredmap}
\ea
\end{definition}
Let us provide some justification for why we refer to the result as a `relational Heisenberg picture'.
First, this reduction map is independent of the frame orientation $g$ in its definition since, for finite Abelian groups, we simply have
\[
\calr_{\mathbf{H},i}=U^{\otimes(N-1)}_{g^{-1}}\calr_{\mathbf{S},i}(g)=\calr_{\mathbf{S},i}(e)\q\forall\,g\in\cg\,,
\]
where the last equality follows immediately when acting on the domain $\ch_{\rm phys}$ of the two maps. This immediately establishes the unitary  equivalence of the two reduction methods for all frame orientations $g\in\cg$.\footnote{For non-compact groups, this equivalence is not trivial~\cite{Hoehn:2019owq,Hoehn:2020epv,periodic}.} In particular, in the image of the Heisenberg reduction, observables now transform covariantly as `Heisenberg picture' operators for $\cg$, i.e.\ 
\ba
\calr_{\mathbf{H},i}\,F_{f_{\bar{i}}}^{(i)}(g)\,\calr_{\mathbf{H},i}^{-1}=f_{\bar{i}}(g)=U_{g^{-1}}^{\otimes (N-1)} f_{\bar i} \,U_{g}^{\otimes(N-1)}\,,\nn
\ea
while states remain fixed, constituting `Heisenberg states' for $\cg$:
\ba
\calr_{\mathbf{H},i}\ket{\psi}_{\rm phys}=\ket{\varphi}_{\overline i}=\ket{\varphi(e)}_{\overline i}\,.\nn
\ea
In conjunction, this section translates the ``trinity'' of equivalent descriptions of physics relative to internal QRFs established in Refs.~\cite{Hoehn:2019owq,Hoehn:2020epv,periodic}\,---\,the perspective-neutral, relational Schr\"odinger and relational Heisenberg pictures\,---\,into the finite Abelian group context. In addition, it establishes equivalence of this trinity with the fourth formulation in terms of alignable states and observables~\cite{QRF1,Hamette}.

\section{Dynamical inequivalence of relational and alignable descriptions}
\label{Section:Time}
We have established a complete formal isomorphism between the relational states and observables and those relative to some subsystem $i$. Therefore, it seems like the alignability picture and the perspective-neutral picture give us identical possibilities to describe relational quantum physics.

However, we will now see that this is not quite true if we consider time evolution. Recall our starting point: we have a quantum system (``$\mathcal{G}$-system'') with the property that external observers without access to the external relatum cannot distinguish states $\rho$ and $U\rho U^\dagger$, where $U\in\mathcal{U}_{\rm sym}$ is a quantum symmetry transformation.

Suppose our system evolves unitarily via $U(t)=\exp(-i H t)$ for some Hamiltonian $H$. If two states $\rho$ and $\sigma$ are indistinguishable as a consequence of the quantum symmetry, then the same should hold for $U(t)\rho U(t)^\dagger$ and $U(t)\sigma U(t)^\dagger$. Otherwise, if these states were distinguishable without access to the external relatum, then this would contradict our initial assumption that observers so constrained have no means to distinguish the two states.

Thus, this leads us to the question of which Hamiltonians -- or, more generally, unitaries -- preserve equivalence of states.

\subsection{Symmetry-preserving unitaries}
Recall that we have two notions of equivalence, and this yields two notions of equivalence-preserving maps.
\begin{definition}
Let $W$ be a unitary map on $\mathcal{H}^{\otimes N}$. We say that $W$ \emph{preserves symmetry equivalence}, or that $W$ is \emph{SE-preserving}, if for every pair of states
\[
   \rho\simeq \sigma \Rightarrow W\rho W^\dagger \simeq W\sigma W^\dagger.
\]
We say that $W$ \emph{preserves observational equivalence}, or that $W$ is \emph{OE-preserving}, if
\[
   \rho\sim \sigma \Rightarrow W\rho W^\dagger \sim W\sigma W^\dagger.
\]
\end{definition}
The groups of equivalence-preserving maps can be partially characterized as follows:
\begin{lemma}
\label{LemChain}
Let $W$ be a unitary map on $\mathcal{H}^{\otimes N}$. We have the following chain of implications:
\begin{eqnarray*}
W&\in&\mathcal{A}_{\rm inv}\\
&\Downarrow& \\
W\mathcal{U}_{\rm sym} W^\dagger &=&\mathcal{U}_{\rm sym}\\
& \Downarrow & \\
W&\mbox{is}&\mbox{SE-preserving} \\
& \Downarrow \not \Uparrow& \\
W^\dagger \mathcal{A}_{\rm inv} W &=&\mathcal{A}_{\rm inv}\\
&\Updownarrow&\\
W&\mbox{is}&\mbox{OE-preserving}.
\end{eqnarray*}
The symbol $\not\Rightarrow$ here indicates that there exist finite Abelian groups $\mathcal{G}$ (e.g.\ $\mathbb{Z}_4$) for which this implication does not hold.
\end{lemma}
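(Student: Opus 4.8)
The plan is to prove the four implications in the chain one at a time, working top to bottom, plus the counterexample for the non-reversal of the middle implication. I will freely use Lemma~\ref{LemChain}'s ambient facts: the structure of $\mathcal{U}_{\rm sym}$ from Eq.~(\ref{eqUSym}), the block form of $\mathcal{A}_{\rm inv}$, and the characterization of $\sim$ via $\Pi_{\rm inv}$.

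\emph{Step 1: $W\in\mathcal{A}_{\rm inv}\Rightarrow W\mathcal{U}_{\rm sym}W^\dagger=\mathcal{U}_{\rm sym}$.} If $W$ is a unitary lying in $\mathcal{A}_{\rm inv}$, then $W$ commutes with every $U\in\mathcal{U}_{\rm sym}$ by definition of $\mathcal{A}_{\rm inv}$, so $WUW^\dagger=U$ and the conjugation action is trivial on $\mathcal{U}_{\rm sym}$; in particular it maps the group onto itself. (This step is essentially immediate.)

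\emph{Step 2: $W\mathcal{U}_{\rm sym}W^\dagger=\mathcal{U}_{\rm sym}\Rightarrow W$ is SE-preserving.} Suppose $\rho\simeq\sigma$, i.e.\ $\sigma=U\rho U^\dagger$ for some $U\in\mathcal{U}_{\rm sym}$. Then $W\sigma W^\dagger=(WUW^\dagger)(W\rho W^\dagger)(WUW^\dagger)^\dagger$, and by hypothesis $WUW^\dagger=:U'\in\mathcal{U}_{\rm sym}$, so $W\sigma W^\dagger=U'(W\rho W^\dagger)U'^\dagger\simeq W\rho W^\dagger$. Again routine.

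\emph{Step 3: $W$ SE-preserving $\Rightarrow W^\dagger\mathcal{A}_{\rm inv}W=\mathcal{A}_{\rm inv}$, and $\Updownarrow$ OE-preserving.} For the bottom equivalence, recall that $\rho\sim\sigma$ iff $\Pi_{\rm inv}(\rho)=\Pi_{\rm inv}(\sigma)$, and that $\Pi_{\rm inv}$ is the orthogonal (Hilbert--Schmidt) projection onto $\mathcal{A}_{\rm inv}$. A unitary conjugation $\rho\mapsto W\rho W^\dagger$ preserves $\sim$ iff it maps the kernel of $\Pi_{\rm inv}$ (the Hilbert--Schmidt orthogonal complement of $\mathcal{A}_{\rm inv}$, i.e.\ $\{X : \Pi_{\rm inv}(X)=0\}$) into itself; since conjugation by a unitary is a Hilbert--Schmidt isometry, this happens iff it preserves $\mathcal{A}_{\rm inv}^{\perp}$, equivalently iff $W^\dagger\mathcal{A}_{\rm inv}W=\mathcal{A}_{\rm inv}$ (use that $\mathcal{A}_{\rm inv}$ is spanned by self-adjoint elements so ``$\subseteq$'' forces ``$=$'' by dimension, and pass between $W$ and $W^\dagger$ freely). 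This gives the $\Updownarrow$. For SE $\Rightarrow$ OE: if $\rho\sim\sigma$ then $\Pi_{\rm inv}(\rho)=\Pi_{\rm inv}(\sigma)$; writing $\Pi_{\rm inv}$ as the group average over $\mathcal{U}_{\rm sym}$ (Eq.~(\ref{eqPiInv})), $\Pi_{\rm inv}(\rho)$ is a uniform convex combination of the states $U\rho U^\dagger$, each of which is $\simeq\rho$; applying $W$ and using SE-preservation, every $W(U\rho U^\dagger)W^\dagger\simeq W\rho W^\dagger$, and the same for $\sigma$. Then I want to conclude $\Pi_{\rm inv}(W\rho W^\dagger)=\Pi_{\rm inv}(W\sigma W^\dagger)$; the clean way is: SE-preservation of $W$ means $W$ conjugation sends $\mathcal{U}_{\rm sym}$-orbits to $\mathcal{U}_{\rm sym}$-orbits, hence commutes with the orbit-averaging map $\Pi_{\rm inv}$ up to the relabeling of the group, so $\Pi_{\rm inv}(W\rho W^\dagger)=W\,\Pi_{\rm inv}(\rho)\,W^\dagger=W\,\Pi_{\rm inv}(\sigma)\,W^\dagger=\Pi_{\rm inv}(W\sigma W^\dagger)$, i.e.\ $W\rho W^\dagger\sim W\sigma W^\dagger$. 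The main subtlety here\,---\,and the step I expect to require the most care\,---\,is justifying that SE-preservation of $W$ (a statement about \emph{states}) forces $W(\cdot)W^\dagger$ to literally intertwine the averaging operators; the safe route is to first show SE-preservation implies $W^\dagger\mathcal{U}_{\rm sym}W$ normalizes $\mathcal{A}_{\rm inv}$ and then invoke uniqueness of the conditional expectation onto $\mathcal{A}_{\rm inv}$.

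\emph{Step 4: the implication SE-preserving $\Rightarrow$ $W^\dagger\mathcal{A}_{\rm inv}W=\mathcal{A}_{\rm inv}$ does not reverse.} I need a finite Abelian $\mathcal{G}$ and a unitary $W$ with $W^\dagger\mathcal{A}_{\rm inv}W=\mathcal{A}_{\rm inv}$ (equivalently, OE-preserving) but not SE-preserving. By the block description of Lemma~\ref{LemChain}, $\mathcal{A}_{\rm inv}=\mathcal{A}_{\rm phys}\oplus\bigoplus_{\mathbf{h},\chi\neq\mathbf{1}}\mathbb{C}\,|\mathbf{h};\chi\rangle\langle\mathbf{h};\chi|$. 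A unitary $W$ normalizes this algebra iff it preserves the block decomposition $\ch^{\otimes N}=\ch_{\rm phys}\oplus\bigoplus_{\mathbf{h},\chi\neq\mathbf{1}}\mathbb{C}|\mathbf{h};\chi\rangle$ up to permuting the one-dimensional $\chi\neq\mathbf{1}$ blocks of equal multiplicity structure and acting arbitrarily unitarily on $\ch_{\rm phys}$, whereas SE-preservation additionally constrains it to conjugate $\mathcal{U}_{\rm sym}$ into itself. The plan is to take the smallest case where the $\chi\neq\mathbf{1}$ sector contains two characters that ``look the same'' to $\mathcal{A}_{\rm inv}$ but are distinguished by $\mathcal{U}_{\rm sym}$; for $\mathcal{G}=\mathbb{Z}_4$, with $N$ chosen (e.g.\ $N=2$, so $\mathbf{h}\in\mathbb{Z}_4$) so that for a fixed $\mathbf{h}$ the three nontrivial characters $\chi_1,\chi_2,\chi_3$ each furnish a $1$-dimensional block, define $W$ to be the identity on $\ch_{\rm phys}$ and on all blocks except that it swaps the $|\mathbf{h};\chi_1\rangle$ and $|\mathbf{h};\chi_2\rangle$ lines for one fixed $\mathbf{h}$ (a unitary permutation of two orthonormal basis vectors). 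This $W$ manifestly normalizes $\mathcal{A}_{\rm inv}$ (it permutes its minimal diagonal projections), hence is OE-preserving. To see it is \emph{not} SE-preserving, exhibit two symmetry-equivalent states whose images are not: take $\rho=|g,\mathbf{h}g\rangle\langle g,\mathbf{h}g|$ type rank-one states in $\ch_\mathbf{h}$ and their $U_{g'}^{\otimes N}$-translates; concretely, pick a pure state in $\ch_\mathbf{h}$ that is a fixed superposition of $|\mathbf{h};\mathbf{1}\rangle,|\mathbf{h};\chi_1\rangle$ and another that is $U_{g'}^{\otimes N}$ of it, so $\rho\simeq\sigma$; after applying $W$ (which moves $\chi_1\mapsto\chi_2$ in one state's expansion but the $U_{g'}$-phase structure on $\chi_1$ versus $\chi_2$ differs because $\chi_1(g')\neq\chi_2(g')$ for a suitable $g'$ in $\mathbb{Z}_4$), the two images fail to be related by any relation-conditional translation, i.e.\ by any element of $\mathcal{U}_{\rm sym}$, because such an element can only multiply the $|\mathbf{h};\chi\rangle$ component by $\chi(g(\mathbf{h}))$ and no single choice of $g(\mathbf{h})$ reproduces the mismatch $W$ introduced between the $\chi_1$- and $\chi_2$-phases. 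Making this last incompatibility precise\,---\,checking that the post-$W$ pair lies in no single $\mathcal{U}_{\rm sym}$-orbit\,---\,is the real content of the counterexample, and the $\mathbb{Z}_4$ choice is what guarantees enough nontrivial characters with distinct values on a common group element to run the argument; $\mathbb{Z}_2$ and $\mathbb{Z}_3$ are too small.
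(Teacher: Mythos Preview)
Your Steps 1, 2, and 4 are essentially the paper's arguments; in particular, the counterexample idea (swap two $|\mathbf{h};\chi\rangle$-lines for characters whose ranges differ, as happens for $\chi_1$ versus $\chi_2$ in $\mathbb{Z}_4$) matches the paper's construction. Your treatment of the bottom equivalence via the Hilbert--Schmidt projection is a clean variant of the paper's trace-manipulation argument.

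The real gap is in Step 3, the implication ``SE-preserving $\Rightarrow W^\dagger\mathcal{A}_{\rm inv}W=\mathcal{A}_{\rm inv}$''. Your argument hinges on $\Pi_{\rm inv}(W\rho W^\dagger)=W\,\Pi_{\rm inv}(\rho)\,W^\dagger$, which you justify by saying that SE-preservation makes $W(\cdot)W^\dagger$ ``send orbits to orbits, hence commute with orbit-averaging''. But that intertwining identity is equivalent to $W\mathcal{U}_{\rm sym}W^\dagger=\mathcal{U}_{\rm sym}$ as a set (just expand $\Pi_{\rm inv}$ via Eq.~(\ref{eqPiInv})), which is strictly stronger than SE-preservation --- indeed, the very non-implication you are trying to exhibit in Step 4 would collapse if this held. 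SE-preservation only says each orbit maps \emph{into} some orbit; it does not say the map on orbits is bijective, size-preserving, or that the conjugated group equals $\mathcal{U}_{\rm sym}$. Your ``safe route'' (that $W^\dagger\mathcal{U}_{\rm sym}W$ normalizes $\mathcal{A}_{\rm inv}$ and then invoke uniqueness of the conditional expectation) does not obviously follow from SE-preservation either.

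The paper's fix is short and worth internalizing: characterize invariant \emph{states} as precisely those with singleton $\simeq$-orbits. If $\rho\in\mathcal{A}_{\rm inv}$ and $\sigma\simeq W^\dagger\rho W$, then applying the SE-preserving $W$ gives $W\sigma W^\dagger\simeq\rho$, hence $W\sigma W^\dagger=\rho$ by the singleton property, so $\sigma=W^\dagger\rho W$. Thus $W^\dagger\rho W$ also has a singleton orbit, i.e.\ lies in $\mathcal{A}_{\rm inv}$; since states span $\mathcal{A}_{\rm inv}$, you get $W^\dagger\mathcal{A}_{\rm inv}W=\mathcal{A}_{\rm inv}$. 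This avoids any claim about $W$ intertwining $\Pi_{\rm inv}$.
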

\begin{proof}
Clearly, if $W\in\mathcal{A}_{\rm inv}$ then $[W,U]=0$ for all $U\in\mathcal{U}_{\rm sym}$, hence $W\mathcal{U}_{\rm sym} W^\dagger=\mathcal{U}_{\rm sym}$. Let $\rho\simeq\sigma$, then there exists some $U\in\mathcal{U}_{\rm sym}$ with $\sigma=U\rho U^\dagger$. Thus
\[
   W\sigma W^\dagger=(WUW^\dagger)W\rho W^\dagger (WUW^\dagger)^\dagger,
\]
and $WUW^\dagger\in\mathcal{U}_{\rm sym}$, hence $W\rho W^\dagger\simeq W\sigma W^\dagger$. This proves that $W$ is SE-preserving.

Now let $\rho\in\mathcal{A}_{\rm inv}$ be any state, then $U\rho U^\dagger=\rho$ for all $U\in\mathcal{U}_{\rm sym}$, hence $\rho$ is only symmetry-equivalent to itself. Let $\sigma$ be any state such that $W^\dagger \rho W \simeq \sigma$. Since $W$ is SE-preserving, it follows that $\rho\simeq W\sigma W^\dagger$ and thus $\sigma=W^\dagger \rho W$. Thus $W^\dagger \rho W$ is only SE-equivalent to itself, i.e.\ $U W^\dagger \rho W U^\dagger = W^\dagger \rho W$ for all $U\in\mathcal{U}_{\rm sym}$. This implies $[U,W^\dagger \rho W]=0$ for all $U\in\mathcal{U}_{\rm sym}$, i.e.\ $W^\dagger\rho W \in\mathcal{A}_{\rm inv}$. Since the states linearly span all of $\mathcal{A}_{\rm inv}$, it follows that $W^\dagger \mathcal{A}_{\rm inv} W = \mathcal{A}_{\rm inv}$. In turn, this implies
\begin{eqnarray*}
\rho\sim\sigma &\Leftrightarrow& \tr(A\rho)=\tr(A\sigma)\quad\mbox{for all }A\in\mathcal{A}_{\rm inv}\\
&\Leftrightarrow& \tr(W^\dagger A W \rho)=\tr(W^\dagger A W\sigma)\quad\mbox{for all }A\in\mathcal{A}_{\rm inv}\\
&\Leftrightarrow& \tr(A W \rho W^\dagger)=\tr(A W\sigma W^\dagger)\quad\mbox{for all }A\in\mathcal{A}_{\rm inv}\\
&\Leftrightarrow& W\rho W^\dagger\sim W\sigma W^\dagger.
\end{eqnarray*}
Thus, $W$ is OE-preserving. 

Conversely, suppose that $W$ is any OE-preserving unitary. Let $|\psi\rangle\langle\psi|\in\mathcal{A}_{\rm inv}$, and suppose that $\rho$ is some state with $|\psi\rangle\langle\psi|\sim\rho$. Then $\Pi_{\rm inv}(\rho)=\Pi_{\rm inv}(|\psi\rangle\langle\psi|)=|\psi\rangle\langle\psi|$, hence $\rho=|\psi\rangle\langle\psi|$ since $\ket{\psi}\!\bra{\psi}$ is pure. In other words, $|\psi\rangle\langle\psi|$ is only observationally equivalent to itself. Let $\sigma$ be any state with $W^\dagger|\psi\rangle\langle\psi|W\sim\sigma$, then $|\psi\rangle\langle\psi|\sim W\sigma W^\dagger$, and so $W\sigma W^\dagger=|\psi\rangle\langle\psi|$, hence $W^\dagger|\psi\rangle\langle\psi|W$ is only observationally equivalent to itself. But $\tau:=\Pi_{\rm inv}(W^\dagger|\psi\rangle\langle\psi|W)$ satisfies $\Pi_{\rm inv}(\tau)=\Pi_{\rm inv}(W^\dagger|\psi\rangle\langle\psi|W)$, i.e.\ $\tau\sim W^\dagger|\psi\rangle\langle\psi|W$, hence $\tau= W^\dagger|\psi\rangle\langle\psi|W$, and so $W^\dagger|\psi\rangle\langle\psi|W\in\mathcal{A}_{\rm inv}$. Since the pure states of $\mathcal{A}_{\rm inv}$ linearly span $\mathcal{A}_{\rm inv}$, it follows that $W^\dagger\mathcal{A}_{\rm inv} W =\mathcal{A}_{\rm inv}$.

To prove the remaining non-implication, we give a concrete example of a unitary $W$ that preserves $\mathcal{A}_{\rm inv}$ but that is not SE-preserving. To this end, consider two characters $\chi_0,\chi_1\in\mathcal{\hat G}\setminus\{\mathbf{1}\}$ such that $\chi_1$ takes some value which is not taken by $\chi_0$, i.e.\ there exists some $g\in\mathcal{G}$ such that $\chi_1(g)\neq \chi_0(g')$ for all $g'\in\mathcal{G}$. Such pairs of characters exist for many finite Abelian groups (for example for $\mathbb{Z}_4$ by direct inspection of the character table), but not for all of them -- however, we only want to disprove \emph{general} implication for \emph{all} finite Abelian groups, hence we can assume that we have a finite group for which such a pair of characters exists. Let us define some $U\in\mathcal{U}_{\rm sym}$ by picking some assignment $\mathbf{h}\mapsto g(\mathbf{h})$ of group elements $g(\mathbf{h})\in\mathcal{G}$ to particle relations $\mathbf{h}\in\mathcal{G}^{N-1}$. We will pick a mostly arbitrary assignment, except that we demand that there is some $\mathbf{h_0}$ for which $g(\mathbf{h_0}):=g$.

Let us describe states $\rho$ on $\mathcal{H}^{\otimes N}$ via their matrix elements $\rho_{(\mathbf{h},\chi),(\mathbf{h}',\chi')}:=\langle\mathbf{h};\chi|\rho|\mathbf{h}';\chi'\rangle$. Pick any state $\rho$ with $\rho_{(\mathbf{h}_0,\chi_1),(\mathbf{h}_0,\mathbf{1})}\neq 0$. Define $\sigma:=U\rho U^\dagger$, then $\rho\simeq\sigma$. In particular,
\begin{equation}
\sigma_{(\mathbf{h},\chi),(\mathbf{h}',\chi')}=\chi(g(\mathbf{h}))\chi'(g(\mathbf{h}')^{-1})\rho_{(\mathbf{h},\chi),(\mathbf{h}',\chi')},
\label{eqNecSuff}
\end{equation}
and Eq.~(\ref{eqNecSuff}) is a necessary and sufficient condition in general for $\rho\simeq\sigma$: for two states $\rho$ and $\sigma$, we have $\rho\simeq\sigma$ if and only if there is some assignment $\mathbf{h}\mapsto g(\mathbf{h})$ such that Eq.~(\ref{eqNecSuff}) holds. In particular, it holds for our specific choice of $\rho$ and $U$, implying
\begin{equation}
   \sigma_{(\mathbf{h}_0,\chi_1),(\mathbf{h}_0,\mathbf{1})}=\chi_1(g)\rho_{(\mathbf{h}_0,\chi_1),(\mathbf{h}_0,\mathbf{1})}.
   \label{eqContra}
\end{equation}
Now define a unitary $W$ via
\[
   W|\mathbf{h};\chi\rangle:=\left\{
      \begin{array}{cl}
      |\mathbf{h};\chi\rangle& \mbox{if }\chi\not\in\{\chi_0,\chi_1\},\\
      |\mathbf{h};\chi_1\rangle & \mbox{if }\chi=\chi_0\\
      |\mathbf{h};\chi_0\rangle & \mbox{if }\chi=\chi_1.
      \end{array}
   \right.
\]
Due to the form of $\mathcal{A}_{\rm inv}$, it is clear that $W^\dagger \mathcal{A}_{\rm inv} W =\mathcal{A}_{\rm inv}$. However, we will now show that $W$ cannot be SE-preserving. Suppose that it was SE-preserving, then $W\rho W^\dagger\simeq W\sigma W^\dagger$, and the matrix elements of those states would have to satisfy Eq.~(\ref{eqNecSuff}), with $g(\mathbf{h})$ replaced by some other assignment $g'(\mathbf{h})$. In particular,
\begin{eqnarray*}
\sigma_{(\mathbf{h}_0,\chi_1),(\mathbf{h}_0,\mathbf{1})}&=& (W\sigma W^\dagger)_{(\mathbf{h}_0,\chi_0),(\mathbf{h}_0,\mathbf{1})}\\
&=& \chi_0(g'(\mathbf{h}_0))(W\rho W^\dagger)_{(\mathbf{h}_0,\chi_0),(\mathbf{h}_0,\mathbf{1})}\\
&=& \chi_0(g'(\mathbf{h}_0))\rho_{(\mathbf{h}_0,\chi_1),(\mathbf{h}_0,\mathbf{1})}.
\end{eqnarray*}
But together with Eq.~(\ref{eqContra}) this implies that $\chi_1(g)=\chi_0(g'(\mathbf{h}_0))$ which is impossible. Hence $W$ is not SE-preserving.
\end{proof}
In particular, we see that OE-preservation is in general strictly weaker than SE-preservation. Hence, in the following, we will mainly be interested in the former.

\subsection{Perspective-neutral picture: interacting particles}
\label{ssec_interac}
In the perspective-neutral picture, the $\mathcal{G}$-system is in some state of $\mathcal{H}_{\rm phys}$, and every unitary $W\in\mathcal{A}_{\rm phys}$ describes a possible time evolution. This is because $\mathcal{A}_{\rm phys}\subset \mathcal{A}_{\rm inv}$, and so Lemma~\ref{LemChain} guarantees that $W$ preserves both symmetry and observational equivalence of states. Moreover, we can use the Schr\"odinger reduction map to describe this time evolution relative to some particle $i$.

To illustrate this, consider the following concrete example which is a discrete version of the model considered in Ref.~\cite{Vanrietvelde}. As in Example~\ref{ExCyclic} and Figure~\ref{fig_cyclic}, we choose $\mathcal{G}=\mathbb{Z}_n$, i.e.\ the case of $N$ particles on a circle of $n\geq 2$ discrete positions. We define the $N$-particle Hamiltonian
\begin{equation}
   H:=\sum_{i=1}^N \frac{P_i^2 \enspace {\rm mod}\enspace  n}{2 m_i} + \sum_{i<j} V_{i,j},
   \label{eqHamiltonian}
\end{equation}
with $m_i>0$ interpreted as particle masses, and the operators defined as follows. The potential $V_{i,j}$ is the operator
\begin{equation}
   V_{i,j}|g_1,g_2,\ldots,g_N\rangle:=v_{i,j}(g_i^{-1}g_j) |g_1,g_2,\ldots,g_N\rangle,
   \label{DefVij}
\end{equation}
where $v_{i,j}:\mathcal{G}\to\mathbb{R}$ is a real function on the group, and we demand that $v_{i,j}(g^{-1})=v_{i,j}(g)$. Since $g_i^{-1}g_j=g_j-g_i\enspace {\rm mod}\enspace n$, this potential depends only on the discrete distance between particles $i$ and $j$.

To define the momentum operators $P_i$, let us begin with the case of a single particle. Any momentum operator $P'$ should generate translations, i.e.\ should satisfy\footnote{The choice of sign differs from the usual convention in the continuous case. This will not change the physics, but it will slightly simplify the notation. The standard convention $\exp(-\frac{2\pi i}n\ell P)|g\rangle=|g+\ell\rangle$ can be reproduced by setting $P:=\sum_{k=0}^{n-1} k|\overline{\chi_k}\rangle\langle\overline{\chi_k}|$.}
\be
   e^{\frac{2\pi i} n \ell P'} |g\rangle=|g\ell\rangle\equiv|g+\ell\enspace {\rm mod}\enspace n\rangle.
   \label{eqMomentum}
\ee
Modifying the eigenvalues of $P'$ by adding multiples of $n$ will leave this equation invariant. More generally, if $P'$ satisfies Eq.~(\ref{eqMomentum}) then $P'\enspace {\rm mod}\enspace n=P$ (in the sense of spectral calculus), where
\[
   P=\sum_{k=0}^{n-1}k|\chi_k\rangle\langle\chi_k|,
\]
which we thus choose as our definition of the single-particle momentum operator. Here, the vector $|\chi_k\rangle\in\ch$ is defined as $|\mathbf{h};\chi_k\rangle$ for empty $\mathbf{h}$, i.e.\ $\ket{\chi_k}=\frac 1 {\sqrt{n}} \sum_{g\in\mathbb{Z}_n} \chi_k(g^{-1})\ket{g}$. The momentum operator for particle $i$ becomes $P_i:=\mathbf{1}_{\overline i}\otimes P$. In this convention, the eigenvalues of the momentum operator are $\{0,1,\ldots,n-1\}$, which is the same as those of the position operator $X:=\sum_{g\in\mathcal{G}} g |g\rangle\langle g|$, which is related to $P$ by a discrete Fourier transform.

Had we chosen another $P'$, then in general we would have $P^2\neq (P')^2$, but $(P')^2\enspace {\rm mod}\enspace n=P^2\enspace {\rm mod}\enspace n$ would still hold since $P$ and $P'$ have integer eigenvalues. This motivates the ``${\rm mod}\enspace n$''-terms in the definition of our Hamiltonian $H$, making it independent of the choice of momentum operator.

Our first observation is that $H$ does \emph{not} in general generate symmetry-preserving time evolution.
\begin{lemma}
\label{LemIllegalDynamics}
Fix any choice of potentials $V_{i,j}$ and of number of sites $n\geq 2$. Then there exist times $t\in\mathbb{R}$ and values of the masses $m_i$ such that the evolution
\[
   W(t):=e^{-itH}
\]
does \emph{not} preserve observational equivalence.
\end{lemma}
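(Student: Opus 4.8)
The plan is to invoke Lemma~\ref{LemChain}: $W(t)=e^{-itH}$ is OE-preserving if and only if it normalizes $\mathcal{A}_{\rm inv}$, i.e.\ $e^{itH}\mathcal{A}_{\rm inv}e^{-itH}=\mathcal{A}_{\rm inv}$. I argue by contradiction, assuming $W(t)$ is OE-preserving for \emph{every} $t\in\mathbb{R}$. Since $\mathcal{A}_{\rm inv}$ is a linear (hence closed) subspace of the finite-dimensional space $\mathcal{L}(\mathcal{H}^{\otimes N})$, the smooth curve $t\mapsto e^{itH}Ae^{-itH}$ stays in $\mathcal{A}_{\rm inv}$ for each $A\in\mathcal{A}_{\rm inv}$, so its derivative at $t=0$, namely $i[H,A]$, also lies in $\mathcal{A}_{\rm inv}$. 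Applying this with $A=|\mathbf{h};\chi\rangle\langle\mathbf{h};\chi|$ for $\chi\neq\mathbf{1}$ — which belongs to $\mathcal{A}_{\rm inv}$ by the explicit form of the invariant subalgebra recalled above — and reading off $\langle\mathbf{h}';\chi'|[H,A]|\mathbf{h};\chi\rangle=\langle\mathbf{h}';\chi'|H|\mathbf{h};\chi\rangle$ for $(\mathbf{h}',\chi')\neq(\mathbf{h},\chi)$, together with the fact that every element of $\mathcal{A}_{\rm inv}$ has \emph{vanishing} matrix elements $\langle\mathbf{h}';\chi'|\,\cdot\,|\mathbf{h};\chi\rangle$ whenever $\chi\neq\mathbf{1}$ and $(\mathbf{h}',\chi')\neq(\mathbf{h},\chi)$, forces each $|\mathbf{h};\chi\rangle$ with $\chi\neq\mathbf{1}$ to be an eigenvector of $H$. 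Hence it suffices to produce, for some choice of masses, a character $\chi\neq\mathbf{1}$ and a relation $\mathbf{h}$ for which $|\mathbf{h};\chi\rangle$ is \emph{not} an eigenvector of $H$; this yields a time $t$ at which $W(t)$ fails to be OE-preserving.

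To do so I compute $H$ in the $|\mathbf{h};\chi\rangle$-basis. The potential part $\sum_{i<j}V_{i,j}$ is diagonal in the position basis, with eigenvalue on $|g,\mathbf{h}g\rangle$ equal to $\sum_{i<j}v_{i,j}(h_{i-1}^{-1}h_{j-1})$ (using $h_0=e$), independent of $g$; hence it acts as a scalar on each $\mathcal{H}_{\mathbf{h}}$, is diagonal in the $|\mathbf{h};\chi\rangle$-basis, and contributes nothing off-diagonal. For the kinetic part, a direct calculation — acting with $P_i^2\bmod n$ on particle $i$ and re-expanding in the basis $|\mathbf{h};\chi\rangle$ — gives, with $q(\delta):=\tfrac1n\sum_{k=0}^{n-1}(k^2\bmod n)\,e^{2\pi ik\delta/n}$ the inverse discrete Fourier transform of $k\mapsto k^2\bmod n$, that for $2\le i\le N$ the term $T_i:=\mathbf{1}_{\overline i}\otimes(P_i^2\bmod n)$ shifts only the $(i-1)$-th relation, $\langle\mathbf{h}';\chi|T_i|\mathbf{h};\chi\rangle=q(h_{i-1}-h'_{i-1})$ when $\mathbf{h}'$ agrees with $\mathbf{h}$ outside entry $i-1$ (and vanishes otherwise), whereas $T_1$ shifts \emph{all} relations by the same amount (with an extra character phase). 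In particular all $T_i$, and hence $H$, are block-diagonal in $\chi$, as they must be because $[T_i,U_g^{\otimes N}]=0$.

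The decisive point is that $q$ is \emph{not} supported only at $\delta=0$: its Fourier transform $k\mapsto k^2\bmod n$ is non-constant, since $0^2\bmod n=0\neq1=1^2\bmod n$ for $n\ge2$. So there is $\delta_0\neq0$ with $q(\delta_0)\neq0$. Fix any $\chi\neq\mathbf{1}$ and any $\mathbf{h}$, and let $\mathbf{h}'$ be $\mathbf{h}$ with its last entry $h_{N-1}$ replaced by $h_{N-1}+\delta_0$. For $N\ge3$ the only $T_i$ that can connect $\mathbf{h}$ to $\mathbf{h}'$ is $T_N$ (the terms $T_2,\dots,T_{N-1}$ move a different relation, and $T_1$ would have to shift all relations simultaneously), so $\langle\mathbf{h}';\chi|H|\mathbf{h};\chi\rangle=q(\delta_0)/(2m_N)\neq0$ for \emph{every} choice of masses; for $N=2$ the same matrix element has the form $q(\delta_0)\overline{\chi(\delta_0)}/(2m_1)+q(-\delta_0)/(2m_2)$, which is affine in $1/m_1,1/m_2$ with non-vanishing coefficients, hence nonzero for all but a measure-zero set of masses (e.g.\ for $m_2$ large). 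In either case $|\mathbf{h};\chi\rangle$ is not an eigenvector of $H$, contradicting the consequence of OE-preservation established above. (This implicitly uses $N\ge2$; for $N=1$ the Hamiltonian is already diagonal in the character basis, so the statement genuinely requires $N\ge2$.)

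I expect the main work to be the bookkeeping in the second step: carefully tracking how each single-particle term $P_i^2\bmod n$ acts on the collective labels $(\mathbf{h},\chi)$ after re-expansion, and in particular confirming that the off-diagonal entry singled out above is fed by $T_N$ alone for $N\ge3$, which rules out cancellations between the differently mass-weighted kinetic terms. Everything else — the reduction through Lemma~\ref{LemChain} and the elementary non-constancy of $k\mapsto k^2\bmod n$ — is routine.
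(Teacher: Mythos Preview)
Your proof is correct and follows the same overall strategy as the paper: reduce via Lemma~\ref{LemChain} and differentiation to the condition $[H,A]\in\mathcal{A}_{\rm inv}$, observe that the potentials are diagonal in the $|\mathbf{h};\chi\rangle$-basis, and then derive a contradiction from the non-constancy of $k\mapsto k^2\bmod n$ via the discrete Fourier transform.

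The one noteworthy difference is in how you isolate a single kinetic term. The paper sends $m_2,\ldots,m_N\to\infty$ (invoking continuity in the masses) so that only $P_1^2\bmod n$ survives, and then looks at the off-diagonal element between $\mathbf{h}$ and $\mathbf{h}-\delta\cdot(1,\ldots,1)$, which is precisely the shift pattern produced by $T_1$. You instead keep the masses fixed and choose $\mathbf{h}'$ to differ from $\mathbf{h}$ in the last coordinate only, so that for $N\geq 3$ the term $T_N$ is singled out combinatorially and no limiting argument is needed; this even shows that $W(t)$ fails to be OE-preserving for \emph{every} choice of masses when $N\geq 3$, slightly sharpening the paper's conclusion. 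The price is that $N=2$ needs a separate (still easy) mass argument, which the paper's limiting scheme handles uniformly. Both routes hinge on the same Fourier-analytic fact, so the distinction is one of bookkeeping style rather than substance.
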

\begin{proof}
We argue by contradiction. Suppose that $W(t)$ is OE-preserving for all times $t\in\mathbb{R}$ and all masses $m_i>0$. According to Lemma~\ref{LemChain}, this implies $W(t)^\dagger A W(t)\in\mathcal{A}_{\rm inv}$ for all $A\in\mathcal{A}_{\rm inv}$. Differentiating this at $t=0$, we obtain
\begin{equation}
   [H,A]\in\mathcal{A}_{\rm inv}\mbox{ for all }A\in\mathcal{A}_{\rm inv}.
   \label{eqCommutator}
\end{equation}
For the potentials $V_{i,j}$, we have
\begin{equation}
   V_{i,j}|\mathbf{h};\chi_k\rangle = v_{i,j}(h_{i-1}^{-1} h_{j-1}) |\mathbf{h};\chi_k\rangle,
   \label{eqPotentialPhys}
\end{equation}
where, as always, we set $h_0:=e=0$, the unit element of the group. Being diagonal in the $|\mathbf{h};\chi\rangle$ basis, this implies that $V_{i,j}\in\mathcal{A}_{\rm inv}$, and so $[V_{i,j},A]\in\mathcal{A}_{\rm inv}$ for all $A\in\mathcal{A}_{\rm inv}$.

Since $H$ depends continuously on $1/m_i$, consider the $H$ which arises from taking the limit $m_i\to\infty$ for all $i\geq 2$. By continuity, condition~(\ref{eqCommutator}) must still hold for the limiting $H$. In particular, let $k\in\{1,2,\ldots,n-1\}$, $\mathbf{h}\in\mathcal{G}^{N-1}$, and $A:=|\mathbf{h};\chi_k\rangle\langle\mathbf{h};\chi_k|$, then we must have $Y:=[P_1^2\enspace {\rm mod}\enspace n,A]\in\mathcal{A}_{\rm inv}$. Now set $\mathbf{h}:=\mathbf{j}+\delta\mbox{ mod }n$, where $\delta\in\{1,\ldots,n-1\}$, then $\mathbf{j}\neq\mathbf{h}$ and $k\neq 0$ imply $\langle \mathbf{j};\chi_k|Y|\mathbf{h};\chi_k\rangle=0$. But by the definition of $P_i$,
\begin{equation}
   P_i^2\enspace {\rm mod}\enspace n =\frac 1 n \sum_{m, h, \ell} (m^2\enspace {\rm mod}\enspace n)e^{\frac{2\pi i m (\ell-h)}{n}}|h\rangle\langle\ell|_i\otimes\mathbf{1}_{\overline i}.
   \label{DefPi}
\end{equation}
Substituting this, we obtain
\begin{eqnarray*}
   \frac n {\chi_k(\delta)}\langle \mathbf{j};\chi_k|Y|\mathbf{h};\chi_k\rangle&=&\frac n {\chi_k(\delta)}\langle \mathbf{j};\chi_k|(P_1^2\enspace {\rm mod}\enspace n)|\mathbf{h};\chi_k\rangle\\
   &=&\sum_{m=0}^{n-1} e^{-\frac{2\pi i m \delta}{n}}(m^2\mbox{ mod } n)=:\hat f(\delta),
\end{eqnarray*}
where $\hat f$ is the discrete Fourier transform of $f(m):=m^2\mbox{ mod }n$. We know that $\hat f(\delta)=0$ for all $\delta\in\{1,\ldots,n-1\}$. On the other hand, for all $m\in\{0,\ldots,n-1\}$,
\[
f(m)= m^2 \mbox{ mod }n =\frac 1 n \sum_{\delta=0}^{n-1} e^{\frac{2\pi i m\delta}n}\hat f(\delta)=\frac{\hat f(0)}n
\]
which is a contradiction.
\end{proof}
Thus, assuming that the $\mathcal{G}$-system evolves according to the Hamiltonian $H$ contradicts our symmetry assumptions that constitute the very foundation of the definition of a $\mathcal{G}$-system. However, the projection of $H$ into the physical subspace
\[
   H_{\rm phys}:=\Pi_{\rm phys} H \Pi_{\rm phys}
\]
\emph{does} describe valid time evolution: since $W_{\rm phys}(t):=e^{-i t H_{\rm phys}} \in\mathcal{A}_{\rm inv}$, it preserves both observational and symmetry equivalence. But why should we be interested in this projected Hamiltonian? The following lemma shows the motivation for doing so:
\begin{lemma}
\label{LemPhysDynamics}
Time evolution $W(t)$ preserves the physical subspace, hence
\[
   \Pi_{\rm phys} W(t)|\psi\rangle = W_{\rm phys}(t)\Pi_{\rm phys}|\psi\rangle
\]
for all $|\psi\rangle\in\mathcal{H}^{\otimes N}$.
\end{lemma}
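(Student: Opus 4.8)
The plan is to show that the Hamiltonian $H$ of Eq.~(\ref{eqHamiltonian}) commutes with every unconditional global translation $U_g^{\otimes N}$, hence with $\Pi_{\rm phys}$ by the second equality in Eq.~(\ref{eqPiPhys}), and then to deduce the claimed intertwining identity from a short power-series manipulation.

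First I would verify that $[H,U_g^{\otimes N}]=0$ for all $g\in\cg$. For the potential terms this is immediate: by Eq.~(\ref{DefVij}), $V_{i,j}$ is diagonal in the configuration basis with eigenvalue $v_{i,j}(g_i^{-1}g_j)$, and conjugation by $U_g^{\otimes N}$ sends each $g_k$ to $gg_k$, leaving $g_i^{-1}g_j$\,---\,hence the eigenvalue\,---\,unchanged. (Equivalently, as already noted above, $V_{i,j}$ is diagonal in the $|\mathbf{h};\chi\rangle$ basis, so $V_{i,j}\in\mathcal{A}_{\rm inv}$ and in particular commutes with the $U_g^{\otimes N}$, which act as scalars $\chi(g)$ on $|\mathbf{h};\chi\rangle$.) For the kinetic terms, recall that the single-particle momentum $P$ is diagonal in the character basis $\{|\chi_k\rangle\}$ and that $U_g|\chi_k\rangle=\chi_k(g)|\chi_k\rangle$; hence $U_g$ commutes with $P$, therefore with $P^2\ \mathrm{mod}\ n$ (a spectral function of $P$), and therefore $P_i^2\ \mathrm{mod}\ n=\mathbf{1}_{\overline i}\otimes(P^2\ \mathrm{mod}\ n)$ commutes with $U_g^{\otimes N}$, which factorizes across the $i$--$\overline i$ split. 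Summing the terms gives $[H,U_g^{\otimes N}]=0$ for all $g\in\cg$, so $[H,\Pi_{\rm phys}]=0$ and consequently $[W(t),\Pi_{\rm phys}]=0$; in particular $W(t)$ preserves $\mathcal{H}_{\rm phys}$.

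It then remains to identify $\Pi_{\rm phys}W(t)$ with $W_{\rm phys}(t)\Pi_{\rm phys}$. Using $\Pi_{\rm phys}^2=\Pi_{\rm phys}$ and $[H,\Pi_{\rm phys}]=0$, one has $H_{\rm phys}\Pi_{\rm phys}=\Pi_{\rm phys}H\Pi_{\rm phys}\Pi_{\rm phys}=\Pi_{\rm phys}H\Pi_{\rm phys}=H\Pi_{\rm phys}$. An easy induction on $k$ gives $H_{\rm phys}^k\Pi_{\rm phys}=H^k\Pi_{\rm phys}$ for all $k\geq 0$: the case $k=0$ is trivial, and for the step one writes $H_{\rm phys}^{k+1}\Pi_{\rm phys}=H_{\rm phys}(H^k\Pi_{\rm phys})=H_{\rm phys}\Pi_{\rm phys}H^k=H\Pi_{\rm phys}H^k=H^{k+1}\Pi_{\rm phys}$, using $[H^k,\Pi_{\rm phys}]=0$. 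Expanding the two exponentials as power series and multiplying by $\Pi_{\rm phys}$ on the right then yields
\[
W_{\rm phys}(t)\Pi_{\rm phys}=\sum_{k\geq 0}\frac{(-it)^k}{k!}H_{\rm phys}^k\Pi_{\rm phys}=\sum_{k\geq 0}\frac{(-it)^k}{k!}H^k\Pi_{\rm phys}=W(t)\Pi_{\rm phys}=\Pi_{\rm phys}W(t),
\]
and applying both sides to an arbitrary $|\psi\rangle\in\mathcal{H}^{\otimes N}$ gives the statement.

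I do not expect a genuine obstacle here; the entire content is the translation invariance of $H$ under the global translations. The only points deserving a word of care are (a) the ``$\mathrm{mod}\ n$'' in the kinetic term, which is harmless because it is a spectral function of the translation-commuting operator $P$, and (b) the fact\,---\,consistent with Lemma~\ref{LemIllegalDynamics}\,---\,that $H$ does \emph{not} commute with all of $\cu_{\rm sym}$, only with the subgroup of unconditional global translations, which is nonetheless exactly what enters $\Pi_{\rm phys}$ via Eq.~(\ref{eqPiPhys}).
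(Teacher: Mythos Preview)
Your proof is correct and follows essentially the same approach as the paper: establish $[H,U_g^{\otimes N}]=0$ for all $g$ (hence $[H,\Pi_{\rm phys}]=0$ and $[W(t),\Pi_{\rm phys}]=0$), then deduce the intertwining identity. The only cosmetic differences are that the paper verifies $[P_i^2\ \mathrm{mod}\ n,U_g^{\otimes N}]=0$ by conjugating the explicit matrix formula~(\ref{DefPi}) rather than via the character basis, and in the final step writes the one-line chain $\Pi_{\rm phys}W(t)=\Pi_{\rm phys}W(t)\Pi_{\rm phys}=\Pi_{\rm phys}W_{\rm phys}(t)\Pi_{\rm phys}=W_{\rm phys}(t)\Pi_{\rm phys}$ (using that $H$ and $H_{\rm phys}$ agree on $\mathcal{H}_{\rm phys}$) instead of your power-series induction; these are equivalent bookkeeping.
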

This result can be interpreted as follows. Fix any initial state $|\psi\rangle\in\mathcal{H}^{\otimes N}$ (for example some alignable state), and consider its ``illegal'' (non-symmetry-preserving) evolution according to the Hamiltonian $H$. Suppose we are not interested in the full evolved state $|\psi(t)\rangle:=W(t)|\psi\rangle$, but only in its ``relational part'', $|\psi_{\rm phys}(t)\rangle:=\Pi_{\rm phys}|\psi(t)\rangle$. Then \emph{this} can be seen as the result of a ``legal'' (symmetry-preserving) time evolution $W_{\rm phys}(t)$ applied to the corresponding relational initial state $|\psi_{\rm phys}\rangle:=\Pi_{\rm phys}|\psi\rangle$.

In other words: $H_{\rm phys}$ tells us how $H$ evolves the \emph{relational part} of the states; and in contrast to $H$, it generates valid \emph{symmetry-preserving dynamics}.
\begin{proof}
It follows directly from the definition~(\ref{DefVij}) of the $V_{i,j}$ that $[V_{i,j},U_g^{\otimes N}]=0$ for all $g$. Similarly, $[P_i^2\enspace {\rm mod}\enspace n,U_g^{\otimes N}]=0$ for all $g$, which can be seen by conjugating Eq.~(\ref{DefPi}) with $U_g^{\otimes N}\bullet (U_g^{\otimes N})^\dagger$. Hence, our Hamiltonian commutes with all global translations,
\[
   [H,U_g^{\otimes N}]=0\mbox{ for all }g\in\mathcal{G},
\]
a result that is very intuitive given its physical interpretation. Due to Eq.~(\ref{eqPiPhys}), we also have
\[
   [H,\Pi_{\rm phys}]=\frac 1 n \sum_g [H,U_g^{\otimes N}]=0,
\]
implying that $[W(t),\Pi_{\rm phys}]=0$: the time evolution preserves the physical subspace. Furthermore, note that $[W_{\rm phys}(t),\Pi_{\rm phys}]=0$, and thus
\begin{eqnarray*}
\Pi_{\rm phys}W(t)&=&\Pi_{\rm phys}W(t)\Pi_{\rm phys}=\Pi_{\rm phys}W_{\rm phys}(t)\Pi_{\rm phys}\\
&=& W_{\rm phys}(t)\Pi_{\rm phys}
\end{eqnarray*}
which proves the statement of the lemma.
\end{proof}
Thus, it is very well-motivated to consider the projection $H_{\rm phys}$ of $H$ to the physical subspace, and to interpret it as the perspective-neutral picture of the $N$-particle evolution.

Lemmas~\ref{LemIllegalDynamics} and~\ref{LemPhysDynamics} have an interesting physical interpretation. The momentum operators $P_i$ (and thus the $P_i^2\mbox{ mod }n$) commute with the total momentum, $P_{\rm tot}:=\left(\sum_{i=1}^N P_i\right)\mbox{mod }n$, which generates global translations: $e^{\frac{2\pi i}n g P_{\rm tot}}=e^{\frac{2\pi i}n g \sum_i P_i}=U_g^{\otimes N}$. In particular, $[P_i,U_g^{\otimes N}]=0$ for all $g\in\mathbb{Z}_n$: momenta are clearly translation-invariant. Since all elements of $\ch_{\rm phys}$ are as well, this shows that the $P_i$ preserve the physical subspace, which is ultimately the reason for the validity of Lemma~\ref{LemPhysDynamics}: our Hamiltonian generates valid time evolution when restricted to $\ch_{\rm phys}$.

On the other hand, our $\cg$-systems have a larger symmetry group $\mathcal{U}_{\rm sym}$ of \emph{relation-conditional translations}. For example, for some pair of particles $j\neq k$, consider the operator $\Delta_{j,k}:=X_j-X_k\mbox{ mod }n$. It satisfies $\Delta_{j,k}|\mathbf{g}\rangle=g_j g_k^{-1}|\mathbf{g}\rangle\equiv (g_j -g_k \mbox{ mod }n)|\mathbf{g}\rangle$, i.e.\ it determines the `distance' between particles $j$ and $k$. Thus, it commutes with the global translations $U_g^{\otimes N}$, in particular for $g=1$. Let us choose a convention for the complex logarithm by defining $\log(r e^{ix}):=\log r +ix$ whenever $r>0$ and $0\leq x <2\pi$. Then $P_{\rm tot}=\frac n {2\pi i}\log U_1^{\otimes N}$, thus $[\Delta_{j,k},P_{\rm tot}]=0$. Now define the unitary $U:=e^{\frac{2\pi i}n \Delta_{j,k} P_{\rm tot}}$. It acts as a global translation by an amount that depends on the distance between the particles $j$ and $k$:
\[
   e^{\frac{2\pi i}n \Delta_{j,k}P_{\rm tot}}|g,\mathbf{h}g\rangle=U_{h_{j-1}-h_{k-1}\mathrm{ mod }\, n}^{\otimes N}|g,\mathbf{h}g\rangle,
\]
thus $U\in\mathcal{U}_{\rm sym}$. But while the \emph{total} momentum commutes with $U$, the \emph{individual momenta} do \emph{not} commute with $U$ in general. To see this, consider the special case $N=2$ and $(j,k)=(2,1)$. Suppose that $[P_1,U]=0$. This implies that $[U_g\otimes\mathbf{1}_2,U]=0$ for all $g$, and in particular for $g=1$, since $U_g=e^{\frac{2\pi i}n g P_1}$. But it is easy to see that
\begin{eqnarray*}
(U_1\otimes \mathbf{1}_2)U|g,g+h\rangle&=&|g+h+1,g+2h\rangle,\\
U(U_1\otimes\mathbf{1}_2)|g,g+h\rangle&=&|g+h,g+2h-1\rangle,
\end{eqnarray*}
where all additions are modulo $n$. This is a contradiction. We hence see that the momenta are not invariant observables, $P_i\not\in\ca_{\rm inv}$. If they were, then we would have $W(t)=e^{-itH}\in\ca_{\rm inv}$ as well, but then Lemma~\ref{LemChain} would guarantee that $W(t)$ preserves observational equivalence, which it does not as we have shown in Lemma~\ref{LemIllegalDynamics}.

Ultimately, demanding that an operator is invariant under all $U\in\mathcal{U}_{\rm sym}$ is a much stricter requirement than unconditional global translation-invariance, and the $P_i$ do not satisfy this stricter requirement. Consequently, time evolution according to $H$, considered on \emph{all} of $\ch^{\otimes N}$, must violate some of the symmetries in $\mathcal{U}_{\rm sym}$.

In Subsection~\ref{SubsecReducing}, we have seen that we can reduce the perspective-neutral description to a frame perspective: for example, we can express what a state or evolution looks like ``relative to the first particle''. Let us do this now for our Hamiltonian~(\ref{eqHamiltonian}).

Suppose we are given an operator $X$ on $\mathcal{H}_{\rm phys}$, which is spanned by the basis vectors $|\mathbf{h};\mathbf{1}\rangle$. We have chosen the convention for labelling these basis vectors in such a way that $\mathbf{h}$ denotes the relations of all other particles \emph{relative to the first particle}. But relative to the first particle, the configuration with these relations is exactly $|e,\mathbf{h}\rangle$. Thus, the description of $X$ relative to the first particle should correspond to an operator $\tilde X$ on $\mathcal{\tilde H}_{\overline 1}:={\rm span}\{|e,\mathbf{h}\rangle,\,|\,\,\mathbf{h}\in\mathcal{G}^{N-1}\}=|e\rangle\otimes\mathcal{H}_{\overline 1}$ such that
\begin{equation}
   \langle\mathbf{j};\mathbf{1}|X|\mathbf{h};\mathbf{1}\rangle=\langle e,\mathbf{j}|\tilde X|e,\mathbf{h}\rangle.
   \label{eqMatrixElements}
\end{equation}
Recall our reduction map $\mathcal{\overline R}=\ket{e}_1\otimes\mathcal{R}$ from Subsection~\ref{SubsecPageWootters}. We have seen that it satisfies $\mathcal{R}|\mathbf{h};\mathbf{1}\rangle=|\mathbf{h}\rangle$. We find that
\[
   \tilde X = \mathcal{\overline R}X \mathcal{\overline R}^\dagger.
\]
Before we can apply this transformation to our Hamiltonian $H$, we need a lemma that tells us how to apply functions to operators under this transformation:
\begin{lemma}
\label{LemSpectralCalculus}
Let $A$ and $B$ be normal operators on $\mathcal{H}^{\otimes N}$ with $[A,\Pi_{\rm phys}]=[B,\Pi_{\overline 1}]=0$, where $\Pi_{\overline 1}:=|e\rangle\langle e|_1\otimes\mathbf{1}_{\overline 1}$ is the projection onto $\mathcal{\tilde H}_{\overline 1}$. If for all $\mathbf{j},\mathbf{h}\in\mathcal{G}^{N-1}$
\begin{equation}
   \langle \mathbf{j};\mathbf{1}|A|\mathbf{h};\mathbf{1}\rangle=\langle e,\mathbf{j}|B|e,\mathbf{h}\rangle,
   \label{eqME1}
\end{equation}
then also
\begin{equation}
   \langle \mathbf{j};\mathbf{1}|f(A)|\mathbf{h};\mathbf{1}\rangle=\langle e,\mathbf{j}|f(B)|e,\mathbf{h}\rangle
   \label{eqME2}
\end{equation}
for all functions $f:\mathbb{C}\to\mathbb{C}$ applied in the sense of spectral calculus.
\end{lemma}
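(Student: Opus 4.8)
The plan is to transport the problem to the physical subspace $\ch_{\rm phys}$, where the hypothesis becomes the statement that $A$ and $B$ are unitarily conjugate, and then to invoke the two elementary facts that spectral calculus commutes with (i) compression of a normal operator to a reducing subspace and (ii) unitary conjugation.

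First I would record the properties of the reduction map. Recall $\mathcal{\overline R}=|e\rangle_1\otimes\calr$ with $\calr=\calr_{\mathbf{S},1}(e)$; by Lemma~\ref{lem_invred}, $\calr$ is unitary from $\ch_{\rm phys}$ onto $\ch_{\overline 1}$, so $\mathcal{\overline R}$ is unitary from $\ch_{\rm phys}$ onto $|e\rangle_1\otimes\ch_{\overline 1}=\mathcal{\tilde H}_{\overline 1}$, and since $\calr|\mathbf{h};\mathbf{1}\rangle=|\mathbf{h}\rangle$ it maps the orthonormal basis $\{|\mathbf{h};\mathbf{1}\rangle\}$ of $\ch_{\rm phys}$ onto the orthonormal basis $\{|e,\mathbf{h}\rangle\}$ of $\mathcal{\tilde H}_{\overline 1}$. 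Hypothesis~(\ref{eqME1}) then says precisely that the compressions $A_0:=\Pi_{\rm phys}A\Pi_{\rm phys}$ and $B_0:=\Pi_{\overline 1}B\Pi_{\overline 1}$, viewed as operators on $\ch_{\rm phys}$ and $\mathcal{\tilde H}_{\overline 1}$, satisfy $A_0=\mathcal{\overline R}^\dagger B_0\,\mathcal{\overline R}$, since their matrix elements agree in the bases matched by $\mathcal{\overline R}$.

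Next I would exploit the commutation hypotheses. Because $\Pi_{\rm phys}$ is an orthogonal projection commuting with the normal operator $A$, it commutes with every spectral projection of $A$; writing $A=\sum_\lambda\lambda P_\lambda$, this gives $A_0=\sum_\lambda\lambda\,P_\lambda\Pi_{\rm phys}$, which is normal on $\ch_{\rm phys}$, and hence for every $f$ one has $f(A)\Pi_{\rm phys}=\sum_\lambda f(\lambda)\,P_\lambda\Pi_{\rm phys}=f(A_0)$; the same reasoning with $\Pi_{\overline 1}$ yields $f(B)\Pi_{\overline 1}=f(B_0)$. Since $\mathcal{\overline R}$ is unitary, $A_0=\mathcal{\overline R}^\dagger B_0\,\mathcal{\overline R}$ implies $f(A_0)=\mathcal{\overline R}^\dagger f(B_0)\,\mathcal{\overline R}$, and taking matrix elements in $\{|\mathbf{h};\mathbf{1}\rangle\}$ (using $\mathcal{\overline R}|\mathbf{h};\mathbf{1}\rangle=|e,\mathbf{h}\rangle$, $\Pi_{\rm phys}|\mathbf{h};\mathbf{1}\rangle=|\mathbf{h};\mathbf{1}\rangle$ and $\Pi_{\overline 1}|e,\mathbf{h}\rangle=|e,\mathbf{h}\rangle$) gives
\[
\langle\mathbf{j};\mathbf{1}|f(A)|\mathbf{h};\mathbf{1}\rangle=\langle\mathbf{j};\mathbf{1}|f(A_0)|\mathbf{h};\mathbf{1}\rangle=\langle e,\mathbf{j}|f(B_0)|e,\mathbf{h}\rangle=\langle e,\mathbf{j}|f(B)|e,\mathbf{h}\rangle,
\]
which is~(\ref{eqME2}).

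The only step that needs a little care is the claim that applying $f$ and compressing to the reducing subspace commute; in finite dimensions this is immediate once one works directly with the spectral decomposition, so I would phrase the whole argument through spectral projections rather than via polynomial approximation. Everything else is bookkeeping about the unitary $\mathcal{\overline R}$ and the fact that $|\mathbf{h};\mathbf{1}\rangle$ and $|e,\mathbf{h}\rangle$ are fixed by the respective projectors.
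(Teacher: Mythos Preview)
Your proof is correct and follows essentially the same approach as the paper: both arguments exploit that $[A,\Pi_{\rm phys}]=[B,\Pi_{\overline 1}]=0$ forces $A$ and $B$ to be block-diagonal, that hypothesis~(\ref{eqME1}) makes the relevant blocks unitarily equivalent, and that spectral calculus respects this block structure. The paper phrases the unitary equivalence informally as ``identical up to the relabeling $|\mathbf{h};\mathbf{1}\rangle\mapsto|e,\mathbf{h}\rangle$'', whereas you make it explicit via the reduction unitary $\mathcal{\overline R}$ and separate the two ingredients (compression to a reducing subspace commutes with $f(\cdot)$; unitary conjugation commutes with $f(\cdot)$) more carefully---but the logic is the same.
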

\begin{proof}
The commutation relations $[A,\Pi_{\rm phys}]=[B,\Pi_{\overline 1}]=0$ imply that $A$ and $B$ are block-diagonal. Furthermore, Eq.~(\ref{eqME1}) implies that the blocks $\left. A \right|_{\mathcal H_{\mathrm{phys}}}$ and $\left. B \right|_{\tilde{\mathcal{H}}_{\overline{1}}}$ are identical, up to the relabeling $\ket{\mathbf{h} ; \mathbf{1} } \mapsto \ket{e, \mathbf{h}}$. Therefore, also the spectral decompositions of these two blocks are identical (up to the relabeling). Thus, also the blocks $\left. f(A) \right|_{\mathcal H_{\mathrm{phys}}}$ and $\left. f(B) \right|_{\tilde{\mathcal{H}}_{\overline{1}}}$ are identical, up to the relabeling $\ket{\mathbf{h} ; \mathbf{1}} \mapsto \ket{e, \mathbf{h}}$. But that is precisely the statement of Eq.~(\ref{eqME2}).
\end{proof}
Applying this transformation to our Hamiltonian $H$, we obtain the following result.
\begin{lemma}
The $N$-particle Hamiltonian~(\ref{eqHamiltonian}) as expressed relative to the first particle is
\begin{eqnarray*}
   \tilde H &=& \sum_{i=2}^N \frac{P_i^2\enspace{\rm mod}\enspace n}{2 m_i}+\sum_{i<j}V_{i,j}\\
   &&+\frac 1 {2 m_1} \left[\left(\sum_{i=2}^N P_i^2+\sum_{2\leq i<j}P_i P_j\right){\rm mod}\enspace n\right].
\end{eqnarray*}
\end{lemma}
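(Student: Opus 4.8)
The plan is to compute $\tilde H=\mathcal{\overline R}\,H\,\mathcal{\overline R}^\dagger$ through its matrix elements, using the characterization $\langle\mathbf{j};\mathbf{1}|H|\mathbf{h};\mathbf{1}\rangle=\langle e,\mathbf{j}|\tilde H|e,\mathbf{h}\rangle$ of Eq.~(\ref{eqMatrixElements}), and treating the three kinds of terms in the Hamiltonian~(\ref{eqHamiltonian}) separately. As a preliminary observation, $[H,\Pi_{\rm phys}]=0$ (established inside the proof of Lemma~\ref{LemPhysDynamics}), so $H$ restricts to $\mathcal{H}_{\rm phys}$ and $\mathcal{\overline R}H\mathcal{\overline R}^\dagger$ is well-defined; moreover every $P_i$ and every $V_{i,j}$ commutes with $\Pi_{\rm phys}$, while each $P_i$ with $i\geq 2$ and each $V_{i,j}$ also commutes with $\Pi_{\overline 1}=|e\rangle\langle e|_1\otimes\mathbf{1}_{\overline 1}$, so that Lemma~\ref{LemSpectralCalculus} will be available to promote matrix-element identities for such operators to identities for arbitrary functions of them (in particular for the ``$\bmod\,n$''-operations).

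For the potentials, Eq.~(\ref{eqPotentialPhys}) gives $V_{i,j}|\mathbf{h};\mathbf{1}\rangle=v_{i,j}(h_{i-1}^{-1}h_{j-1})\,|\mathbf{h};\mathbf{1}\rangle$, while the definition~(\ref{DefVij}) gives $V_{i,j}|e,\mathbf{h}\rangle=v_{i,j}(h_{i-1}^{-1}h_{j-1})\,|e,\mathbf{h}\rangle$, since the configuration $|e,\mathbf{h}\rangle$ has $g_m=h_{m-1}$ with $h_0=e$. The two families of matrix elements coincide, so the potential part of $\tilde H$ is simply $\sum_{i<j}V_{i,j}$ acting on $\mathcal{\tilde H}_{\overline 1}$, where the terms $V_{1,j}$ have become the position-dependent potentials $v_{1,j}(h_{j-1})$ of particle $j$ relative to the origin. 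For the kinetic term of a particle $i\geq 2$, I would conjugate by the trivialization map $\mathcal{T}_1$: by the lemma of Subsection~\ref{SubsecReducing} it sends $|\mathbf{h};\mathbf{1}\rangle$ to $|\chi_0\rangle_1\otimes|\mathbf{h}\rangle_{\overline 1}$, and since $[P_i,U_g^{\otimes(N-1)}]=0$ one has $\mathcal{T}_1 P_i\mathcal{T}_1^\dagger=\mathbf{1}_1\otimes P_i$; hence $\langle\mathbf{j};\mathbf{1}|P_i|\mathbf{h};\mathbf{1}\rangle=\langle e,\mathbf{j}|P_i|e,\mathbf{h}\rangle$, and Lemma~\ref{LemSpectralCalculus} with $f(x)=(x^2\bmod n)/(2m_i)$ upgrades this to $\langle\mathbf{j};\mathbf{1}|\frac{P_i^2\bmod n}{2m_i}|\mathbf{h};\mathbf{1}\rangle=\langle e,\mathbf{j}|\frac{P_i^2\bmod n}{2m_i}|e,\mathbf{h}\rangle$. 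Thus the kinetic terms with $i\geq 2$ pass through to $\tilde H$ unchanged.

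The substantive step --- and the one I expect to be the main obstacle --- is the frame particle's kinetic term $\frac{P_1^2\bmod n}{2m_1}$, which is precisely where the frame degrees of freedom are traded for the remaining ones. On $\mathcal{H}_{\rm phys}$ every $U_g^{\otimes N}$ is the identity, so $U_\ell\otimes\mathbf{1}_{\overline 1}=\mathbf{1}_1\otimes U_{-\ell}^{\otimes(N-1)}$ there for all $\ell$, i.e.\ $e^{\frac{2\pi i}{n}\ell P_1}=e^{-\frac{2\pi i}{n}\ell\sum_{i=2}^N P_i}$ on $\mathcal{H}_{\rm phys}$. Since $P_1$ has eigenvalues in $\{0,1,\dots,n-1\}$, the operator $P_1^2\bmod n$ is a fixed function of the unitary $e^{\frac{2\pi i}{n}P_1}$ (invert via a branch-fixed logarithm, square, reduce mod $n$); evaluating the very same function on $e^{-\frac{2\pi i}{n}\sum_{i\geq 2}P_i}$ yields $\bigl(\sum_{i=2}^N P_i\bigr)^2\bmod n$, because the sign drops out under squaring and $\bigl((\,\cdot\,)\bmod n\bigr)^2\bmod n=(\,\cdot\,)^2\bmod n$. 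Hence $P_1^2\bmod n=\bigl(\sum_{i=2}^N P_i\bigr)^2\bmod n$ as operators on $\mathcal{H}_{\rm phys}$, and expanding the square produces the bracketed operator in the statement. Finally I would transport this to $\mathcal{\tilde H}_{\overline 1}$ by applying Lemma~\ref{LemSpectralCalculus} with $A=B=\sum_{i=2}^N P_i$ --- which commutes with both $\Pi_{\rm phys}$ and $\Pi_{\overline 1}$, and whose matrix elements in the two bases agree by summing the $i\geq 2$ identity above --- and $f(x)=(x^2\bmod n)/(2m_1)$, concluding that this term of $\tilde H$ equals $\frac{1}{2m_1}\bigl[(\sum_{i=2}^N P_i)^2\bmod n\bigr]$. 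Adding the three contributions gives the claimed $\tilde H$. The one thing to be careful about throughout is that ``reduce mod $n$'' is a spectral-calculus operation that does \emph{not} distribute over sums, so the operator identities must be proven on $\mathcal{H}_{\rm phys}$ first and only then carried across the reduction map --- which is exactly what Lemma~\ref{LemSpectralCalculus} licenses.
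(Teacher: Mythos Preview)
Your proposal is correct and follows essentially the same strategy as the paper: the potentials and the $P_i$ with $i\geq 2$ pass through unchanged, and the $P_1$ term is handled via the total-momentum constraint on $\mathcal{H}_{\rm phys}$ together with Lemma~\ref{LemSpectralCalculus}. The only cosmetic differences are that the paper verifies $\tilde P_i=P_i$ for $i\geq 2$ via the shift operators $T_i=e^{\frac{2\pi i}{n}P_i}$ (or by direct matrix elements) rather than by conjugating with $\mathcal T_1$, and it identifies $\tilde T_1=T_2^{-1}\cdots T_N^{-1}$ on $\mathcal{\tilde H}_{\overline 1}$ before taking the logarithm, whereas you establish the equivalent operator identity $e^{\frac{2\pi i}{n}P_1}=e^{-\frac{2\pi i}{n}\sum_{i\geq 2}P_i}$ directly on $\mathcal{H}_{\rm phys}$ and only then transport it; these are the same argument in slightly different packaging.
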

Before turning to the proof, a few words of clarification are in place. The operator $\tilde H$ is by definition an operator on $\mathcal{\tilde H}_{\overline 1}$, whereas the operators $P_i^2$ and $V_{i,j}$ appearing in the lemma are defined on all of $\mathcal{H}^{\otimes N}$. Hence, for a more mathematically precise formulation of the lemma, one should write $\Pi_{\overline 1} V_{i,j}\Pi_{\overline 1}$ instead of $V_{i,j}$ and $\Pi_{\overline 1} (P_i^2 \enspace{\rm mod}\enspace n)\Pi_{\overline 1}$ instead of $P_i^2 \enspace{\rm mod}\enspace n$.

Note that the continuous version of this transformed Hamiltonian has already been described in Ref.~\cite{Vanrietvelde}, and in the special case $V_{i,j}=0$ in Ref.~\cite{Aharonov3}.
\begin{proof}
The previous comment also applies to the proof: we will sometimes omit the projectors. For example, we will now first show that
\[
   \tilde V_{i,j}=V_{i,j} \mbox{ for all }i<j,
\]
by which we mean that the relation~(\ref{eqME1}) holds for $A=B=V_{i,j}$. Our first claim follows by noting that
\[
   V_{i,j}|e,\mathbf{h}\rangle=v_{i,j}(h_{i-1}^{-1} h_{j-1})|e,\mathbf{h}\rangle
\]
and comparing with Eq.~(\ref{eqPotentialPhys}).

Next, we claim that
\begin{equation}
   \tilde P_i=P_i\mbox{ for all }i\geq 2,
   \label{eqPi2}
\end{equation}
and thus also $\tilde P_i^2\mbox{ mod }n=P_i^2\mbox{ mod }n$ according to Lemma~\ref{LemSpectralCalculus}. This can be seen by directly comparing the matrix elements,
\begin{eqnarray*}
\langle e,\mathbf{j}|P_i|e,\mathbf{h}\rangle &=&  \frac 1 n \sum_k k e^{2 \pi i (h_{i-1}-j_{i-1})k/n} \prod_{\ell\neq i-1} \delta_{h_\ell,j_\ell}\\
&=&\langle \mathbf{j};\mathbf{1}|P_i|\mathbf{h};\mathbf{1}\rangle\qquad (i\geq 2).
\end{eqnarray*}
However, there is a more elegant argument. Consider the shift operator $T_i:=e^{\frac{2 \pi i}n P_i}$. We have $T_i|e,\mathbf{h}\rangle=|e,\mathbf{j}\rangle$, where $\mathbf{j}$ is the same as $\mathbf{h}$, except that the $(i-1)$th entry has been increased by $1$. Therefore $\tilde T_i=T_i$. Moreover, $[T_i,\Pi_{\rm phys}]=0$ and, since $i\geq 2$, also $[T_i,\Pi_{\overline 1}]=0$. We have $P_i=\frac n {2\pi i}\log T_i$, and thus $\tilde P_i=P_i$ for all $i\geq 2$.

Finally, consider $T_1:=e^{\frac{2\pi i}n P_1}$. Direct calculation shows that
\[
   T_1 |\mathbf{h};\mathbf{1}\rangle=|\mathbf{h}-1;\mathbf{1}\rangle,
\]
i.e.\ each entry of $\mathbf{h}$ is shifted by negative one, modulo $n$. A corresponding operator doing this on $\mathcal{\tilde H}_{\overline 1}$ is given by
\[
   T_2^{-1}T_3^{-1}\ldots T_N^{-1}|e,\mathbf{h}\rangle=|e,\mathbf{h}-1\rangle.
\]
Thus, we can choose $\tilde T_1=T_2^{-1}T_3^{-1}\ldots T_N^{-1}=e^{\frac{2 \pi i} n (-P_2 - P_3-\ldots-P_N)}$ such that Eq.~(\ref{eqMatrixElements}) holds. Now, since $P_1=\frac n {2\pi i} \log T_1$ according to our definition of $\log$, Lemma~\ref{LemSpectralCalculus} implies that we can choose $\tilde P_1=\frac n {2\pi i}\log e^{\frac{2\pi i} n (-P_2-P_3-\ldots-P_N)}$. But this equals
\[
   \tilde P_1=(-P_2-P_3-\ldots-P_N)\mbox{ mod }n.
\]
Note that the operator in brackets has only integer eigenvalues. Since for integer $s\in\mathbb{Z}$, we have
\[
   (s\mbox{ mod }n)^2\mbox{ mod }n=s^2\mbox{ mod }n,
\]
we obtain via Lemma~\ref{LemSpectralCalculus}
\begin{eqnarray*}
   \widetilde{P_1^2\mbox{ mod }n}&=&\tilde P_1^2 \mbox{ mod }n\\
   &=&(-P_2-P_3-\ldots-P_N\mbox{ mod }n)^2\mbox{ mod }n\\
   &=& (P_2+P_3+\ldots+P_N)^2\mbox{ mod }n\\
   &=& \left(P_2^2+\ldots+P_N^2+\sum_{2\leq i <j}P_i P_j\right)\mbox{ mod }n.
\end{eqnarray*}
This proves the claimed form of $\tilde H$.
\end{proof}
An interesting upshot of this lemma is that relative to the first (and therefore any) particle, the Hamiltonian will \emph{always} feature interaction terms between the remaining particles in $\overline{1}$ due to the $P_iP_j$-terms arising through solving the discrete total momentum constraint. This is in particular the case when $V_{i,j}=0$ for all $i,j$, so that the total Hamiltonian $H$ of the perspective-neutral level in Eq.~\eqref{eqHamiltonian} is free.

\subsection{Time evolution in the alignability picture}
In this picture, the $\mathcal{G}$-system is in some alignable state $|\psi\rangle$, which is dynamically equivalent to $|e\rangle_1\otimes |\varphi\rangle_{\overline 1}$ for some $|\varphi\rangle\in\mathcal{H}^{\otimes (N-1)}$. What are the possible time evolutions ``relative to the first particle''? Such time evolution should be described by a global unitary $W$ which acts as\footnote{In principle, it would be sufficient to demand symmetry-equivalence except of identity, but we can always absorb the corresponding symmetry $U\in\mathcal{U}_{\rm sym}$ into the definition of $W$.}
\begin{equation}
   W\left(|e\rangle_1\otimes|\varphi\rangle_{\overline 1}\right)=|e\rangle_1\otimes W^{\overline 1}|\varphi\rangle_{\overline 1}.
   \label{eqEvolution}
\end{equation}
Which unitaries $W^{\overline 1}$ can be implemented if we demand that $W$ preserves (at least observational) equivalence? The discussion around Corollary~\ref{cor_nonlinalign} already indicates that there cannot be many such unitaries. To answer this question in more detail, we need the following lemma:
\begin{lemma}
\label{LemLinAlg}
If $W^\dagger \mathcal{A}_{\rm inv} W =\mathcal{A}_{\rm inv}$ then $\Pi_{\rm inv}(W\rho W^\dagger)=W\Pi_{\rm inv}(\rho)W^\dagger$.
\end{lemma}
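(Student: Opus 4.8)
The plan is to exploit the two defining features of the averaging map $\Pi_{\rm inv}$ from Eq.~\eqref{eqPiInv}: first, it takes values in $\ca_{\rm inv}$; second, it is the Hilbert--Schmidt adjoint of the inclusion $\ca_{\rm inv}\hookrightarrow\mathcal{L}(\ch^{\otimes N})$, i.e.\ $\tr(\Pi_{\rm inv}(\sigma)A)=\tr(\sigma A)$ for every $A\in\ca_{\rm inv}$ (stated in the text above, from \cite[Lemma 11]{QRF1}). Combining these with the hypothesis $W^\dagger\ca_{\rm inv}W=\ca_{\rm inv}$, one shows that both sides of the claimed identity are elements of $\ca_{\rm inv}$ whose Hilbert--Schmidt pairings against all of $\ca_{\rm inv}$ coincide; a nondegeneracy (or idempotence) argument then forces equality.

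First I would note that the hypothesis is symmetric under $W\leftrightarrow W^\dagger$: conjugating $W^\dagger\ca_{\rm inv}W=\ca_{\rm inv}$ by $W(\cdot)W^\dagger$ and using unitarity of $W$ gives $W\ca_{\rm inv}W^\dagger=\ca_{\rm inv}$ as well. Hence, since $\Pi_{\rm inv}(\rho)\in\ca_{\rm inv}$, the right-hand side $W\Pi_{\rm inv}(\rho)W^\dagger$ lies in $\ca_{\rm inv}$; the left-hand side $\Pi_{\rm inv}(W\rho W^\dagger)$ lies in $\ca_{\rm inv}$ by construction.

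Next I would compare their pairings with an arbitrary $A\in\ca_{\rm inv}$. On the one hand, $\tr\!\big(\Pi_{\rm inv}(W\rho W^\dagger)\,A\big)=\tr(W\rho W^\dagger A)$ by the adjoint property. On the other hand, using cyclicity of the trace together with $W^\dagger A W\in\ca_{\rm inv}$ (this is exactly the hypothesis), $\tr\!\big(W\Pi_{\rm inv}(\rho)W^\dagger A\big)=\tr\!\big(\Pi_{\rm inv}(\rho)\,W^\dagger A W\big)=\tr(\rho\,W^\dagger A W)=\tr(W\rho W^\dagger A)$, where the middle equality applies the adjoint property to $W^\dagger A W\in\ca_{\rm inv}$. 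Therefore $D:=\Pi_{\rm inv}(W\rho W^\dagger)-W\Pi_{\rm inv}(\rho)W^\dagger$ is an element of $\ca_{\rm inv}$ with $\tr(DA)=0$ for all $A\in\ca_{\rm inv}$. Since $\ca_{\rm inv}$ is a $*$-algebra, $D^\dagger\in\ca_{\rm inv}$, so $\tr(DD^\dagger)=0$ and hence $D=0$, which is the assertion. (Equivalently, one may phrase the last step as $W\rho W^\dagger\sim W\Pi_{\rm inv}(\rho)W^\dagger$ by the above, so that $\Pi_{\rm inv}(W\rho W^\dagger)=\Pi_{\rm inv}\big(W\Pi_{\rm inv}(\rho)W^\dagger\big)=W\Pi_{\rm inv}(\rho)W^\dagger$, using that $\Pi_{\rm inv}$ restricts to the identity on $\ca_{\rm inv}$, which is immediate from $[U,A]=0$ for $A\in\ca_{\rm inv}$, $U\in\mathcal{U}_{\rm sym}$.)

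I do not expect a genuine obstacle: the whole argument is the interplay of the adjoint/averaging property of $\Pi_{\rm inv}$ with the invariance $W^\dagger\ca_{\rm inv}W=\ca_{\rm inv}$. The only point deserving a moment's care is the closing nondegeneracy step, i.e.\ that an element of the $*$-algebra $\ca_{\rm inv}$ is uniquely determined by its Hilbert--Schmidt functionals on $\ca_{\rm inv}$ — or, equivalently, that $\Pi_{\rm inv}$ fixes $\ca_{\rm inv}$ pointwise and is idempotent.
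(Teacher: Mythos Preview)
Your proof is correct and essentially the same as the paper's: both use that $\Pi_{\rm inv}$ is the Hilbert--Schmidt orthogonal projection onto $\ca_{\rm inv}$ and that conjugation by $W$ is an orthogonal map preserving this subspace. The only cosmetic difference is that the paper carries this out via an orthonormal basis expansion of $\ca_{\rm inv}$, whereas you verify the defining trace pairings and close with a nondegeneracy (or idempotence) argument.
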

\begin{proof}
This can be understood via simple linear algebra: with respect to the Hilbert-Schmidt inner product $\langle X,Y\rangle:=\tr(X^\dagger Y)$, $\Pi_{\rm inv}$ is the orthogonal projection onto $\mathcal{A}_{\rm inv}$, and $W^\dagger \bullet W$ is an orthogonal map that	preserves $\mathcal{A}_{\rm inv}$, hence applying it before or after the projection does not make any difference. More formally, we have $\langle W^\dagger X W,Y\rangle=\langle X,W Y W^\dagger\rangle$, and if $\{X_i\}_{i=1}^{\dim \mathcal{A}_{\rm inv}}$ is any orthonormal basis of $\mathcal{A}_{\rm inv}$, then $\{W^\dagger X_i W\}_{i=1}^{\dim \mathcal{A}_{\rm inv}}$ is another orthonormal basis of $\mathcal{A}_{\rm inv}$, and so
\begin{eqnarray*}
\Pi_{\rm inv}(W\rho W^\dagger)&=&\sum_i \langle X_i,W\rho W^\dagger\rangle X_i\\
&=&W\left(\sum_i \langle W^\dagger X_i W,\rho\rangle W^\dagger X_i W\right)W^\dagger\\
&=& W \Pi_{\rm inv}(\rho)W^\dagger,
\end{eqnarray*}
proving the claim.
\end{proof}
This allows us to answer the question above:
\begin{theorem}
Let $W^{\overline 1}$ be some unitary on $\mathcal{H}_{\overline 1}$ which is implemented by an OE-preserving unitary $W$ as in Eq.~(\ref{eqEvolution}). Then there are $d:=|\mathcal{G}|^{N-1}$ complex phases $e^{i\theta_1},\ldots,e^{i\theta_d}$ and a $d\times d$ permutation matrix $P$ such that, in $\{|\mathbf{h}\rangle\}$-basis,
\[
   W^{\overline 1}=P\left(\begin{array}{ccc} e^{i\theta_1} & & \\ & \ddots & \\ & & e^{i \theta_d}\end{array}\right).
\]
In particular, $W^{\overline 1}$ cannot create any superpositions of classical configurations $|\mathbf{h}\rangle$, and the only Hamiltonians generating time-continuous evolution of this kind are of the form
\[
   H=\sum_{\mathbf{h}\in\mathcal{G}^{N-1}} E_{\mathbf{h}} |\mathbf{h}\rangle\langle\mathbf{h}| \qquad (E_{\mathbf{h}}\in\mathbb{R}).
\]
\end{theorem}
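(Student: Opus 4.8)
The plan is to first pin down what OE-preservation forces on $W$ via the explicit block structure of $\mathcal{A}_{\rm inv}$, and then to combine this with Eq.~(\ref{eqEvolution}) through a change-of-basis bookkeeping argument between $\{\ket{\mathbf{g}}\}$ and $\{\ket{\mathbf{h};\chi}\}$. Assume $N\geq2$ (for $N=1$ the statement is vacuous, $\mathcal{H}_{\overline 1}$ being one-dimensional).

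\textbf{Step 1: $W$ fixes $\mathcal{H}_{\rm phys}$ and permutes the nontrivial-character rays.} Since $W$ is OE-preserving, Lemma~\ref{LemChain} gives $W^\dagger\mathcal{A}_{\rm inv}W=\mathcal{A}_{\rm inv}$, hence also $W\mathcal{A}_{\rm inv}W^\dagger=\mathcal{A}_{\rm inv}$, so $A\mapsto WAW^\dagger$ is a $*$-automorphism of $\mathcal{A}_{\rm inv}$. By the earlier lemma describing $\mathcal{A}_{\rm inv}$ (Lemma~10 of Ref.~\cite{QRF1}), this algebra has Wedderburn decomposition $\mathcal{L}(\mathcal{H}_{\rm phys})\oplus\bigoplus_{\mathbf{h},\chi\neq\mathbf{1}}\mathbb{C}\,\ket{\mathbf{h};\chi}\!\bra{\mathbf{h};\chi}$, i.e.\ one block $\cong M_d(\mathbb{C})$ with $d=|\mathcal{G}|^{N-1}\geq2$, plus $d(|\mathcal{G}|-1)$ one-dimensional blocks. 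A $*$-automorphism permutes the simple summands and preserves their dimensions; since the $M_d$-block is the unique one of dimension $>1$, it is fixed, forcing $[W,\Pi_{\rm phys}]=0$, and the one-dimensional blocks are permuted, so there is a permutation $\pi$ of $\{(\mathbf{h},\chi):\chi\neq\mathbf{1}\}$ and phases with $W\ket{\mathbf{h};\chi}=e^{i\alpha_{\mathbf{h},\chi}}\ket{\pi(\mathbf{h},\chi)}$ for all $\chi\neq\mathbf{1}$. (Equivalently one can argue topologically: the pure states of $\mathcal{A}_{\rm inv}$ form a connected $\mathbb{CP}^{d-1}$ together with isolated points, and a homeomorphism must respect this.)

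\textbf{Step 2: deduce the form of $W^{\overline 1}$.} Using $\ket{e,\mathbf{h}}=|\mathcal{G}|^{-1/2}\sum_\chi\ket{\mathbf{h};\chi}$ and writing $W^{\overline 1}\ket{\mathbf{h}}_{\overline 1}=\sum_{\mathbf{j}}c_{\mathbf{j}\mathbf{h}}\ket{\mathbf{j}}_{\overline 1}$, I would expand $W\ket{e,\mathbf{h}}$ in the $\{\ket{\mathbf{j};\chi}\}$ basis two ways. From Eq.~(\ref{eqEvolution}), $W\ket{e,\mathbf{h}}=\ket{e}_1\otimes W^{\overline 1}\ket{\mathbf{h}}_{\overline 1}=|\mathcal{G}|^{-1/2}\sum_{\mathbf{j},\chi}c_{\mathbf{j}\mathbf{h}}\ket{\mathbf{j};\chi}$, so the coefficient on $\ket{\mathbf{j};\chi}$ equals $|\mathcal{G}|^{-1/2}c_{\mathbf{j}\mathbf{h}}$ for \emph{every} $\chi$. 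From Step~1, $W\ket{e,\mathbf{h}}=|\mathcal{G}|^{-1/2}\big(W\ket{\mathbf{h};\mathbf{1}}+\sum_{\chi\neq\mathbf{1}}e^{i\alpha_{\mathbf{h},\chi}}\ket{\pi(\mathbf{h},\chi)}\big)$ with $W\ket{\mathbf{h};\mathbf{1}}\in\mathcal{H}_{\rm phys}$; hence the $\chi\neq\mathbf{1}$ part of $W\ket{e,\mathbf{h}}$ is supported on exactly the $|\mathcal{G}|-1$ distinct vectors $\ket{\pi(\mathbf{h},\chi)}$, each with coefficient of modulus $|\mathcal{G}|^{-1/2}$. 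Comparing the two $\chi\neq\mathbf{1}$ supports, whose sizes are $|\mathcal{G}|-1$ and $\big(\#\{\mathbf{j}:c_{\mathbf{j}\mathbf{h}}\neq0\}\big)(|\mathcal{G}|-1)$, forces a unique $\mathbf{j}=\sigma(\mathbf{h})$ with $c_{\mathbf{j}\mathbf{h}}\neq0$, and matching moduli gives $|c_{\sigma(\mathbf{h})\mathbf{h}}|=1$. Thus $W^{\overline 1}\ket{\mathbf{h}}_{\overline 1}=e^{i\theta_{\mathbf{h}}}\ket{\sigma(\mathbf{h})}_{\overline 1}$; unitarity of $W^{\overline 1}$ makes $\sigma$ a permutation $P$ of $\mathcal{G}^{N-1}$, which is precisely $W^{\overline 1}=P\,\mathrm{diag}(e^{i\theta_1},\dots,e^{i\theta_d})$.

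\textbf{Step 3: continuous dynamics.} If $W(t)=e^{-itH}$ is a one-parameter group acting as in Eq.~(\ref{eqEvolution}), then $W^{\overline 1}(t)$ is a one-parameter unitary group on $\mathcal{H}_{\overline 1}$, and by Step~2 each $W^{\overline 1}(t)$ is a generalized permutation matrix in $\{\ket{\mathbf{h}}\}$. For $|t|$ small, $\|W^{\overline 1}(t)-\mathbf{1}\|<1$, so each diagonal entry $\bra{\mathbf{h}}W^{\overline 1}(t)\ket{\mathbf{h}}$ is nonzero; a generalized permutation with nonzero diagonal is diagonal. Hence $W^{\overline 1}(t)$ is diagonal near $t=0$, so its generator is too, giving $H=\sum_{\mathbf{h}}E_{\mathbf{h}}\ket{\mathbf{h}}\!\bra{\mathbf{h}}$ with $E_{\mathbf{h}}\in\mathbb{R}$. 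The main obstacle is the bookkeeping of Step~2: keeping straight that $W$ simultaneously fixes $\mathcal{H}_{\rm phys}$ and permutes the $\chi\neq\mathbf{1}$ rays, and exploiting that the basis change $\ket{e,\mathbf{h}}\leftrightarrow\ket{\mathbf{h};\chi}$ spreads a single $\{\ket{\mathbf{j}}\}$-component of $W^{\overline 1}\ket{\mathbf{h}}$ uniformly over all characters, so that a support/cardinality comparison collapses $W^{\overline 1}\ket{\mathbf{h}}$ to rank one. Steps~1 and~3 are then routine.
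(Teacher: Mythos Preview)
Your proof is correct, but it takes a genuinely different route from the paper's. The paper does not analyze the automorphism of $\mathcal{A}_{\rm inv}$ structurally. Instead, it feeds the specific aligned state $|e\rangle\langle e|_1\otimes|\mathbf{m}\rangle\langle\mathbf{m}|_{\overline 1}$ through $W$, applies $\Pi_{\rm inv}$, and invokes Lemma~\ref{LemLinAlg} to equate $\Pi_{\rm inv}(W\rho W^\dagger)$ with $W\Pi_{\rm inv}(\rho)W^\dagger$. Since $\Pi_{\rm inv}(|e\rangle\langle e|_1\otimes|\mathbf{m}\rangle\langle\mathbf{m}|_{\overline 1})=|\mathcal{G}|^{-1}\Pi_{\mathbf{m}}$ has eigenvalues in $\{0,|\mathcal{G}|^{-1}\}$, comparing spectra forces $|w_{\mathbf{h},\mathbf{m}}|^2\in\{0,1\}$ directly. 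This is shorter and never needs the Wedderburn block structure or the permutation $\pi$ of the $\chi\neq\mathbf{1}$ rays. Your approach, by contrast, extracts more: you actually pin down how $W$ acts on all of $\ch^{\otimes N}$ (it preserves $\ch_{\rm phys}$ and monomially permutes the $\ket{\mathbf{h};\chi}$ with $\chi\neq\mathbf{1}$), and your support-counting in Step~2 is a clean way to convert that into the statement about $W^{\overline 1}$. Your Step~3 also makes explicit the continuity argument for the Hamiltonian form, which the paper leaves to the reader. Both arguments rely on Lemma~\ref{LemChain} to get $W\mathcal{A}_{\rm inv}W^\dagger=\mathcal{A}_{\rm inv}$; after that they diverge, with the paper favoring a spectral shortcut and you favoring a structural classification.
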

\textit{Remark.} The theorem also holds if we alternatively assume that $W$ is SE-preserving: due to Lemma~\ref{LemChain}, this is an even stronger condition than OE-preservation.
\begin{proof}
Define $w_{\mathbf{h},\mathbf{j}}:=\langle\mathbf{h}|W^{\overline 1}|\mathbf{j}\rangle$, then
\[
   W\left(|e\rangle\langle e|_1\otimes |\mathbf{m}\rangle\langle\mathbf{m}|_{\overline 1}\right)W^\dagger=
   |e\rangle\langle e|_1\otimes \sum_{\mathbf{h},\mathbf{j}} w_{\mathbf{h},\mathbf{m}} \overline w_{\mathbf{j},\mathbf{m}} |\mathbf{h}\rangle\langle\mathbf{j}|.
\]
According to~\cite[Lemma 22]{QRF1}, applying $\Pi_{\rm inv}$ to this expression yields
\begin{equation}
   \sum_{\mathbf{h},\mathbf{j}} \frac{w_{\mathbf{h},\mathbf{m}} \overline w_{\mathbf{j},\mathbf{m}}}{|\mathcal{G}|}|\mathbf{h};\mathbf{1}\rangle\langle\mathbf{j};\mathbf{1}|+\sum_{\mathbf{h}} \frac{|w_{\mathbf{h},\mathbf{m}}|^2}{|\mathcal{G}|} \Pi_{\mathbf{h};\chi\neq \mathbf{1}}.
   \label{eqCompare1}
\end{equation}
On the other hand, the same lemma implies
\begin{equation}
   \Pi_{\rm inv}\left(|e\rangle\langle e|_1\otimes |\mathbf{m}\rangle\langle\mathbf{m}|_{\overline 1}\right)=\frac 1 {|\mathcal{G}|} \Pi_{\mathbf{m}},
   \label{eqCompare2}
\end{equation}
where $\Pi_{\mathbf{m}}=\sum_{g\in\cg} |g,\mathbf{m}g\rangle\langle g,\mathbf{m}g|$. Since $W$ is OE-preserving, Lemma~\ref{LemChain} tells us that $W^\dagger \mathcal{A}_{\rm inv} W =\mathcal{A}_{\rm inv}$, and thus $\Pi_{\rm inv}(W\rho W^\dagger)=W\Pi_{\rm inv}(\rho)W^\dagger$ for all $\rho$ due to Lemma~\ref{LemLinAlg}. Hence, conjugation of~(\ref{eqCompare2}) with $W$ must give us (\ref{eqCompare1}): in particular, both expressions must have the same sets of eigenvalues. But the eigenvalues of~(\ref{eqCompare2}) are $0$ and $1/|\mathcal{G}|$, and so $|w_{\mathbf{h},\mathbf{m}}|^2\in\{0,1\}$ for all $\mathbf{h}$ and $\mathbf{m}$. Since $W^{\overline 1}$ is a unitary matrix, this means that every column (or row) contains exactly one entry of the form $e^{i\theta}$ for some phase $\theta\in\R$. Hence we can write $W^{\overline 1}$ in the claimed form.
\end{proof}
Thus, we see that time evolution for alignable states is severely limited: if we demand that time evolution preserves the symmetries of the $\cg$-system --- at least in the weak sense of OE-preservation --- then no superpositions of distinct relations $\mathbf{h}$ can ever be created. This is in stark contrast to relational states which allow dynamics corresponding to \emph{all} unitaries on $\ca_{\rm phys}$, admitting non-zero transition amplitudes between all distinct relations $\mathbf{h}$.

\section{Conclusions}
\label{SecConclusions}
We have provided a systematic and fairly complete analysis of quantum reference frames subject to a finite Abelian group symmetry principle, extending our earlier work \cite{QRF1}. The setup is, on the one hand, simple enough to permit a rigorous information-theoretic analysis and, on the other, sufficiently rich so as to explore many of the conceptual and structural conundrums appearing also in setups with more complicated symmetry groups. Finite Abelian groups thereby offer an ideal testbed for quantum reference frame physics and even (lattice) gauge theories. We have taken advantage of this to characterize structures associated with constraint quantization from an information-theoretic perspective that appear in the perspective-neutral approach to quantum frame covariance. We have further revealed the relations between different approaches to QRFs, especially the ``alignability'' and perspective-neutral approach, the latter of which also encompasses the Page-Wootters and relational observable formalisms often invoked in quantum gravity and cosmology. 

In particular, in conjunction with Ref.~\cite{QRF1}, this article offers several compelling arguments in favor of the perspective-neutral physical Hilbert space $\ch_{\rm phys}$ and the associated algebra of $\ca_{\rm phys}$ relational observables:
\begin{itemize}
\item The subspace $\ch_{\rm phys}$ is the maximal subspace of states that can be purified in an external-frame-independent manner.
\item In contrast to the invariant (incoherently group-averaged) algebra $\ca_{\rm inv}$, the (coherently-averaged) algebra of relational observables $\ca_{\rm phys}$ admits an unambiguous and invariant notion of partial trace\,---\,the \emph{relational trace}\,---\,generalizing the standard partial trace to contexts with symmetry. For this reason, the perspective-neutral approach does not feature the `paradox of the third particle' \cite{QRF1}.
\item The set of alignable states admits no non-trivial \emph{symmetry-preserving} unitary dynamics with non-vanishing transition probabilities between distinct subsystem relations. By contrast, $\ch_{\rm phys}$ admits non-trivial transitions between arbitrary subsystem relations. 
\end{itemize}
We can rephrase this finding as follows. If there are actual physical $\cg$-systems in nature, subject to the corresponding symmetry principles but evolving non-trivially in time, then it is inconsistent to assume that these systems are for all times described by alignable states. In contrast, it \emph{is} consistent to assume that these systems are described by relational states of $\ch_{\rm phys}$, and the consistency of this perspective-neutral description is preserved under time evolution and composition.

Several of these observations have their root in the fact that $\ch_{\rm phys}$ is a proper Hilbert space that is automatically invariant under symmetries. As such it comes with a non-trivial set of observables and dynamics with coherence across different subsystem relations. By contrast, the set of invariant, i.e.\ incoherently group-averaged states, while forming a convex set, does not constitute the full set of density matrices over some Hilbert space
and the set of pure alignable states is \emph{not} even a Hilbert space. Lastly, the set of pure states that are already aligned to a given frame \emph{does} form a Hilbert space, however, its unitaries are not part of the symmetry group. This severely restricts the possibilities for observables and transformations that can be applied to invariant, alignable and aligned states, while remaining consistent with the symmetries.

This is also ultimately the reason why the ``alignability'' and perspective-neutral approaches are \emph{kinematically equivalent}, yet \emph{dynamically inequivalent}\,---\,when requiring symmetry-preservation. It is clear from Sections~\ref{SubsecPageWootters} and~\ref{ssec_interac}, however, that the dynamics on perspective-neutral states \emph{is} equivalent to the reduced dynamics on aligned states. The price one pays is that the latter is not a symmetry as observed above; indeed, since it has to leave the reference frame state invariant (e.g., leave the frame particle in the origin) while non-trivially evolving the remaining degrees of freedom, it actually changes the relation between the frame and its complement (which is invariant under symmetries).\footnote{For the reader familiar with gauge theories, note that there is a difference in nomenclature here. What we call symmetries would in gauge theories be called field-dependent gauge transformations, while the transformations changing the relation between the frame and the remaining degrees of freedom are called `symmetries'~\cite{Donnelly:2016auv,CH1}. For consistency with our quantum information based nomenclature in Ref.~\cite{QRF1}, we call these concepts differently here.} This highlights that insisting on symmetry-preservation in the context of alignable and aligned states is an unnecessarily rigid concept. Indeed, since any particle in our model can assume the role of a reference frame, it is clear that one needs to permit the relation between the frame and its complement to change if one is to obtain a non-trivial dynamics that permits transitions between different subsystem relations. 

The results of this paper and our previous publication~\cite{QRF1} can be seen as steps towards an axiomatic and operational approach to relational quantum physics. Such relational ideas are relevant in gauge theories or quantum gravity, where \emph{internal} quantum systems have to be promoted to reference frames in certain contexts. However, the standard methods to do so --- via relational observables \cite{Rovellibook,Rovelli1,Rovelli2,Rovelli3,Dittrich1,Dittrich2,Thiemann,Tambornino,Hoehn:2018aqt,Hoehn:2018whn,Hoehn:2019owq,Hoehn:2020epv,Chataignier,Chataignier2,Chataignier3}, the Page-Wootters formalism~\cite{Page,Hoehn:2019owq,Hoehn:2020epv,Giovanetti,Alex1,Alex3,Castro,Baumann:2021ifs}, or edge modes \cite{Casini:2013rba,Donnelly:2016auv,Geiller:2019bti,Gomes:2018shn,Riello:2021lfl,Wieland:2017zkf,Wieland:2017cmf,Freidel:2020xyx,CH1} --- tend to sweep certain questions under the rug which are center stage in quantum information theory: what is the role of measurement in such approaches? Do these approaches represent the only possibilities to formulate quantum theory relationally? How should we think of an observer that concretely assigns states relative to internal quantum systems? What happens to the \emph{quantum information about other systems} contained in conditional states relative to a quantum rod or clock?

Here and in Ref.~\cite{QRF1}, we have answered several of these questions for the case of finite Abelian symmetry groups. This contributes to the goal of achieving a foundational understanding of relational quantum physics from first principles. It is now well-known that the complete Hilbert space formalism of quantum theory --- in its ordinary formulation that implicitly assumes perfect external rods and clocks --- can be derived from such principles~\cite{LesHouches,HardyAxioms,DakicBrukner,MasanesMueller,Chiribella,HoehnToolbox,BarnumHilgert}. Achieving a similar reconstruction of the most general formulation of relational quantum physics would not only improve our understanding of it, but might also provide us with a clearer operational perspective on the meaning of its formalism in quantum gravity and beyond.

\section*{Acknowledgments}
We would like to thank Anne-Catherine de la Hamette, Stefan Ludescher, Isha Kotecha and Fabio Mele for inspiring discussions. PAH is grateful for support from the Foundational Questions Institute under grant number FQXi-RFP-1801A. MK and MPM acknowledge support from the Austrian Science Fund (FWF) through the project P 33730-N. MK acknowledges financial support from the European Commission via Testing the Large-Scale Limit of Quantum Mechanics (TEQ) (No.\ 766900) project. This work was supported in part by funding from Okinawa Institute of Science and Technology Graduate University. This research was supported in part by Perimeter Institute for Theoretical Physics. Research at Perimeter Institute is supported by the Government of Canada through the Department of Innovation, Science, and Economic Development, and by the Province of Ontario through the Ministry of Colleges and Universities.


\begin{thebibliography}{99}

\bibitem{QRF1}
M.\ Krumm, P.\ A.\ H\"ohn, and M.\ P.\ M\"uller, \emph{Quantum reference frame transformations as symmetries and the paradox of the third particle}, Quantum \textbf{5}, 530 (2021).

\bibitem{Rovellibook}
C.~Rovelli, \emph{Quantum gravity}, Cambridge University Press, 2004.

\bibitem{Thiemann}
T.~Thiemann, \emph{Modern canonical quantum general relativity}, Cambridge University Press, 2007.

\bibitem{Tambornino}
J.~Tambornino,
\emph{Relational Observables in Gravity: a Review,}
SIGMA \textbf{8}, 017 (2012).

\bibitem{Rovelli1}
C.~Rovelli,
\emph{What Is Observable in Classical and Quantum Gravity?},
Class.\ Quant.\ Grav.\ \textbf{8}, 297 (1991).

\bibitem{Rovelli2}
C.~Rovelli,
\emph{Quantum reference systems},
Class.\ Quant.\ Grav.\ \textbf{8}, 317 (1991).

\bibitem{Rovelli3}
C.~Rovelli,
\emph{Time in Quantum Gravity: Physics Beyond the Schr\"odinger Regime},
Phys.\ Rev.\ D \textbf{43}, 442-456 (1991).

\bibitem{Dittrich1}
B.~Dittrich, \emph{Partial and complete observables for Hamiltonian constrained systems}, Gen.\ Rel.\ Grav.\ \textbf{39}, 1891 (2007).

\bibitem{Dittrich2}
B.~Dittrich, \emph{Partial and complete observables for canonical general relativity}, Class.\ Quant.\ Grav.\ \textbf{23}, 6155 (2006).

\bibitem{Aberg}
J.\ \r{A}berg, \emph{Catalytic Coherence}, Phys.\ Rev.\ Lett.\ \textbf{113}, 150402 (2014).

\bibitem{LJR}
M.\ Lostaglio, D.\ Jennings, and T.\ Rudolph, \emph{Description of quantum coherence in thermodynamic processes requires constraints beyond free energy}, Nat.\ Comm.\ \textbf{6}, 6383 (2015).

\bibitem{LKJR}
M.\ Lostaglio, K.\ Korzekwa, D.\ Jennings, and T.\ Rudolph, \emph{Quantum Coherence, Time-Translation Symmetry, and Thermodynamics}, Phys.\ Rev.\ X \textbf{5}, 021001 (2015).

\bibitem{LostaglioMueller}
M.\ Lostaglio and M.\ P.\ M\"uller, \emph{Coherence and asymmetry cannot be broadcast}, Phys.\ Rev.\ Lett.\ \textbf{123}, 020403 (2019).

\bibitem{MarvianSpekkens}
I.\ Marvian and R.\ W.\ Spekkens, \emph{No-Broadcasting Theorem for Quantum Asymmetry and Coherence and a Trade-off Relation for Approximate Broadcasting}, Phys.\ Rev.\ Lett.\ \textbf{123}, 020404 (2019).

\bibitem{Erker}
P.\ Erker, M. T.\ Mitchison, R.\ Silva, M.\ P.\ Woods, N.\ Brunner, and M.\ Huber, \emph{Autonomous quantum clocks: does thermodynamics limit our ability to measure time?}, Phys.\ Rev.\ X \textbf{7}, 031022 (2017).

\bibitem{Cwiklinski}
P.\ \'Cwikli\'nski, M.\ Studzi\'nski, M.\ Horodecki, and J.\ Oppenheim, \emph{Limitations on the Evolution of Quantum Coherences: Towards Fully Quantum Second Laws of Thermodynamics}, Phys.\ Rev.\ Lett.\ \textbf{115}, 210403 (2015).

\bibitem{Woods1}
M~P.~Woods, R.~Silva, and J.~Oppenheim, \emph{Autonomous quantum machines and finite-sized clocks} Ann.\ Henri Poincar\'e \textbf{20}, 125 (2019). 

\bibitem{Woods2}
M.~P.~Woods, and M~Horodecki, \emph{The resource theoretic paradigm of quantum thermodynamics with control}, arXiv:1912.05562 [quant-ph].

\bibitem{Bartlett}
S.\ D.\ Bartlett, T.\ Rudolph, and R.\ W.\ Spekkens, \emph{Reference frames, superselection rules, and quantum information}, Rev.\ Mod.\ Phys.\ \textbf{79}, 555 (2007).

\bibitem{Smith2019}
A.\ R.\ H.\ Smith, \emph{Communicating without shared reference frames}, Phys.\ Rev.\ \textbf{A 99}, 052315 (2019).

\bibitem{Marvian}
I.\ Marvian, \emph{Symmetry, Asymmetry and Quantum Information}, PhD thesis, University of Waterloo, 2012.

\bibitem{ResourceTheoryQRF}
G.\ Gour and R.\ W.\ Spekkens, \emph{The resource theory of quantum reference frames: manipulations and monotones}, New J.\ Phys.\ \textbf{10}, 033023 (2008).

\bibitem{Frameness}
G.\ Gour, I.\ Marvian, and R.\ W.\ Spekkens, \emph{Measuring the quality of a quantum reference frame: The relative entropy of frameness}, Phys.\ Rev.\ A \textbf{80}, 012307 (2009).

\bibitem{Modes}
I.\ Marvian and R.\ W.\ Spekkens, \emph{Modes of asymmetry: The application of harmonic analysis to symmetric quantum dynamics and quantum reference frames}, Phys.\ Rev.\ A \textbf{90}, 062110 (2014).

\bibitem{Palmer}
M.\ C.\ Palmer, F.\ Girelli, and S.\ D.\ Bartlett, \emph{Changing quantum reference frames}, Phys.\ Rev.\ A \textbf{89}, 052121 (2014).

\bibitem{Smith2016}
A.\ R.\ H.\ Smith, M.\ Piani, and R.\ B.\ Mann, \emph{Quantum reference frames associated with noncompact groups: the case of translations and boosts, and the role of mass}, Phys.\ Rev.\ \textbf{A 94}, 012333 (2016).

\bibitem{Aharonov1}
Y.\ Aharonov and L.\ Susskind, \emph{Charge Superselection Rule}, Phys.\ Rev.\ \textbf{155}, 1428 (1967).

\bibitem{Aharonov2}
Y.\ Aharonov and L.\ Susskind, \emph{Observability of the Sign Change of Spinors under $2\pi$ Rotations}, Phys.\ Rev.\ \textbf{158}, 1237 (1967).

\bibitem{Aharonov3}
Y.\ Aharonov and T.\ Kaufherr, \emph{Quantum frames of reference}, Phys.\ Rev.\ D \textbf{30}, 368 (1984).

\bibitem{Wigner}
E.\ Wigner, \emph{Die Messung quantenmechanischer Operatoren}, Z.\ Phys.\ \textbf{133}, 101 (1952).

\bibitem{Araki}
H.\ Araki and M.\ M.\ Yanase, \emph{Measurement of Quantum Mechanical Operators}, Phys.\ Rev.\ \textbf{120}, 622 (1960).

\bibitem{Yanase}
M.\ M.\ Yanase, \emph{Optimal Measuring Apparatus}, Phys.\ Rev.\ \textbf{123}, 666 (1961).

\bibitem{Loveridge2017}
L.\ Loveridge, B.\ Busch, and T.\ Miyadera, \emph{Relativity of quantum states and observables}, EPL \textbf{117}, 40004 (2017).

\bibitem{Loveridge2018}
L.\ Loveridge, T.\ Miyadera, and P.\ Busch, \emph{Symmetry, Reference Frames, and Relational Quantities in Quantum Mechanics}, Found.\ Phys.\ \textbf{48}, 135--198 (2018).

\bibitem{Miyadera}
T.\ Miyadera, L.\ Loveridge, and P.\ Busch,  \emph{Approximating relational observables by absolute quantities: a quantum accuracy-size trade-off}, J.\ Phys. A: Mathematical and Theoretical, \textbf{49}(18), 185301 (2016).

\bibitem{Loveridge2020}
L.~Loveridge, \emph{A relational perspective on the Wigner-Araki-Yanase theorem}, J.\ Phys.: Conf.\ Ser.\ \textbf{1638}, 012009 (2020).

\bibitem{HoehnMueller}
P.\ A.\ H\"ohn and M.\ P.\ M\"uller, \emph{An operational approach to spacetime symmetries: Lorentz transformations from quantum communication}, New J.\ Phys.\ \textbf{18}, 063026 (2016).

\bibitem{Giacomini}
F.\ Giacomini, E.\ Castro-Ruiz, and \v{C}.\ Brukner, \emph{Quantum mechanics and the covariance of physical laws in quantum reference frames}, Nat.\ Comm.\ \textbf{10}, 494 (2019).

\bibitem{Vanrietvelde}
A.\ Vanrietvelde, P.\ A.\ H\"ohn, F.\ Giacomini, and E.\ Castro-Ruiz, \emph{A change of perspective: switching quantum reference frames via a perspective-neutral framework}, Quantum \textbf{4}, 225 (2020).

\bibitem{Hamette}
A.\ de la Hamette and T.\ Galley, \emph{Quantum reference frames for general symmetry groups}, Quantum \textbf{4}, 367 (2020).

\bibitem{Vanrietvelde2}
A.\ Vanrietvelde, P.\ A.\ H\"ohn, and F.\ Giacomini, \emph{Switching quantum reference frames in the N-body problem and the absence of global relational perspectives}, arXiv:1809.05093 [quant-ph].

\bibitem{Hoehn:2018aqt}
P.~A.~H\"ohn and A.~Vanrietvelde, \emph{How to switch between relational quantum clocks}, New J.\ Phys.\ \textbf{22}, 123048 (2020).

\bibitem{Hoehn:2018whn}
P.~A.~H\"ohn, \emph{Switching Internal Times and a New Perspective on the `Wave Function of the Universe'}, Universe \textbf{5}, no.5, 116 (2019).

\bibitem{Hoehn:2019owq}
P.~A.~H\"ohn, A.~R.~H.~Smith and M.~P.~E.~Lock, \emph{The Trinity of Relational Quantum Dynamics}, Phys.\ Rev.\ D \textbf{104}, 066001 (2021).

\bibitem{Hoehn:2020epv}
P.~A.~H\"ohn, A.~R.~H.~Smith and M.~P.~E.~Lock, \emph{Equivalence of approaches to relational quantum dynamics in relativistic settings}, Front.\ in Phys.\ \textbf{9}, 181 (2021).

\bibitem{Hoehn:2021wet}
P.~A.~H\"ohn, M.~P.~E.~Lock, S.~A.~Ahmad, A.~R.~H.~Smith and T.~D.~Galley, \emph{Quantum Relativity of Subsystems}, Phys.\ Rev.\ Lett.\ \textbf{128}, 170401 (2022).

\bibitem{Giacomini:2021gei}
F.~Giacomini, \emph{Spacetime Quantum Reference Frames and superpositions of proper times}, Quantum \textbf{5}, 508 (2021).

\bibitem{Giacomini-spin1}
F.~Giacomini, E.~Castro-Ruiz and \v{C}.~Brukner,
\emph{Relativistic Quantum Reference Frames: The Operational Meaning of Spin}, Phys.\ Rev.\ Lett.\ \textbf{123}, 090404 (2019).

\bibitem{Giacomini-spin2}
L.\ F.\ Streiter, F.\ Giacomini, and \v{C}.\ Brukner, \emph{A Relativistic Bell Test within Quantum Reference Frames}, Phys.\ Rev.\ Lett.\ \textbf{126}, 230403 (2021).

\bibitem{Castro}
E.\ Castro-Ruiz, F.\ Giacomini, A.\ Belenchia, and \v{C}.\ Brukner, \emph{Quantum clocks and the temporal localisability of events in the presence of gravitating quantum systems}, Nat.\ Comm.\ \textbf{11}, 2672 (2020).

\bibitem{Chataignier}
L.~Chataignier,
\emph{Construction of quantum Dirac observables and the emergence of WKB time},
Phys.\ Rev.\ D \textbf{101}, no.8, 086001 (2020).

\bibitem{Chataignier2}
L.~Chataignier,
\emph{Relational observables, reference frames, and conditional probabilities},
Phys.\ Rev.\ D \textbf{103}, no.2, 026013 (2021).

\bibitem{Chataignier3}
L.~Chataignier and M.~Kr\"amer,
\emph{Unitarity of quantum-gravitational corrections to primordial fluctuations in the Born-Oppenheimer approach}
Phys.\ Rev.\ D \textbf{103}, no.6, 066005 (2021).

\bibitem{Ballesteros:2020lgl}
A.~Ballesteros, F.~Giacomini and G.~Gubitosi,
\emph{The group structure of dynamical transformations between quantum reference frames},
Quantum \textbf{5}, 470 (2021).

\bibitem{Mikusch:2021kro}
M.~Mikusch, L.~C.~Barbado and \v{C}.~Brukner,
\emph{Transformation of Spin in Quantum Reference Frames}, Phys.\ Rev.\ Research \textbf{3}, 043138 (2021).


\bibitem{Baumann:2021ifs}
V.~Baumann, M.~Krumm, P.~A.~Gu\'erin and \v{C}.~Brukner, \emph{Noncausal Page-Wootters circuits}, Phys.\ Rev.\ Research \textbf{4}, 013180 (2022).

\bibitem{Savi:2020qdl}
M.~F.~Savi and R.~M.~Angelo,
\emph{Quantum resource covariance},
Phys.\ Rev.\ A \textbf{103}, no.2, 022220 (2021).

\bibitem{Guerin:2018fja}
P.~A.~Gu\'erin and \v{C}.~Brukner,
\emph{Observer-dependent locality of quantum events},
New J.\ Phys.\ \textbf{20}, no.10, 103031 (2018).


\bibitem{Angelo}
R.\ M.\ Angelo, N.\ Brunner, S.\ Popescu, A.\ J.\ Short, and P.\ Skrzypczyk, \emph{Physics within  a quantum reference frame}, J.\ Phys.\ A: Math.\ Theor.\ \textbf{44}, 145304 (2011).


\bibitem{Page}
D.\ N.\ Page, and W.\ K.\ Wootters, \emph{Evolution without evolution: Dynamics described by stationary observables}, Phys.\ Rev.\ \textbf{D 27}, 2885 (1983).

\bibitem{Giovanetti}
V.\ Giovannetti, S.\ Lloyd, and L.\ Maccone, \emph{Quantum time}, Phys.\ Rev.\ D \textbf{79}, 945933 (2015).

\bibitem{Alex1}
A.~R.~H.~Smith and M.~Ahmadi,
\emph{Quantum clocks observe classical and quantum time dilation},
Nat.\ Comm.\ \textbf{11}, no.1, 5360 (2020).

\bibitem{Alex3}
A.\ R.\ H.\ Smith and M.\ Ahmadi, \emph{Quantizing time: interacting clocks and systems},
Quantum \textbf{3}, 160 (2019).


\bibitem{Hardy1}
L.\ Hardy, \emph{The construction interpretation: a conceptual road to quantum gravity}, arXiv:1807.10980.

\bibitem{Hardy2}
L.\ Hardy, \emph{Implementation of the Quantum Equivalence Principle}, in F.\ Finster, D.\ Giulini, J.\ Kleiner, and J.\ Tolksdorf (eds.), Progress and Visions in Quantum Theory in View of Gravity, Birkh\"auser, Cham, 2020.

\bibitem{GiacominiBrukner}
F.\ Giacomini and \v{C}.\ Brukner, \emph{Einstein's Equivalence principle for superpositions of gravitational fields and quantum reference frames}, arXiv:2012.13754.

\bibitem{Conrad}
K.\ Conrad, \emph{Characters of finite Abelian groups}, 2010. \url{https://kconrad.math.uconn.edu/blurbs/grouptheory/charthy.pdf}

\bibitem{periodic}
P.~A.~H\"ohn, M.~P.~E.~Lock, and L.~Chataignier, \emph{Relational dynamics with periodic clock}, to appear.

\bibitem{Simon}
B.\ Simon, \emph{Representations of Finite and Compact Groups}, American Mathematical Society, 1996.

\bibitem{Rovelli}
C.\ Rovelli, \emph{Why Gauge?}, Found.\ Phys.\ \textbf{44}, 91--104 (2014).

\bibitem{Casini:2013rba}
H.~Casini, M.~Huerta and J.~A.~Rosabal,
\emph{Remarks on entanglement entropy for gauge fields},
Phys. Rev. D \textbf{89}, no.8, 085012 (2014).

\bibitem{Donnelly:2016auv}
W.~Donnelly and L.~Freidel,
\emph{Local subsystems in gauge theory and gravity},
JHEP \textbf{09}, 102 (2016).

\bibitem{Geiller:2019bti}
M.~Geiller and P.~Jai-akson,
\emph{Extended actions, dynamics of edge modes, and entanglement entropy},
JHEP \textbf{20}, 134 (2020).

\bibitem{Gomes:2018shn}
H.~Gomes and A.~Riello,
\emph{Unified geometric framework for boundary charges and particle dressings}, 
Phys.\ Rev.\ D \textbf{98}, no.2, 025013 (2018).

\bibitem{Riello:2021lfl}
A.~Riello,
\emph{Edge modes without edge modes},
arXiv:2104.10182 [hep-th].

\bibitem{Wieland:2017zkf}
W.~Wieland,
\emph{New boundary variables for classical and quantum gravity on a null surface},
Class. Quant. Grav. \textbf{34}, no.21, 215008 (2017).

\bibitem{Wieland:2017cmf}
W.~Wieland,
\emph{Fock representation of gravitational boundary modes and the discreteness of the area spectrum},
Annales Henri Poincare \textbf{18}, no.11, 3695-3717 (2017).

\bibitem{Freidel:2020xyx}
L.~Freidel, M.~Geiller and D.~Pranzetti,
\emph{Edge modes of gravity -- I: Corner potentials and charges}, J.\ High Energy Phys.\ \textbf{2020}, 26 (2020).

\bibitem{CH1}
S.~Carrozza and P.~A.~H\"ohn, \emph{Edge modes as reference frames and boundary actions from post-selection}, J.\ High Energy Phys.\ \textbf{2022}, 172 (2022).

\bibitem{all}
A.-C.~de la Hamette, T.~G.~Galley, P.~A.~H\"ohn, L.~Loveridge, and M.~P.~M\"uller, \emph{Perspective-neutral approach to quantum frame covariance for general symmetry groups}, arXiv:2110.13824.

\bibitem{Davidson}
K.\ R.\ Davidson, \emph{C$^*$-Algebras by Example}, American Mathematical Society, 1996.

\bibitem{Savage}
A.\ Savage, \emph{Modern Group Theory}, lecture notes, University of Ottawa, 2017. Available at \url{https://alistairsavage.ca/mat5145/notes/MAT5145-Modern_group_theory.pdf}

\bibitem{LesHouches}
M.\ P.\ M\"uller, \emph{Probabilistic Theories and Reconstructions of Quantum Theory}, SciPost Phys.\ Lect.\ Notes \textbf{28} (2021).

\bibitem{HardyAxioms}
L.\ Hardy, \emph{Quantum Theory From Five Reasonable Axioms}, arXiv:quant-ph/0101012.

\bibitem{DakicBrukner}
B.\ Daki\'c and \v{C}.\ Brukner, \emph{Quantum Theory and beyond: Is entanglement special?}, in ``Deep Beauty. Understanding the Quantum World through Mathematical Innovation'', edited by H.\ Halvorson (Cambridge University Press, New York, 2011).

\bibitem{MasanesMueller}
Ll.\ Masanes and M.\ P.\ M\"uller, \emph{A derivation of quantum theory from physical requirements}, New J.\ Phys.\ \textbf{13}, 063001 (2011).

\bibitem{Chiribella}
G.\ Chiribella, G.\ M.\ D'Ariano, and P.\ Perinotti, \emph{Informational derivation of quantum theory}, Phys.\ Rev.\ A \textbf{84}, 012311 (2011).

\bibitem{HoehnToolbox}
P.\ A.\ H\"ohn, \emph{Toolbox for reconstructing quantum theory from rules on information acquisition}, Quantum \textbf{1}, 38 (2017).

\bibitem{BarnumHilgert}
H.\ Barnum and J.\ Hilgert, \emph{Spectral Properties of Convex Bodies}, J.\ Lie Theory \textbf{30}, 315 (2020).


\end{thebibliography}
\end{document}